\theoremstyle{definition}
\newtheorem{proposition}{Proposition}
\newtheorem{lemma}{Lemma}
\newtheorem{corollary}{Corollary}
\theoremstyle{remark}
\newcommand{\und}{\underline}
\title{Platform-Mediated Competition}
\author{Quitz\'{e} Valenzuela-Stookey }
\date{November 5, 2020}
\begin{document}
\maketitle

%\begin{center}
 %   \Large\textcolor{blue}{Preliminary draft; please do not distribute}
%\end{center}

\begin{note}
\Huge\textcolor{blue}{WITH NOTES}   
\end{note}

\begin{center}
    \Large \textcolor{blue}{\href{https://northwestern.box.com/s/ggdovah9pxagfys6e38kecl31lb42j2p}{Click here for the latest version}}
\end{center}

\begin{abstract}
    Cross-group externalities and network effects in two-sided platform markets shape market structure and competition policy, and are the subject of extensive study. Less understood are the within-group externalities that arise when the platform designs many-to-many matchings: the value to agent $i$ of matching with agent $j$ may depend on the set of agents with which $j$ is matched. These effects are present in a wide range of settings in which firms compete for individuals' custom or attention. I characterize platform-optimal matchings in a general model of many-to-many matching with within-group externalities. I prove a set of comparative statics results for optimal matchings, and show how these can be used to analyze the welfare effects various changes, including vertical integration by the platform, horizontal mergers between firms on one side of the market, and changes in the platform's information structure. I then explore market structure and regulation in two in-depth applications. The first is monopolistic competition between firms on a retail platform such as Amazon. The second is a multi-channel video program distributor (MVPD) negotiating transfer fees with television channels and bundling these to sell to individuals.
\end{abstract}

Multi-sided platforms play a large and growing role in the economy. The core business of some of the world's largest companies, including Amazon, Alibaba, Facebook, and Google, fall into this category. The defining feature of platforms is that they match different agents. In the case of multi-sided platforms, agents can be divided into distinct groups, and matches occur across groups. Increasing data availability and developments of matching technology, both facilitated in many cases by the internet, have fueled the rise in multi-sided platform businesses. 

An important feature of the environments in which many platforms operate is the existence of cross-group externalities. A U.S. based retailer derives benefits from contracting with a Chinese manufacturer, and the manufacturer benefits from selling products to the retailer. In general, neither party appropriates the full surplus from their transaction. This would not be a problem if it was easy for the retailer to find an appropriate manufacturer, or vice-versa. However when searching for a partner is costly, the presence of externalities means that agents will generally under-invest in search. This explains the existence of a platform such as Alibaba. The platform facilitates matches, and internalizes the matching externalities through fees charged to agents on one or both sides of the market. 

In addition to cross-group externalities, platforms often take advantage of network effects. The benefit that one side of the market derives from the platform's services depends on the set of agents on the other side with whom they may be matched. Software developers would like to create programs for operating systems that have a large number of users, and users prefer operating systems which support many programs. Modern platforms generally engage in more sophisticated matching than simply granting all or nothing access to a network. Search engines prioritize certain results, and curate results based on user preferences. Cable providers allow customers to choose between many different packages consisting of different bundles of channels. By doing so, platforms fine-tune the network effects within the platform.  

Cross-group externalities and network effects have long been central to the literature on multi-sided platform design and regulation. However significantly less attention has been paid to \textit{with-group} externalities in multi-sided settings. Consider again Alibaba's role matching retailers and manufacturers. Cross-group network effects are present; retailers would like to search on a platform on which many manufacturers are available, and manufacturers would like access to the largest set of potential customers. However manufacturers are also competitors. Fixing the set of retailers using the platform, a manufacturer would prefer to compete with as few other manufacturers as possible. As in this example, within-group externalities often work in the opposite direction as network effects. 

Within-group externalities have important implications for the regulation of platforms. As noted above, platforms add value by internalizing cross-group externalities. Indeed, when network effects are large, it has been argued that the efficiency gains provided by large networks are sufficient to justify the existence of a monopolistic platform \citep{evans2003antitrust}. However the platform also internalizes, to some degree, the effects of competition between firms. That is, the platform internalizes within-group externalities. Thus the platform will have an incentive to reduce competition between firms beyond the socially optimal level. Within-group externalities also affect the welfare implications of vertical integration by the platform into the firm side of the market, as well has horizontal mergers between firms. 

This paper studies the implications of within-group externalities on the design and regulation of platforms. I consider a monopolistic platform whose role is to match each agent on one side of the market with a set of agents on the other (so called many-to-many matching). For example, Google's ad platform matches advertisers and websites. Each advertiser's ad may be shown on multiple websites, and websites display multiple ads. Matches are reciprocal; $i$ is matched with $j$ if and only if $j$ is matched with $i$. I will refer to one side of the market as firms and the other side as individuals, although the analysis applies equally well to a wide range of business-to-business activities. There are no within-side externalities for individuals; an individual's payoff depends on the set of firms with which they are matched. A firm's payoffs, however, depend not only on the set of individuals with which it is matched, but also on the set of firms with which each of these individuals is matched. In the web advertising example, the ``individuals'' are the websites, whose payoff depends only on which ads are being displayed on their site. The firms are the advertisers, who care not only about which sites display their ads, but also, potentially, about how many other adds are shown on the same sites (due perhaps to viewers' limited attention), and whether these ads are from their competitors.   

The model accommodates both vertical and horizontal differentiation between agents. An agent's vertical type relates to their own marginal value for better matches, whereas their horizontal type captures their attractiveness to the other side of the market. I show that when payoffs are suitably supermodular, optimal matches have a natural threshold structure, whereby agents are matched with those on the other side of the market who have high enough vertical types. Using this characterization, I prove a set of general comparative statics results on how the matching changes when payoffs shift in a way that makes some agents relatively more important. I show how in a broad class of problems, including those in which one set of agents is privately informed about their type, these comparative statics results can be used to perform welfare analyses. Within-group externalities are an essential component of competition analysis. For example, I show that when these effects are not present vertical mergers between the platform and firms will unambiguously benefit individuals in most cases.

I explore two applications in depth. I modify the canonical Dixit-Stiglitz model of monopolistic competition by giving a platform control over the set of firms that each individual has access to. This model applies to many settings; for example, Amazon mediating interactions between customers and vendors. I characterize the types of mergers between the platform and firms, and the types of information acquisitions by the platform, that make individuals better or worse off. I also study a multi-channel video program distributor (MVPD) negotiating transfer fees with television channels and bundling these to sell to individuals who are privately informed about their value for programming. I show that horizontal mergers between channels which are included in the basic cable package will make all individuals worse off, even if the merger does nothing but create cost synergies and has no direct anti-competitive effects. On the other hand, all individuals will be made better off if the merger is between channels only offered in the premium packages. Similar results obtain for vertical mergers between the MVPD and channels.

The general intuition for a number of the results can be illustrated by focusing on the effect of the platform acquiring a firm, denote by $j$. Assume that firms like matches with individuals, but dislike competition; their payoff from a match with individual $i$ is decreasing in the number of other firms that $i$ is matched with. Suppose that before the merger the platform was not able to capture the full surplus enjoyed by firm $j$. This will be the case when firms have bargaining power or private information. After the merger, on the other hand, the platform will internalize $j$'s surplus completely. This change has two effects on the matching structure. First, the platform will want to increase the payoff to firm $j$ by matching more individuals with $j$. This effect is analogous to ``eliminating markups'', as discussed in the classical vertical integration literature.\footnote{See \cite{riordan2005competitive}.} Second, the platform wants to reduce the competition faced by $j$ (analogous to ``raising rivals' costs''). To do this, the platform matches the individuals who are matched with $j$ with fewer additional firms. When there are no within-group externalities only the first effect is present. In this case the merger results in larger matching sets for all individuals, which I show implies higher payoffs for all individuals.\footnote{This conclusion holds even when individuals make monetary payments to the firm, in which case it is an implication of the envelope condition for the platform's revenue maximizing mechanism.} The second effect however, which is driven by competition between firms, has the effect of shrinking individual matching sets. The overall welfare effect depends on which of the two dominates. Using the characterization of optimal matches, I am able to identify cases in which the welfare effect is unambiguous. In general, an acquisition by the platform of a low-type firm will benefit individuals, and an acquisition of a high-type firm will harm individuals. 

The literature on competition policy for multi-sided platforms is extensive, and will not be summarized in full here. For a recent review see \cite{jullien2020economics}. Much of this literature focuses on competition between platforms, and ignores within-group externalities. Seminal theoretical contributions in this area were made by \cite{rochet2003platform}, \cite{caillaud2003chicken}, and \cite{armstrong2006competition}. My primary interest is on the implications of platform mediation for competition between firms. I therefore focus on a monopolistic platform, but introduce competition effects. \cite{pouyet2016vertical} study vertical integration in a model with competing platforms, but without within-side competition on a given platform. More recently, \cite{tan2020effects} incorporate within-group externalities into a model of competition between platforms. In \cite{tan2020effects} platforms have a membership structure, and do not engage in more sophisticated matching, whereas in the current paper I allow the platform to flexibly design the matching and transfer schedule. The authors show that increasing the number of platforms can adversely affect consumer welfare. More closely related to the current paper is \cite{de2014integration}, which studies a search engine matching users and publishers. Publishers in turn make revenue by selling space to an advertiser. The search engine also sells ad space directly to the advertiser. This generates competition between the publishers' and search engine's web pages. While the structure of the model is quite different than that considered here, the authors identify broadly similar effects of vertical integration. Integration by search engine into publishing can induce bias into search results, analogous to the bias in matching sets that I find. However, as in my setting, there is a countervailing effect; integration may also reduce the quantity of ads seen by users, which benefits them. The net effect on user welfare remains ambiguous. 

The key ingredients to the model are \textit{i}) a platform that controls the interactions between agents on different sides of the market via many-to-many matchings, and \textit{ii}) within-side competition effects on one side. While these features appear separately in the literature, they have not been previously considered together. This paper builds on the matching design and price discrimination literature. The model of the platform is similar to that of \cite{gomes2016many}, who also consider the design of many-to-many matchings by a platform. However their model does not allow for within-side competition. As in \cite{gomes2016many}, the platform in this paper may engage in price discrimination by offering a menu of matching sets and fees to each side of the market. The platform can flexibly design both the matching sets and fees. This is in contrast to the literature on two-sided markets, in which platforms sell access to a single network, or to different mutually exclusive networks (see \cite{rysman2009economics} for a survey and \cite{weyl2010price}, \cite{white2016insulated} for more recent contributions). 

The remainder of this paper is organized as follows. Section \ref{sec:model} presents the basic model, characterizes optimal matchings, and discusses the main comparative statics results. Section \ref{sec:extension} discusses an extension of the model, which is useful in applications. Section \ref{sec:applications} presents the two applications mentioned earlier.

\section{The basic model}\label{sec:model}

The model generalizes that of \cite{gomes2016many}. There is a unit mass of individuals (side $I$) and a set $\mathcal{F}$ of firms (side F). Depending on the setting, I will consider either finite $\mathcal{F}$ or $\mathcal{F} = [0,1]$. In what follows $\lambda$ will be used to denote either Lebesgue measure, in the case of a continuum of firms, or the measure placing mass $1$ on each firm when $\mathcal{F}$ is finite. Competition effects are present only on the firm side, and take a form that will be specified below.\footnote{It is also interesting to consider markets with competition effects on both sides. For now I will focus on the one-sided case because most of the applications that I have in mind are of this form.} Matchings are reciprocal: individual $i$ is matched with firm $j$ iff $j$ is matched with $i$. 

\subsection{Payoffs}

I present here the baseline model. Alternative payoff structures are explored in the extensions in Section \ref{sec:extension}. Agent $\ell \in [0,1]$ on side $K \in \{F,I\}$ is characterized by a \textit{vertical type} $v_K^{\ell}$ and a \textit{horizontal type} $\sigma_K^{\ell} \geq 0$. The platform's objective function will be of the form
\begin{equation}\label{eq:objective}
\int_F U^F(v_F^j, |s_F(j)|_{S_I}) d\lambda(j) + \int_I U^I(v_I^i, |s_I(i)|) d \lambda(i).
\end{equation}
The components of (\ref{eq:objective}) will be discussed below. What is important to note at present is that the platform's payoff depends on the sum of some aggregate payoffs coming from the firm side and the individual side. If the platform's objective is utilitarian welfare maximization then $U^F$ and $U^I$ will correspond to the utilities of firms and individuals respectively. However there are many other objectives of the form given in (\ref{eq:objective}). One such setting, in which agents are privately informed about their vertical type, will be discussed in Section \ref{sec:privateinfo}. Further examples will be illustrated in the applications of Section \ref{sec:applications}. For simplicity, in what follows I will refer to $U^F$ and $U^I$ as firm and individual payoffs respectively. Later, when discussing individual welfare I will be careful to differentiate between the true utilities of agents and the payoffs that are relevant for the firm's objective in (\ref{eq:objective}). 

The vertical type $v_K^{\ell}$ determines the value that $\ell$ attaches to matches with agents on the other side of the market, while the horizontal type $\sigma_K^{\ell}$, which I also refer to as \textit{salience}, represents how important $\ell$ is to agents on the other side. The be precise, for an individual $i$ on side $I$ the payoff of being matched with a set $s_I(i) \subseteq [0,1]$ of firms is
\begin{equation*}
    U^I(v_I^i, |s_I(i)|)
\end{equation*}
where $|s_I(i)|$ is the salience weighted size of the set $s_I(i)$, given by
\begin{equation*}
    |s_I(i)| = \int\limits_{j \in s_I(i)} \sigma_F^j d \lambda(j).
\end{equation*}

Payoffs on the firms side are similar, but account for competition effects. This means that the payoff of a firm depends on the entire matching, not just their own matching set. The payoff to firm $j$ with matching set $s_F(j)$ when individuals have matchings $S_I = \{(s_I(i),i)\}_{i\in[0,1]}$ is given by 
\begin{equation*}
    U^F(v_F^j, |s_F(j)|_{S_I})
\end{equation*}
where
\begin{equation*}
    |s_F(j)|_{S_I} = \int\limits_{i \in s_F(j)} h(|s_I(i)|, \sigma_I^i, \sigma_F^j,) d \lambda(i),
\end{equation*}
and $h$ is non-negative. The function $h$ captures competition effects. It can be thought of as depending on the exogenous ($\sigma_I^i$) and endogenous ($|s_I(i)|$) components of individual salience.\footnote{$h$ need not be a function of $|s_I|$. The general results presented below apply as long as $h$ is continuous (in an appropriate sense) in $s_I$ and independent of the vertical types $v^j_F$ for $j \in s_I$. For example, $h$ could be a function of $\lambda(s_b)$ rather than $|s_I|$.} In most applications $h$ will be decreasing; individuals are less valuable to firms if they are matched with many other firms. However I will not assume that this is always the case.

In the model described above, neither firms nor individuals care about the vertical types of the agents they are matched with. The interpretation is the an agent's vertical type is a private taste parameter that describes their value for matches. I will discuss settings in which this assumption is natural. On the other hand, in some situations we may derive the vertical type from characteristics of the agent, for example the cost function of a firm that is also choosing prices, that may be relevant for agents on the other side. In Section \ref{sec:monopcomp} I explore payoffs of this form. The main qualitative conclusions of the model will be the same in both cases. 

\subsection{Optimal matchings}

For the time being, assume that there is no horizontal differentiation; $\sigma_I^i = \sigma_I^k$ for all $i,k$ and $\sigma_F^j = \sigma_F^l$ for all $j,l$. If there is horizontal differentiation, the results of this section will hold conditional on horizontal types. 

\vspace{3mm}
\noindent \textbf{Supermodularity.} Let $v''>v'$, $x'' > x'$. Then $U^K(v'', x'') + U^K(v',x') \geq U^K(v'', x') + U^K(v',x'')$ for $K \in \{F,I\}$.
\vspace{3mm}

\noindent \textbf{Order-Supermodularity.} There exists a complete order on the agents on side $K$ such that Supermodularity holds with respect to this order. That is, if type $\hat{v}$ is higher in this order than type $\tilde{v}$ (not necessarily $\hat{v} > \tilde{v}$) and $x'' \geq x'$, then $U^K(\hat{v}, x'') + U^K(\tilde{v},x') \geq U^K(\hat{v}, x') + U^K(\tilde{v},x'')$ for $K \in \{F,I\}$.
\vspace{3mm}

I will refer to the order specified in the definition of Order-Supermodularity as the supermodular order. Clearly, Supermodularity is a special case of Order-Supermodularity in which the order is given by type. 

\begin{lemma}\label{lem:monotonicity}
Under Order-Supermodualrity, optimal matchings are monotone in the supermodular order: higher type individuals receive larger matching sets and higher type firms receive higher quality matching sets. 
\end{lemma}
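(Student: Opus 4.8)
The plan is to prove both halves by an exchange (rearrangement) argument that exploits a structural invariance: relabeling who-is-matched-with-whom on one side, among agents sharing the same horizontal type, leaves the aggregate match quality on the \emph{other} side completely unchanged. This decouples the two sides, so that establishing monotonicity on each side reduces to a rearrangement inequality of Hardy--Littlewood--P\'olya type, driven by Order-Supermodularity. Throughout, the objective is linear in the matching and a rearrangement preserves reciprocity, hence feasibility.

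For the individual side I would start from an optimal matching $S^* = (S_I^*, S_F^*)$ and suppose, toward a contradiction, that monotonicity fails. In a continuum this cannot be phrased as a single inverted pair, so the first step is to show that a failure of monotonicity yields two positive-measure sets $A, B$ of individuals, with every type in $A$ above every type in $B$ in the supermodular order, yet $|s_I^*(a)| < |s_I^*(b)|$ for $a \in A$, $b \in B$; fix a measure-preserving bijection $\pi : A \to B$. The exchange reassigns each $a \in A$ the matching set $s_I^*(\pi(a))$ and each $\pi(a)$ the set $s_I^*(a)$, updating the firm side reciprocally. The key computation is that, because $h$ enters firm payoffs only through $|s_I(\cdot)|$ and the (common) horizontal types — not through individual identity or vertical type — every firm's value $|s_F(j)|_{S_I}$ is preserved exactly by this swap: a change of variables along $\pi$ shows the contributions of $a$ and $\pi(a)$ merely trade places inside the integral defining $|s_F(j)|_{S_I}$. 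Hence the firm-side objective is unchanged, while the individual-side change equals $\int_A \big[ U^I(v_I^a, |s_I^*(\pi(a))|) + U^I(v_I^{\pi(a)}, |s_I^*(a)|) - U^I(v_I^a, |s_I^*(a)|) - U^I(v_I^{\pi(a)}, |s_I^*(\pi(a))|) \big]\, d\lambda(a) \geq 0$, the integrand being nonnegative by Order-Supermodularity (higher type $v_I^a$ paired with the larger size $|s_I^*(\pi(a))|$). The rearranged matching is therefore weakly better and strictly more monotone, which yields the claim.

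The firm side is symmetric, with the roles of the two invariances reversed. I would swap matching sets between two positive-measure sets of firms of equal measure and common horizontal type $\sigma_F$, again updating individuals reciprocally. Here the crucial invariance runs the other way: since all firms in the swap carry the same salience $\sigma_F$, each individual's size $|s_I(i)|$ — and hence every $h(|s_I(i)|,\cdot)$ and every untouched firm's value — is preserved, so the individual-side objective and all other firms' payoffs are unchanged, the two swapped firms' values $|s_F(\cdot)|_{S_I}$ simply trade places, and Order-Supermodularity again makes the firm-side change nonnegative. Because each rearrangement leaves the other side's aggregates exactly fixed, the two arguments do not interfere and may be applied in either order (and conditional on horizontal type when differentiation is present).

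The main obstacle is not the supermodularity step, which is immediate, but handling the continuum: single-agent swaps are measure-zero and change nothing, so the argument must be carried out on positive-measure sets. This requires (i) extracting the inverted sets $A, B$ (and $C, D$ for firms) of equal measure from a violation of monotonicity, and (ii) verifying the cross-side invariance at the level of integrals, via a measure-preserving change of variables rather than a finite relabeling. A secondary point to settle is the precise reading of ``optimal matchings are monotone'': the weak inequality above shows there is always a monotone optimum, while under strict supermodularity any positive-measure inversion strictly improves the objective, so every optimal matching is monotone almost everywhere.
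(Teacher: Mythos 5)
Your proposal is correct and takes essentially the same approach as the paper's proof: a switching argument that exchanges matching sets between an inverted pair, using Order-Supermodularity to improve the swapped side while the cross-side invariance (payoffs depend on the other side only through match-set size and salience, never through identity or vertical type) keeps the other side's payoffs exactly fixed. The only difference is that you execute the swap on positive-measure sets via a measure-preserving bijection, which makes rigorous the continuum step that the paper's informal single-pair exchange glosses over.
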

\begin{proof}
Without loss, let the order be given by type. Consider first individual side monotonicity. Let $v^I_j \geq v^I_k$ and suppose $|s_I(k)| \geq |s_I(j)|$. Then switch the matching sets. By Supermodularity, payoffs on the individual side increase. Moreover, payoffs on the firm side are unchanged, since any firm that was matched with $j$ is now matched with firm $i$ that has the same matching set that $j$ had. The same switching argument works on the firm side. 
\end{proof}

For the remainder of this section assume that Supermodularity holds. All results extend immediately to other supermodular orders. 

I now turn to establishing the aforementioned threshold structure of matching sets. One might be tempted to apply a similar proof as that of Lemma \ref{lem:monotonicity}. To see why this does not work, suppose $g_F$ is increasing and concave. Fix an individual, and suppose that they are matched with a firm with type $v'$ but not with a type $v''$ firm,  where $v'' > v'$. Simply dropping the low type firm from and adding the high type firm to the individual's set clearly does not change the individual's payoff, but will not necessarily improve payoffs on the firm side. This is because despite having a higher vertical type, the $v''$ firm will also have a higher quality matching set, by Lemma \ref{lem:monotonicity}. Then by concavity this means precisely that the marginal change in match quality for the $v''$ firm is lower than for the $v'$ firm. The proof of the following proposition modifies the switching argument to accommodate this case.

\begin{proposition}\label{prop1.1}
For an individual with type $v_I$ there is a threshold $v^*(v_I)$ such that the individual is matched with a firm if and only if the firm's type is above $v^*_F(v_I)$.
\end{proposition}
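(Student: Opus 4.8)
The plan is to prove the threshold property by a switching argument on an optimal matching, as in Lemma~\ref{lem:monotonicity}, but modified so that firm-side match qualities are held fixed during each switch --- which is exactly the adjustment the preceding discussion calls for.

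First I would reduce the problem to the firm side. Because there is no horizontal differentiation, individual $i$'s payoff $U^I(v_I^i, |s_I(i)|)$ depends on its matching set only through the salience-weighted measure $|s_I(i)|$, so two matching sets of equal size are perfect substitutes for the individual. Hence, fixing the individual-side match sizes of an optimal matching, the planner is indifferent as to the \emph{identities} of each individual's partners, and these identities must be chosen to maximize the firm-side objective $\int_F U^F(v_F^j, |s_F(j)|_{S_I}) \, d\lambda(j)$. Two facts will be used repeatedly: each individual $i$ contributes the common weight $h(|s_I(i)|)$ to every firm it joins, so swapping one of $i$'s partners for another changes no individual's size and alters only the qualities of the two firms involved; and by Fubini the total firm quality $\int_F |s_F(j)|_{S_I}\, d\lambda(j)$ is pinned down once the individual sizes are fixed.

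Next I would suppose, toward a contradiction, that the property fails on a positive-measure set: some individuals of type $v_I$ match a firm $a$ but not a higher firm $b$, with $v_b > v_a$. By Lemma~\ref{lem:monotonicity}, $|s_F(b)|_{S_I} \geq |s_F(a)|_{S_I}$; since $h$ is non-negative, this forces a positive measure of individuals matched with $b$ but not $a$ (otherwise $s_F(b) \subseteq s_F(a)$, so $b$ would carry the smaller quality, contradicting the Lemma). This produces a rectangle of individuals on which I can exchange partners. The naive move --- dropping $a$ and adding $b$ for a violating individual --- leaves that individual's payoff unchanged but shifts quality from the lightly loaded firm $a$ to the heavily loaded firm $b$; by concavity of $U^F$ in quality the firm-side effect can be negative, which is precisely the flagged obstacle. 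The modification is to make the transfer \emph{quality-preserving}: simultaneously move quality-weight from $a$ to $b$ along the violating individuals and an equal weight from $b$ to $a$ along those matched with $b$ but not $a$. Every individual keeps its size, while $|s_F(a)|_{S_I}$ and $|s_F(b)|_{S_I}$ --- hence all firm payoffs --- are held exactly fixed, so the objective is unchanged and the concavity objection disappears.

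I expect the crux to be the last step, since a single quality-preserving exchange merely relocates a violation rather than removing it. The real content is therefore a realizability claim: the optimal firm-quality profile, which is nondecreasing in firm type by Lemma~\ref{lem:monotonicity}, can be realized, given the fixed individual sizes, by a \emph{nested} family of upper intervals of firm types, and among all matchings inducing this profile supermodularity of $U^F$ selects such a nested (threshold) realization as an optimizer. Assembling the neutral exchanges to this nested configuration --- equivalently, running the supermodular rearrangement of individuals across firms ordered by type, using the monotonicity of Lemma~\ref{lem:monotonicity} to route larger match sizes to the higher-quality firms --- yields, for each individual type $v_I$, the upper interval $\{v_F : v_F > v^*_F(v_I)\}$, which is the claimed threshold. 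The delicate bookkeeping is showing that this rearrangement terminates at a threshold matching rather than cycling, and it is here that supermodularity and Lemma~\ref{lem:monotonicity} do the work.
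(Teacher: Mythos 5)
Your reduction to the firm side and your contradiction setup (using Lemma~\ref{lem:monotonicity} and $h \geq 0$ to extract a positive mass of individuals matched with $b$ but not $a$) are fine, and mirror the paper. The gap is in the core of your argument: the quality-preserving exchange plus the ``realizability'' claim. Because your exchanges hold every individual's size and every firm's quality fixed by construction, they leave the objective exactly unchanged, so they can never contradict optimality --- you note this yourself, and then place all the weight on the claim that the optimal firm-quality profile can be realized, given the fixed individual sizes, by a nested family of upper intervals of firm types. That claim is false in general. Once individual sizes are fixed, a nested (threshold) matching is essentially unique: each individual's set must be the upper interval of firm types of the prescribed measure, so the nested structure \emph{pins down} the firm qualities. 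Fubini conserves only the total quality $\int_F |s_F(j)|_{S_I}\, d\lambda(j)$, not its distribution across firms, and the nested profile generically differs from the original one. Concretely: with two firms, $v_b > v_a$, and a unit mass of identical individuals each matched with exactly one firm, splitting the individuals between $a$ and $b$ induces the profile $\left(h(1)/2,\, h(1)/2\right)$, whereas the unique nested realization of those same sizes induces $\left(0,\, h(1)\right)$; no nested matching induces the original profile. Relatedly, your statement that ``among all matchings inducing this profile supermodularity selects a nested realization'' is vacuous: the firm-side objective depends on the matching only through the quality profile, so all matchings inducing the same profile are payoff-equivalent, and supermodularity cannot select among them.

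The missing ingredient is the variational argument the paper actually uses, which is where optimality gets traction. Swapping a small mass $\mu$ of individuals from a firm $a$ in their set to a firm $b$ outside it keeps all sizes and saliences fixed but changes $|s_F(a)|_{S_I}$ and $|s_F(b)|_{S_I}$ by $\mp \mu h$; to first order in $\mu$ the firm-side objective moves by $\mu h \left[ U^F_2(v_b, |s_F(b)|_{S_I}) - U^F_2(v_a, |s_F(a)|_{S_I}) \right]$. Optimality therefore forces each individual's matching set to be an upper contour set of the \emph{endogenous marginal utilities} $U^F_2$ --- this swap is deliberately not quality-preserving, so it engages the objective, yet being infinitesimal it sidesteps the concavity obstacle that blocks the discrete add/drop swap flagged before the proposition. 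The second step is to show these marginal utilities are increasing in firm type at the optimum: if $U^F_2(v_k, |s_F(k)|_{S_I}) > U^F_2(v_j, |s_F(j)|_{S_I})$ with $v_j > v_k$, the marginal-utility threshold structure forces $s_F(j) \subsetneq s_F(k)$ with a positive-measure difference, hence $|s_F(k)|_{S_I} > |s_F(j)|_{S_I}$, contradicting Lemma~\ref{lem:monotonicity}. Thresholds in marginal utility then coincide with thresholds in type. Your proof, by contrast, never uses optimality beyond Lemma~\ref{lem:monotonicity}, and payoff-neutral rearrangements alone cannot rule out non-threshold optima.
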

\begin{proof}
First I claim that the optimal matching should be characterized by a threshold in the marginal firm utility $U^F_2(v_F^j, |s_F(j)|_{S_I})$ (or the discrete analog). If this was not the case then switching out low marginal utility firms for high marginal utility firms does not change the size of the individual's matching set, and thus has no effect on the individual's payoffs or their endogenous salience. Moreover it increases firm side payoffs.\footnote{I am using here the fact that monotonicity does not bid. If it did then I would have to guarantee that this switch does not violate monotonicity. This could probably be dealt with, but I don't need to. }

Firm marginal utilities are of course endogenous objects. The result will follow if the optimal matching the firm marginal utilities are increasing in $v_F$, which is what I know show.

Let $\{S_F, S_I \}$ be the optimal matching. A necessary condition for $\{S_F, S_I \}$ to be optimal is that each individual's match be characterized by a threshold in the marginal firm utilities induced by $\{S_F, S_I \}$, as discussed above. Suppose that $U^F_2(v_F^j, |s_F(j)|_{S_I})$ is not increasing in firm type, that is, there are firms $i,j$ with $v^j > v^k$ such that
\begin{equation}\label{eq2.1}
    U^F_2(v_F^k, |s_F(k)|_{S_I}) > U^F_2(v_F^j, |s_F(j)|_{S_I}).
\end{equation}
For this to hold there must be a positive measure of individuals for whom the marginal utility threshold defining their matching set falls strictly above $U^F_2(v_F^j, |s_F(j)|_{S_I})$ and weakly below $ U^F_2(v_F^k, |s_F(k)|_{S_I})$, (otherwise $|s_F(v')|_{s_I}) = |s_F(v'')|_{s_I}$ and so (\ref{eq2.1}) does not hold). Then $s_F(v'') \subsetneq s_F(v')$. Since $h \geq 0$, this implies $|s_F(v')|_{s_I} > |s_F(v'')|_{s_I}$, contradicting monotonicity from Lemma \ref{lem:monotonicity}.\footnote{If $g_F$ is concave then we need not appeal to monotonicity, $|s_F(v')|_{s_I} > |s_F(v'')|_{s_I}$ contradicts (\ref{eq2.1}). The argument can be modified to accommodate for negative values of $h$.}
\end{proof}

\begin{corollary}\label{cor2.1}
The threshold $v_F^*(v_I)$ is decreasing. 
\end{corollary}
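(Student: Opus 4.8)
The plan is to obtain the corollary as a direct composition of Proposition \ref{prop1.1} with the individual-side half of Lemma \ref{lem:monotonicity}. By Proposition \ref{prop1.1}, the matching set of an individual of type $v_I$ is a threshold set in firm type, $s_I(v_I) = \{\, j : v_F^j > v_F^*(v_I)\,\}$, so its salience-weighted size depends on $v_I$ only through the threshold:
\begin{equation*}
    |s_I(v_I)| = \int_{\{j\,:\,v_F^j > v_F^*(v_I)\}} \sigma_F^j \, d\lambda(j).
\end{equation*}
The first thing I would record is that, because $\sigma_F^j \geq 0$, the right-hand side is non-increasing in the threshold $v_F^*$: lowering the threshold only enlarges the region of integration and can add nothing but non-negative salience mass. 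Hence the map from threshold to weighted size is (weakly) decreasing.

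Next I would invoke the individual-side conclusion of Lemma \ref{lem:monotonicity}, namely that $|s_I(v_I)|$ is non-decreasing in $v_I$. The corollary then follows by contraposition. Take $v_I'' > v_I'$ and suppose, toward a contradiction, that the threshold fails to decrease, i.e.\ $v_F^*(v_I'') > v_F^*(v_I')$. Then $s_I(v_I'') \subseteq s_I(v_I')$, and by the first step $|s_I(v_I'')| \leq |s_I(v_I')|$, with \emph{strict} inequality whenever $\sigma_F^j$ assigns positive mass to firm types in $(v_F^*(v_I'), v_F^*(v_I'')]$. A strict inequality directly contradicts the monotonicity $|s_I(v_I'')| \geq |s_I(v_I')|$ from Lemma \ref{lem:monotonicity}, forcing $v_F^*(v_I'') \leq v_F^*(v_I')$.

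The point requiring care—and what I expect to be the only real obstacle—is the degenerate case in which $\sigma_F^j$ vanishes on the range of firm types separating the two candidate thresholds. There the threshold-to-size map is only weakly decreasing, distinct thresholds induce matching sets of equal salience-weighted size, and no strict contradiction is produced. I would address this by observing that such firms contribute nothing to any individual's $|s_I|$, so the threshold is not pinned down uniquely by the individual-side data of the optimal matching; among the thresholds consistent with the optimum one may then select a decreasing representative, which delivers the claim as a weak monotonicity. Under strictly positive salience the argument instead yields strict monotonicity directly. Everything else is an immediate consequence of the two preceding results, so the proof is short once this bookkeeping over non-uniqueness is dispatched.
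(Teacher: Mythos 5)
Your proposal is correct and is essentially the paper's own argument: the paper's proof is simply ``Immediate from Lemma \ref{lem:monotonicity} and Proposition \ref{prop1.1}'', i.e.\ exactly the composition you spell out (threshold structure plus individual-side monotonicity of $|s_I|$, with weighted size decreasing in the threshold). Your extra bookkeeping on the degenerate zero-salience case is a harmless refinement---and largely moot here, since this section assumes away horizontal differentiation so all firms share a common salience weight.
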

\begin{proof}
Immediate from Lemma \ref{lem:monotonicity} and Proposition \ref{prop1.1}
\end{proof}

\begin{corollary}\label{cor2.2}
Firm matchings are characterized by a threshold $v_I^*(v_F)$. Moreover $v_I^*(v_F)$ is decreasing. 
\end{corollary}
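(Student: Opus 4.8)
The plan is to obtain the firm-side threshold \emph{for free} from the individual-side threshold already established in Proposition \ref{prop1.1}, using only reciprocity of the matching, and then to deduce monotonicity by inverting the decreasing function $v_F^*$ supplied by Corollary \ref{cor2.1}. In other words, I would not re-run any switching argument; the content of the corollary is entirely a statement about inverting an already-characterized monotone threshold.

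First I would invoke reciprocity: firm $j$ is matched with individual $i$ if and only if $i$ is matched with $j$, which by Proposition \ref{prop1.1} holds (up to the measure-zero set of marginal individuals) if and only if $v_F^j \geq v_F^*(v_I^i)$. Hence the matching set of firm $j$ is exactly $\{\, i : v_F^*(v_I^i) \leq v_F^j \,\}$. Next I would use Corollary \ref{cor2.1}, which tells us $v_F^*(\cdot)$ is decreasing, so that for each fixed $v_F$ the set $\{\, v_I : v_F^*(v_I) \leq v_F \,\}$ is an up-set in $v_I$. I would then define the generalized inverse
\[
v_I^*(v_F) := \inf \{\, v_I : v_F^*(v_I) \leq v_F \,\},
\]
so that $v_F^*(v_I) \leq v_F$ is equivalent to $v_I \geq v_I^*(v_F)$. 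Substituting this back, the matching set of firm $j$ becomes $\{\, i : v_I^i \geq v_I^*(v_F^j) \,\}$, which is precisely the claimed threshold structure.

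For the monotonicity claim I would observe that if $v_F'' > v_F'$ then $\{\, v_I : v_F^*(v_I) \leq v_F' \,\} \subseteq \{\, v_I : v_F^*(v_I) \leq v_F'' \,\}$, so the infimum defining $v_I^*$ can only weakly decrease, giving $v_I^*(v_F'') \leq v_I^*(v_F')$; thus $v_I^*$ is decreasing. The only delicate point — the main (minor) obstacle — is the bookkeeping at the boundary: $v_F^*$ need only be \emph{weakly} decreasing, so its inverse may have flat segments or jumps, and the threshold inequalities could be strict or weak at the cutoff. I expect this to be purely cosmetic, since the optimal matching is pinned down only up to measure-zero sets of marginal agents, and the generalized-inverse construction absorbs any flat pieces or discontinuities of $v_F^*$ without disturbing the threshold characterization.
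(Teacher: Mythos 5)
Your proposal is correct and takes essentially the same route as the paper: the paper's proof is literally ``Immediate from Corollary \ref{cor2.1},'' and your argument is just that immediacy spelled out --- reciprocity converts the decreasing individual-side threshold $v_F^*(v_I)$ into a firm-side threshold via its generalized inverse, whose monotonicity follows from the nesting of the up-sets $\{v_I : v_F^*(v_I) \leq v_F\}$. The boundary/measure-zero caveat you flag is handled the same way the paper implicitly handles it, so there is no gap.
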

\begin{proof}
Immediate from Corollary \ref{cor2.1}.
\end{proof}

\subsection{Optimal matchings with horizontal differentiation}

The previous section established that optimal matchings have a threshold structure when there is no horizontal differentiation. With horizontal differentiation, the same structure continues to hold conditional on horizontal types; each individual $i$'s matching set will be characterized by a threshold function $v^*_I(\sigma_F, i)$ such that the individual matches with a firm of type $v_F$ and salience $\sigma_F$ if and only if $v_F \geq v^*_I(\sigma_F,i)$. Similarly for firms. In this section I will be interested in the features of optimal threshold functions. In particular, I will identify when this functions are increasing/decreasing. 

The platform's problem simplifies greatly when $U_F(v, \cdot)$ is affine. In this case the firm's payoffs can be written as $U_F(v,x) =a^F(v) + b^F(v) \cdot x$. If supermodularity holds then we can make a change of variables so that without loss of generality $b^F(v) = v$. Then the platform's objective function can be written as
\begin{equation*}
\int_I \left(U^I(v_i^I, |s_I(i)|) +  \int_{s_I(i)} v^F_j \cdot h(|s_I(i)|, \sigma^I_i, \sigma^F_j) \ d\lambda(j) \right) d\lambda(i).
\end{equation*} 
This integral can be maximized point-wise for each individual. This is true even when match qualities are constrained to be monotone in agent type, as will be the case when types are private infromation, since Lemma \ref{lem:monotonicity} tells us that these constraints will not bind. For simplicity, assume all primitive functions are differentiable. Consider the objective 
\begin{equation*}
U^I(v_i^I, |s_I(i)|) +  \int_{s_I(i)} v^F_j\cdot h(|s_I(i)|, \sigma^I_i, \sigma^F_j) \ d\lambda(j).
\end{equation*}
The marginal effect of adding a firm of type $v^F$ and salience $\sigma^F$ to the matching set of individual $i$ is
\begin{equation}\label{eq:horizontal_foc}
v^F \cdot h(|s_I(i)|, \sigma^I_i, \sigma^F) + \sigma^F \cdot\left( U^I_2(v_i^I, |s_I(i)|) + \int_{s_I(i)} v^F_j \cdot h_1(|s_I(i)|, \sigma^I_i, \sigma^F_j) \ d\lambda(j)\right) .
\end{equation}

The function $v^*_F(\sigma_F, i)$ will be decreasing (increasing) if for $\sigma_F'' > \sigma_F'$ the following single crossing property holds: if the marginal benefit in (\ref{eq:horizontal_foc}) is positive (negative) for $\sigma_F'$ then it is positive (negative) for $\sigma_F''$. Assume $h$ is decreasing it its first and third arguments (If $h$ is not decreasing in $\sigma^F$ then the threshold functions may be non-monotone). The term in parentheses in (\ref{eq:horizontal_foc}), $U^I_2(v_i^I, |s_I(i)|) + \int_{s_I(i)} v^F_j \cdot h_1(|s_I(i)|, \sigma^I_i, \sigma^F_j) \ d\lambda(j)$, is the key determinant of the slope of $v^*_F(\sigma_F, i)$. The first part of this expression is the marginal benefit to the individual of increasing the size of their matching set. The second part is the inframarginal cost to all firms matched with this individual of increasing the size of the matching set. If the sum of these two is positive, it means that the benefit to the individual outweighs the externality imposed on firms. But then this individual should be matched with all firms that have positive values $v^F)j$. Moreover, the individual should be matched with firms with negative values only if $\sigma^F_j$ is large enough, so that the benefit the individual is sufficient to outweigh both the cost to the new firm and the inframarginal cost to all other firms. If on the other hand $U^I_2(v_i^I, |s_I(i)|) + \int_{s_I(i)} v^F_j \cdot h_1(|s_I(i)|, \sigma^I_i, \sigma^F_j) \ d\lambda(j) < 0$ then the individual should never match with negative value firms, and should only match with positive value firms if their salience is not too high. These observations are summarized in the following Lemma.

\begin{lemma}\label{lem:horiz_structure}
Assume $h$ is decreasing in its first and third arguments (match size and firm salience).\footnote{Just assuming that $h$ is decreasing in match size, we can conclude that if $U^I_2(v_i^I, |s_I(i)|) + \int_{s_I(i)} v^F_j \cdot h_1(|s_I(i)|, \sigma^I_i, \sigma^F_j) \ d\lambda(j) \geq 0$ then the individual will be matched with all positive value firms.} Then the matching sets of individuals have the following structure:
\begin{itemize}
	\item There exists a threshold $v^{**}$ such that an individual $i$'s matching set contains firms with types $v^F_j < 0$ if and only if $v^I_i > v^{**}$. 
	\item The individual's threshold function $v^*_F(\sigma_F,i)$ is downward sloping if $v^I_i > v^{**}$ and upward sloping otherwise.
\end{itemize}	
\end{lemma}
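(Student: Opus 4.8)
The plan is to read everything off the marginal condition (\ref{eq:horizontal_foc}). Write $m_i := |s_I(i)|$ and let $\Phi_i := U^I_2(v_i^I,m_i) + \int_{s_I(i)} v^F_j\, h_1(m_i,\sigma^I_i,\sigma^F_j)\,d\lambda(j)$ denote the parenthetical term, so that the marginal benefit of adding a firm $(v_F,\sigma_F)$ to $i$'s set is $v_F\, h(m_i,\sigma^I_i,\sigma_F) + \sigma_F\,\Phi_i$. Since $h>0$, this is increasing in $v_F$, so $i$ matches with $(v_F,\sigma_F)$ exactly when $v_F \ge v^*_F(\sigma_F,i)$ with $v^*_F(\sigma_F,i) = -\,\sigma_F\,\Phi_i / h(m_i,\sigma^I_i,\sigma_F)$. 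Because $\sigma_F\ge 0$ and $h>0$, the sign of the threshold is the opposite of the sign of $\Phi_i$; in particular $v^*_F(\sigma_F,i)<0$ for some $\sigma_F$ — i.e.\ $i$'s set contains negative-value firms — if and only if $\Phi_i>0$. Both bullets therefore reduce to understanding the sign of $\Phi_i$ and the monotonicity of $v^*_F(\cdot,i)$.

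Next I would settle the slope. Differentiating $v^*_F(\sigma_F,i)= -\Phi_i\,\bigl(\sigma_F/h\bigr)$ in $\sigma_F$ (holding $m_i$ at its optimal value) gives $\partial_{\sigma_F} v^*_F(\sigma_F,i) = -\Phi_i\,(h-\sigma_F h_3)/h^2$, where $h_3=\partial h/\partial\sigma_F \le 0$ by the assumption that $h$ is decreasing in firm salience. Hence $h-\sigma_F h_3 > 0$ and the slope has sign $-\operatorname{sign}(\Phi_i)$: the threshold is downward sloping when $\Phi_i>0$ and upward sloping when $\Phi_i<0$. Combined with the previous paragraph, this is exactly the content of the two bullets \emph{conditional} on knowing that $\Phi_i>0$ holds precisely for high-type individuals. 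So the whole statement collapses to showing that $\{v_I : \Phi_i>0\}$ is an up-set, which then yields the threshold $v^{**}$.

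The up-set property is the crux, and the naive route — differentiating $\Phi_i$ in $v^I_i$ — is awkward because $m_i$ and the set $s_I(i)$ both move with $v^I_i$, so the sign of $d\Phi_i/dv^I_i$ would depend on second-order properties of $U^I$ and $h$. The plan is instead to argue by contradiction using Lemma \ref{lem:monotonicity}. Suppose $\Phi_i>0$ at some type $v$ but $\Phi_i\le 0$ at a higher type $\tilde v>v$. Then $v^*_F(\sigma_F,\tilde v)\ge 0 \ge v^*_F(\sigma_F,v)$ for \emph{every} $\sigma_F$, so the inclusion threshold at $\tilde v$ is pointwise at least that at $v$; hence $s_I(\tilde v)\subseteq s_I(v)$ and therefore $|s_I(\tilde v)|\le |s_I(v)|$, with strict inequality whenever some firm is matched at $v$ but not at $\tilde v$ (e.g.\ any negative-value firm of high salience). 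This contradicts the monotonicity of salience-weighted match size in Lemma \ref{lem:monotonicity}, which requires $|s_I(\tilde v)|\ge |s_I(v)|$. Thus $\{v_I:\Phi_i>0\}$ is an up-set; setting $v^{**}:=\inf\{v_I:\Phi_i>0\}$ and combining with the two preceding paragraphs delivers both bullets. The main obstacle is precisely this step, and the key idea is that Lemma \ref{lem:monotonicity} lets me sidestep differentiating the endogenous object $\Phi_i$ by comparing optimal sets directly.
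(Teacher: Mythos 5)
Your first two paragraphs are essentially the paper's own argument, cleaned up: the paper also works directly from (\ref{eq:horizontal_foc}), observes that the parenthetical term (your $\Phi_i$) controls both whether negative-value firms can ever enter and the sign of the slope of $v^*_F(\cdot,i)$, and states the lemma as a ``summary'' of those observations. Where you genuinely depart from the paper is your third paragraph: the paper never argues that $\{v^I_i : \Phi_i > 0\}$ is an up-set --- the existence of $v^{**}$ is simply asserted --- whereas you correctly identify this as the crux and try to supply the missing step via Lemma \ref{lem:monotonicity}. That identification, and the idea of comparing optimal sets rather than differentiating the endogenous object $\Phi_i$, is sound and goes beyond what the paper provides.

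However, the contradiction as written does not close. From $s_I(\tilde v)\subseteq s_I(v)$ you get $|s_I(\tilde v)|\le |s_I(v)|$, and Lemma \ref{lem:monotonicity} gives $|s_I(\tilde v)|\ge |s_I(v)|$; these are compatible, with equality. Your strict inequality presupposes that some firm of positive salience-weighted measure is matched at $v$ but not at $\tilde v$, which need not be true: if no negative-value firm in the population has salience high enough to clear $v$'s (negative) threshold, and no positive-value firm has salience high enough to be excluded at $\tilde v$'s (nonnegative) threshold, the two optimal sets coincide and Lemma \ref{lem:monotonicity} is not violated. The fix stays entirely inside your framework. In the forced equality case, the difference set $s_I(v)\setminus s_I(\tilde v)$ has zero salience-weighted measure, hence consists $\lambda$-a.e.\ of zero-salience firms; but a zero-salience firm contributes marginal value $v_F\, h(m_i,\sigma^I_i,0)$ to any individual, so it is matched if and only if $v_F\ge 0$, identically across individuals, and the difference set is therefore $\lambda$-null. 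Then both individuals have the same matching set and the same $m$, so the inframarginal integrals in $\Phi_v$ and $\Phi_{\tilde v}$ coincide, and Supermodularity gives $\Phi_{\tilde v}-\Phi_v = U^I_2(\tilde v, m)-U^I_2(v,m)\ge 0$, contradicting $\Phi_{\tilde v}\le 0 <\Phi_v$ directly. (Note this patch uses Supermodularity of $U^I$ in addition to Lemma \ref{lem:monotonicity}, and, as with the lemma itself, the whole comparison should be read conditional on the individuals' horizontal type $\sigma^I_i$.) With that repair your argument is complete and, unlike the paper's, actually establishes the first bullet rather than asserting it.
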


Lemma \ref{lem:horiz_structure} has some interesting implications. If any firms have positive values then high-value, low-salience firms will be matched with the largest set of individuals (in the set inclusion sense). If no firms have positive values then the largest matching sets will instead go to high-value, high-salience firms. 

Suppose that firms vertical types are their private information, while horizontal types are known to the platform. For example, the vertical type may reflect a firm's marginal cost, while its horizontal type is the attractiveness of its product to consumers. Below, I will discuss more extensively the platform's problem when types are private information. Suppose that the platform make monetary transfers with firms, and that firms payoffs are quasi-linear in money. In this context it is easy to see, from the usual \cite{myerson1981optimal} argument, that the objective of a revenue maximizing platform will look the same as $\ref{eq:objective}$, except that the firms' vertical types $v^F_j$ will be replaced with their ``virtual values'' $\varphi(v^F_j, \sigma^F_j) = v^F_j - \frac{1 - Q_F(v^F_j| \sigma^F_j)}{q_F(v^F_j| \sigma^F_j)}$. Virtual values may be negative even if all true values are positive. As a result, the second best matching, in which firm types must be elicited, and the first best matching in which they are known may be qualitatively different: in the first best, conditional on values, low-salience firms will always be matched with a superset of the individuals matched with high-salience firms. In the second best the reverse may hold: for low enough values, high-salience firms are matched with a superset of the individuals that low-salience firms with the same value are matched with.

\subsection{Comparative statics}

Throughout this section, assume that there is no horizontal differentiation on either side. I will be interested in how the optimal matching changes when some of the firms become ``more prominent'', in a sense I will make precise. Intuitively, there are two effects of a platform placing greater weight on the payoffs of firm $j$. For one, it should increase the size firm $j$'s' matching set. This effect is easy to see. On the other hand, it should change the other firms' matching sets to increase the value of firm $j$'s matching. If $h$ is decreasing this means reducing the size of the other firms' matching sets. How these two effects interact is not obvious in general. However there will be cases in which the effect on the matching structure can be identified. I explore such cases in this section. 

By ``more prominent'' I mean that the marginal value of increasing the quality of a firm's matching increases. 

\vspace{3mm}
\noindent \textbf{Increasing differences change.}  A change in the payoffs of firm $j$ from $U^F$ to $\hat{U}^F$ are an \textit{increasing differences change} if for $x'' > x'$, $\hat{U}^F(v^F_j, x'') - \hat{U}^F(v^F_j, x') \geq U^F(v^F_j, x'') - U^F(v^F_j, x')$.

\begin{lemma}\label{lem6.1}
The quality of firm $j$'s matching set increases following an increasing differences change in its payoffs. 
\end{lemma}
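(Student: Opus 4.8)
The plan is to reduce the full matching problem to a scalar monotone comparative statics problem in the single quantity the lemma concerns, namely firm $j$'s match quality $x := |s_F(j)|_{S_I}$, and then apply a revealed-preference argument. The key structural observation is that firm $j$'s own contribution to the objective \eqref{eq:objective} enters only through the scalar $x$, via the term $U^F(v_F^j, x)$, and that an increasing differences change modifies the objective only in this term. Accordingly, define the value function
\[
\Phi(x) \;=\; \max_{\{S_F,S_I\}:\, |s_F(j)|_{S_I} = x} \Big[\, \text{objective \eqref{eq:objective} with firm } j\text{'s term removed} \,\Big],
\]
where the maximum is over all feasible (reciprocal) matchings under which firm $j$ attains quality $x$. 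Since the change from $U^F$ to $\hat U^F$ leaves every other firm's payoff, every individual's payoff, and the matching technology untouched, $\Phi$ is identical before and after the change. The platform's problem before the change is $\max_x\, U^F(v_F^j,x)+\Phi(x)$ and after the change is $\max_x\, \hat U^F(v_F^j,x)+\Phi(x)$, over the common feasible range of $x$.

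These two scalar objectives differ by $g(x) := \hat U^F(v_F^j,x) - U^F(v_F^j,x)$, which is non-decreasing precisely by the definition of an increasing differences change. Let $x^*$ solve the original scalar problem and $\hat x$ the new one. Summing the optimality inequalities $U^F(v_F^j,x^*)+\Phi(x^*) \ge U^F(v_F^j,\hat x)+\Phi(\hat x)$ and $\hat U^F(v_F^j,\hat x)+\Phi(\hat x) \ge \hat U^F(v_F^j,x^*)+\Phi(x^*)$ and cancelling $\Phi$ yields $g(\hat x) \ge g(x^*)$. If $\hat x < x^*$, monotonicity of $g$ forces $g(\hat x)=g(x^*)$, whence all the inequalities bind and $x^*$ is itself a maximizer of the post-change problem. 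Thus $\max(\hat x, x^*)$ is always post-change optimal, so there is an optimal post-change matching under which firm $j$'s match quality is at least its pre-change optimal value $x^*$; equivalently, the optimal match quality of firm $j$ is non-decreasing (the usual Topkis conclusion, in the strong set order), which is the claim.

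The main obstacle, and the point requiring care, is justifying the reduction to $\Phi$: one must verify that firm $j$'s payoff truly enters \eqref{eq:objective} only through the scalar $x$, and that the perturbation is confined to firm $j$'s term, so that $\Phi$ is genuinely unchanged. This is immediate from the additively separable form of the objective, with the one caveat that in the continuum specification $\mathcal{F}=[0,1]$ a single firm has $\lambda$-measure zero, so the argument should be read for the finite-$\mathcal{F}$ case (or, equivalently, for a positive-mass perturbation of firm $j$'s type), which is the relevant case for the acquisition applications discussed in the introduction. A secondary issue is existence of the maximizers defining $\Phi$ and the scalar problems; I take existence of optimal matchings as given, as elsewhere in the paper, and note that the feasible range of $x$ is a fixed interval $[0,\bar x]$ common to both problems, so the comparison is between maximizers over the same domain.
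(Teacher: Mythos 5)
Your proposal is correct and takes essentially the same approach as the paper: the paper's one-line proof simply invokes ``the resulting single-crossing property of the objective,'' and your reduction via the value function $\Phi(x)$ followed by the revealed-preference (Topkis) argument is precisely that single-crossing property in firm $j$'s match quality $x = |s_F(j)|_{S_I}$, made explicit. Your caveats (finite $\mathcal{F}$ so that firm $j$ has positive mass, existence of maximizers, and the strong-set-order reading of ``increases'') are the right ones and do not depart from the paper's intent.
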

\begin{proof}
This follows from the resulting single-crossing property of the objective.
\end{proof}

To further understand the changes in the matching, is important to first clarify the necessary conditions for optimality of a matching. Assume that there are $N < \infty$ firms and a continuum of individuals. The argument can be easily modified for the case of finitely many individuals. For notational simplicity, assume that Supermodularity holds (all results go through under Order-supermodularity). Label the firms in order of types, with firm 1 being the highest type and firm $N$ the lowest.

If there is no horizontal differentiation $h(n)$ is the endogenous salience of an individual matched with $n$ firms. Let $v_I^*(j)$ be the threshold type for firm $j$'s matching set. If firm $j$ marginally increases the size of its matching set by lowering the cut-off it benefits from a larger matching set, but affects the endogenous salience of the newly added individuals, who are included in the matching sets of all higher types. Suppose that firm $j$ adds a marginal individual to its matching set. Assume that there is not another firm with the exact same matching set as $j$, and that $U^I$ is continuous in its first argument. Using the fact that the optimal matching has the threshold structure described in Proposition \ref{prop1.1}, the FOC for such a change is given by\footnote{If another firm has the same matching set we just need to modify the term $[h(j-1) - h(j)]$ in equation (\ref{eq:foc}).}
\begin{equation}\label{eq:foc}
\begin{split}
    U^F_2(v^F_j, |s_F(j)|_{S_I}) \cdot h(j) - &\sum_{k = 1}^{j-1} U^F_2(v^F_k, |s_F(k)|_{S_I}) \cdot [h(j-1) - h(j)] \\
    &+ U^I(v_I^*(j), j) - U^I(v_I^*(j), j-1)  \geq 0
\end{split}
\end{equation}
with equality if $v^*_I(j)$ is interior.

The first term in (\ref{eq:foc}) is the marginal benefit to firm $j$, the second is the inframarginal cost to all firms matched with the newly added individual, and the last is the direct benefit to the new individual. 

\begin{proposition}\label{prop:comp_stat1}
Suppose there is an increasing differences change in the payoffs of firm $k$, and that the supermodular order remains unchanged. If $U^F(v, \cdot)$ is concave for all $v$ and $h$ is decreasing then all firms $j > k$ receive smaller matching sets and firm $k$ receives a larger matching set. 
\end{proposition}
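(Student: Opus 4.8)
The plan is to treat the increasing-differences change as a scalar parameter $\theta$ that shifts firm $k$'s marginal payoff $U^F_2(v^F_k,\cdot)$ upward, and to read the comparative static off the first-order system (\ref{eq:foc}) as a monotone response of the optimal cutoff vector. First I would rewrite the optimum in cutoff variables: by Corollary \ref{cor2.2} it is an increasing sequence $t_1 \le \cdots \le t_N$ with $t_j = v^*_I(v^F_j)$, and the induced nesting means an individual whose lowest-ranked match is firm $j$ is matched with exactly firms $1,\dots,j$ and carries salience $h(j)$. The bookkeeping fact I would establish at the outset is that firm $j$'s match quality $|s_F(j)|_{S_I} = \int_{t_j}^{\bar v} h(n(v))\, d\lambda(v)$, where $n(v)$ is the number of firms matched with a type-$v$ individual, depends only on the cutoffs $t_j,t_{j+1},\dots,t_N$ of firm $j$ and its lower-ranked rivals. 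This upper-triangular dependence organizes everything that follows.

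Next I would isolate where $\theta$ enters the system. Because $\theta$ only perturbs $U^F(v^F_k,\cdot)$, and $U^F_2(v^F_k,\cdot)$ appears in (\ref{eq:foc}) exactly twice --- as the marginal-benefit term in firm $k$'s own condition and as the $l=k$ summand of the inframarginal-cost term in the condition of each rival $j>k$ --- the parameter has a clean, signed incidence. In firm $k$'s condition it raises the benefit of expansion, pushing $t_k$ down; this is the ``eliminate markups'' force, and it is the single-crossing content behind Lemma \ref{lem6.1}. In each condition $j>k$ it raises the cost firm $j$'s expansion imposes on the now more-valuable firm $k$ (the bracket $h(j-1)-h(j)$ multiplying it is positive because $h$ is decreasing), pushing $t_j$ up; this is the ``raise rivals' costs'' force. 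Crucially, $\theta$ does not enter the conditions of the higher-type firms $l<k$ at all, which is exactly why the proposition pins down firm $k$ and the rivals $j>k$ but makes no claim about $l<k$.

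These observations reduce the claim to a monotone-comparative-statics statement for the sub-vector $(-t_k,\,t_{k+1},\dots,t_N)$ in the parameter $\theta$: every coordinate should rise. The main obstacle is the simultaneity --- the cutoffs are jointly determined, so I must show the equilibrium feedback does not overturn the direct incidence just described. The delicate point is that concavity of $U^F$ generates a countervailing force: as firm $k$'s set grows, its marginal value $U^F_2(v^F_k,\cdot)$ falls, which weakens the very motive to shield it and lets rivals re-expand, so the relevant cross-effects among the cutoffs are not unconditionally of one sign. Controlling this is where concavity of $U^F$ (which makes each firm's own response well behaved) and $h$ decreasing (which fixes the sign of the competition externality) must be combined with the upper-triangular quality structure, so that the feedback through firms $l<k$ and through the shared inframarginal individuals is dominated and the net movement is $t_k$ down and $t_j$ up for every $j>k$. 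I would close by checking that this monotone move respects the ordering $t_1\le\cdots\le t_N$: since the supermodular order is unchanged by hypothesis and, by Lemma \ref{lem:monotonicity}, the monotonicity constraints do not bind, the move stays feasible.
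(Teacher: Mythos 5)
Your setup is correct and consistent with the paper's own framework: the cutoff variables $t_j = v^*_I(j)$, the nested ``upper-triangular'' dependence of $|s_F(j)|_{S_I}$ on $(t_j,\dots,t_N)$, and the accounting of where the payoff change enters the system (\ref{eq:foc}) --- firm $k$'s own marginal-benefit term, the $l=k$ summand of the inframarginal cost in the condition of every $j>k$, and nowhere in the conditions of firms $l<k$ --- are all accurate. The gap is that the proof stops exactly where it has to begin. You reduce the proposition to the statement that ``the equilibrium feedback does not overturn the direct incidence,'' and then assert that concavity of $U^F$, $h$ decreasing, and the triangular structure ``must be combined\dots so that the feedback\dots is dominated.'' That sentence is a paraphrase of the proposition, not an argument for it. Indeed, you yourself observe that the cross-effects among the cutoffs are not of one sign (as firm $k$'s set grows, concavity lowers $U^F_2(v^F_k,\cdot)$, which relaxes the pressure on rivals), and this is precisely why no off-the-shelf monotone-systems theorem applies and why a bespoke equilibrium argument is indispensable; none is supplied. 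A smaller but related slip: you cite Lemma \ref{lem6.1} as delivering ``$t_k$ falls,'' but that lemma concerns the \emph{quality} of firm $k$'s matching set, not its size; that firm $k$'s set grows is itself an equilibrium claim (it is Claim 3 of the paper's proof), not a direct single-crossing fact.

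What fills the gap in the paper is a chain of contradictions run through the FOCs in type order. To rule out a terminal segment of low-type firms all expanding (the paper's Claim 1): let $j$ be the lowest-value firm whose set strictly shrinks; supermodularity of $U^I$ signs the change in the individual-side terms of $j$'s FOC, concavity of $U^F$ together with $h$ decreasing signs the change in $j$'s own marginal value, so $j$'s FOC forces the inframarginal sum $\sum_{t<j}U^F_2(v^F_t,\cdot)\,[h(j-1)-h(j)]$ to rise; substituting this into firm $j+1$'s FOC signs $U^F_2(v^F_{j+1},\cdot)$, and iterating down to firm $N$ yields that $U^F_2(v^F_N,\cdot)$ must rise --- contradicting, via concavity, the hypothesis that firm $N$'s set (and hence its quality, since $N$'s quality depends only on its own cutoff) grew. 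Analogous chains (Claims 2 and 3) rule out any $j>k$ expanding and rule out firm $k$ shrinking. Your proposal assembles every ingredient these chains use --- the FOC, the positive sign of $h(j-1)-h(j)$, supermodularity of $U^I$, concavity of $U^F$ --- but never runs them, so the central step of the proposition remains assumed rather than proved.
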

\begin{proof}
Throughout this proof $s$ and $S$ will refer to the original matchings and $\tilde{s}$, $\tilde{S}$ will refer to the new matching. 

If $k=N$ then the proposition follows immediately from Lemma \ref{lem6.1}, so assume $k > 1$.  

\textit{Claim 1. There cannot exists an $\ell$ such that firms $N, N-1, \dots,\ell$ receive larger matching sets after the change.}
First, suppose that not all firms receive larger matching sets. Let $j$ be the lowest value firm that receives a strictly smaller matching set ( Lemma \ref{lem6.1} implies that $j \neq k$). Since $j$ receives a smaller matching set $\tilde{v}^*_I(j)$ must be larger than $v^*_I(j)$. Then Supermodularity implies that
\begin{equation*}
 U^I(\tilde{v}_I^*(j), j) - U^I(\tilde{v}_I^*(j), j-1) > U^I(v_I^*(j), j) - U^I(v_I^*(j), j-1).
\end{equation*}
 Since all lower value firms receive larger matching sets and $h$ is decreasing, firm $j$'s matching set is of lower quality. By concavity, $U^F_2(v^F_j, |\tilde{s}(j)|_{\tilde{S}_I}) > U^F_2(v^F_j, |s(j)|_{S_I})$. Then the FOC for $j$ holds only if
\begin{equation*}
    \sum_{t = 1}^{j-1} U^F_2(v^F_t, |\tilde{s}_F(t)|_{\tilde{S}_I}) \cdot [h(j-1) - h(j)]
\end{equation*}
is larger as well. But then we have that 
\begin{equation}\label{eq6.1}
    \sum_{t = 1}^j U^F_2(v^F_t, |\tilde{s}_F(t)|_{\tilde{S}_I}) \cdot [h(j-1) - h(j)] > \sum_{t = 1}^j U^F_2(v^F_t, |s_F(t)|_{S_I}) \cdot [h(j-1) - h(j)] 
\end{equation}
is also larger. Since firm $j+1$ received a larger matching set $\tilde{v}^*_I(j+1)$ is smaller. Then Supermodularity implies that 
\begin{equation*}
 U^I(\tilde{v}_I^*(j+1), j+1) - U^I(\tilde{v}_I^*(j+1), j) < U^I(v_I^*(j+1), j+1) - U^I(v_I^*(j), j).
\end{equation*}
Then the FOC for $j+1$ and (\ref{eq6.1}) imply that $U^F_2(v^F_{j+1}, |\tilde{s}_F(j)|_{\tilde{S}_I})$ is larger (or $\hat{U}^F_2(v^F_{j+1}, |\tilde{s}_F(j)|_{\tilde{S}_I})$ if $j+1 = k$). Proceeding in this way we conclude that $U^F_2(v^F_{N}, |\tilde{s}_F(N)|_{\tilde{S}_I})$ must be larger. But if firm $N$ received a larger matching set then concavity implies that $U^F_2(v^F_{N}, |\tilde{s}_F(N)|_{\tilde{S}_I})$ is strictly smaller after the merger (since $k > 1$ by assumption). Thus we have a contradiction. 

Now suppose all firms receive larger matching sets. If firm $1$ receives a larger matching set then (\ref{eq:foc}) implies that its marginal value $U^F_2(v^F_1, |\tilde{s}(1)|_{\tilde{S}_I})$ must be higher. Since firm $2$ receives a larger matching set $\tilde{v}^*_I(2)$ must be smaller. Then (\ref{eq:foc}) implies that $U^F_2(v^F_2, |\tilde{s}(2)|_{\tilde{S}_I})$ must be larger. Proceeding in this way we conclude that $U^F_2(v^F_N, |\tilde{s}(N)|_{\tilde{S}_I})$ is larger after the change, which we have already noted is a contradiction.

\textit{Claim 2. No firm $j > k$ can receive a larger matching set after the change.}
Let $j$ be the lowest value firm that receives a larger matching set, and suppose $j > k$. By Claim 1, $j < N$. Since $j$ receives a larger matching set $\tilde{v}^*_I(j)$ is smaller, so Supermodularity implies that $U^I(\tilde{v}_I^*(j), j) - U^I(\tilde{v}_I^*(j), j-1)$ is smaller. Since all firms $m > j$ receive smaller matching sets, concavity implies that $U^F_2(v^F_j, |\tilde{s}(j)|_{\tilde{S}_I})$ is smaller. Then the FOC for $j$ implies that 
\begin{equation*}
    \sum_{t = 1}^{j-1} U^F_2(v^F_t, |\tilde{s}_F(t)|_{\tilde{S}_I}) \cdot [h(j-1) - h(j)]
\end{equation*}
is smaller. But then we have that 
\begin{equation}
    \sum_{t = 1}^j U^F_2(v^F_t, |\tilde{s}_F(t)|_{\tilde{S}_I}) \cdot [h(j-1) - h(j)] < \sum_{t = 1}^j U^F_2(v^F_t, |s_F(t)|_{S_I}) \cdot [h(j-1) - h(j)] 
\end{equation}
Since $j+1$ receives a smaller matching set $\tilde{v}^*_I(j+1)$ is larger. But then the FOC for $j+1$ implies that $U^F_2(v^F_{j+1}, |\tilde{s}(j+1)|_{\tilde{S}_I})$ is smaller. Proceeding in this way we conclude that $U^F_2(v^F_N, |\tilde{s}(N)|_{\tilde{S}_I})$ is smaller. But this contradicts the assumption that $N$ receives a smaller (and thus worse) matching set, given concavity. 

\textit{Claim 3. Firm $k$ receives a larger matching set.} Suppose $k$ receives a smaller set. Suppose some firm with a higher value than $k$ receives a larger set, and let $j$ be the lowest value such firm. Then using the same proof as in Claim 2 we can arrive at a contradiction, so no firm can receive a larger set. If firm 1 receives a smaller matching set then the FOC for 1 implies that $U^F_2(v^F_1, |\tilde{s}(1)|_{\tilde{S}_I})$ (or $U^F_2(v^F_1, |\tilde{s}(1)|_{\tilde{S}_I})$ if $k=1$) must be smaller. Then again we can use the proof of Claim 2 to arrive at a contradiction.
\end{proof}

The assumption of concavity of $U_F(v,\cdot)$ is used throughout the proof, but it is not necessary. This can be seen most easily by appealing to continuity of the objective function. The strict version of the comparative statics result of Proposition \ref{prop:comp_stat1} holds when $U_F(v,\cdot)$ is affine, and given appropriate continuity assumptions Berge's theorem implies that it will continue to hold if $U_F(v,\cdot)$ is perturbed slightly to be strictly convex. Thus it is not clear exactly what the role of concavity is in the result. To try to build some intuition, consider an increasing differences change to the payoffs of firm 1. At the new optimal matching, the sum of payoffs for all other firms must be smaller (otherwise it would have been worthwhile to change their matchings before the change in firm 1's payoffs). Roughly speaking, concavity of $U_F(v,\cdot)$ implies that it is optimal to spread the reduction in payoffs across all these firms. This intuition is incomplete, and it would be valuable to know how far convexity of $U_F(v,\cdot)$ can be pushed. 

Let $v^*_I(k)$ be the threshold individual type defining firm $k$'s matching set. If all firms with lower types than $k$ receive smaller sets then all individuals with types higher then $v^*_I(k)$ receive smaller matching sets. 
{}
\begin{corollary}\label{cor1.1}
Under the conditions of Proposition \ref{prop1.1}, all individuals with types higher than $v^*_I(k)$ receive (weakly) smaller matching sets. 
\end{corollary}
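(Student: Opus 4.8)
The plan is to translate the statement, which is about individuals' matching sets, into one about firms' thresholds, and then read off the conclusion from Proposition~\ref{prop:comp_stat1} together with the monotonicity already built into the threshold characterization. By reciprocity of the matching and Corollary~\ref{cor2.2}, each firm $j$'s matching set is an upper-threshold set $\{i : v_I^i \geq v^*_I(j)\}$ in individual type, so that a firm receives a smaller matching set precisely when its threshold $v^*_I(j)$ rises. Dually, an individual of type $v_I$ is matched with firm $j$ if and only if $v^*_I(j) \leq v_I$, so its matching set is exactly $\{j : v^*_I(j) \leq v_I\}$. Hence it suffices to track the firm thresholds before and after the change.

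First I would record the three facts I need. Writing firms in type order $1,\dots,N$ (firm $1$ highest): (i) Corollary~\ref{cor2.2} applies to \emph{every} optimal matching, so in both the original and the new matching the thresholds are monotone, $v^*_I(1)\le\dots\le v^*_I(N)$ and $\tilde v^*_I(1)\le\dots\le\tilde v^*_I(N)$; (ii) by Proposition~\ref{prop:comp_stat1} every firm $j>k$ receives a smaller set, i.e. $\tilde v^*_I(j)\ge v^*_I(j)$; (iii) firm $k$ receives a larger set, i.e. $\tilde v^*_I(k)\le v^*_I(k)$.

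Next I would fix an individual with $v_I > v^*_I(k)$ and show, firm by firm, that its new matching set is contained in its old one. For the high-type firms $j\le k$, combining the new-matching monotonicity (i) with (iii) gives $\tilde v^*_I(j)\le \tilde v^*_I(k)\le v^*_I(k)<v_I$, so the individual remains matched with every firm $1,\dots,k$ and cannot lose any of them. For the low-type firms $j>k$, fact (ii) gives $\tilde v^*_I(j)\ge v^*_I(j)$, so $v_I\ge\tilde v^*_I(j)$ forces $v_I\ge v^*_I(j)$; that is, the individual is matched with such a firm after the change only if it was matched with it before. Therefore the new matching set is a subset of the old one, and since matching sets are weighted by nonnegative salience this yields a weakly smaller matching set, as claimed.

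The only place that requires care — and the natural pitfall — is the high-type block $j\le k$: one might worry that the change alters the thresholds of firms more prominent than $k$ in an uncontrolled way, since Proposition~\ref{prop:comp_stat1} says nothing directly about firms $j<k$. The point is that I never need those thresholds individually; the monotonicity of the thresholds in the new matching, anchored by firm $k$'s enlarged set, already forces $\tilde v^*_I(j)\le \tilde v^*_I(k)\le v^*_I(k)<v_I$ for all $j\le k$, guaranteeing the individual keeps all of them. With that observation the inclusion is immediate and no further estimation is needed.
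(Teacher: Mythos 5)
Your proof is correct and takes essentially the same route the paper intends: the paper treats the corollary as immediate from Proposition \ref{prop:comp_stat1} plus the threshold structure, precisely because an individual with $v_I > v^*_I(k)$ is already matched with every firm $1,\dots,k$ (by monotonicity of the thresholds in the original matching), so its matching set can only change through the firms $j > k$, whose thresholds weakly rise. Your write-up simply makes this explicit; note that the step invoking firm $k$'s enlarged set and new-matching monotonicity is not even needed for the subset inclusion, since no firm $j \le k$ can be \emph{added} to a set that already contains all of them.
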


If an individual's utility is given entirely by $U^I$, and is increasing in match size, then Corollary \ref{cor1.1} implies that all individuals with types above $v^*_I(k)$ are weakly worse off following the change in $k$'s payoffs. Even if this is not the case, for example if there are transfers between individuals and the platform, it may be possible to determine the welfare effects of such a change. I return to this question in the next section. 

It is easy to see from the proof of Proposition \ref{prop:comp_stat1} that the result can be extended to increasing changes in the payoffs of multiple firms. 

\begin{lemma}\label{lem:compstat_multi}
Under the conditions of Proposition \ref{prop:comp_stat1}, if there is an increasing differences change in the payoffs of a set $C$ of firms firms the all firms $j > \max{C}$ receive smaller matching sets. 
\end{lemma}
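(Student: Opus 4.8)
The plan is to run the contradiction argument of Proposition~\ref{prop:comp_stat1} essentially verbatim, with $\bar k := \max C$ playing the role of the single changed firm $k$. I would label firms by type as in that proof and set the target: no firm with index $j > \bar k$ receives a strictly larger matching set, so all such firms weakly shrink. Arguing by contradiction, let $j$ be the largest index among firms that strictly grow; if the conclusion failed then $j > \bar k$, and every firm with index exceeding $j$ would weakly shrink. The point is that the engine of Proposition~\ref{prop:comp_stat1} is a propagation of the first-order conditions~(\ref{eq:foc}) down the type order, and the identity or number of the changed firms enters this propagation in only two ways, both harmless here.

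First, the interior case $j < N$ mirrors Claim~2. Because firm $j$ grows while all lower-type firms shrink and $h$ is decreasing, firm $j$'s match quality rises, so concavity of $U^F(v,\cdot)$ forces $U^F_2(v^F_j, |\tilde s_F(j)|_{\tilde S_I})$ down; Supermodularity forces the direct individual term down as well; and firm $j$'s first-order condition --- which is unchanged, since $j > \bar k$ --- then forces the aggregate inframarginal term $\sum_{t=1}^{j-1} U^F_2(v^F_t,\cdot)[h(j-1)-h(j)]$ down. Propagating through the (also unchanged) conditions of firms $j+1,\dots,N$ makes $U^F_2(v^F_N,|\tilde s_F(N)|_{\tilde S_I})$ \emph{smaller}, which contradicts concavity, since the shrinking lowest-type firm $N$ must have a \emph{larger} marginal value. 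Crucially, the changed payoffs $\hat U^F$ of the firms in $C$ (all with index $\le \bar k$) enter only inside the aggregate sum $\sum_{t=1}^{j-1}$, which is pinned down wholesale by firm $j$'s unchanged condition; the argument never needs the individual marginal values of firms in $C$.

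Second, the boundary case in which the growing firms form a suffix reaching $N$ mirrors Claim~1: the chain then runs from the lowest-type strictly-shrinking firm $p$ down to $N$ and forces marginal values \emph{up}, contradicting the growth of firm $N$, which by concavity should have a smaller marginal value. This is the one place the chain can pass through a firm in $C$, namely when $p \le \bar k$. The key observation --- and the main thing to check --- is that an increasing-differences change only raises those firms' marginal values (by the single-crossing reasoning behind Lemma~\ref{lem6.1}), which is exactly the direction the chain already requires, so passing through a changed firm reinforces rather than breaks the inequalities. Thus the main obstacle is precisely this bookkeeping: confirming that changed marginal values enter the interior chain only through an aggregate sum controlled by an unchanged first-order condition, and that in the suffix chain the increasing-differences perturbation is co-directional with the forced inequalities. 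Once both are verified, the contradiction at firm $N$ transfers directly, and no hypotheses beyond those of Proposition~\ref{prop:comp_stat1} are required.
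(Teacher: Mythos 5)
Your proposal is correct and takes exactly the approach the paper intends: the paper gives no separate proof of Lemma~\ref{lem:compstat_multi}, asserting only that ``it is easy to see'' from the proof of Proposition~\ref{prop:comp_stat1} that the result extends to multiple firms, and your argument is precisely that extension. The two bookkeeping checks you isolate --- that in the interior (Claim~2-style) chain every first-order condition invoked belongs to an unchanged firm, with the firms in $C$ entering only through the aggregate inframarginal sum, and that in the suffix (Claim~1-style) chain an increasing-differences change pushes marginal values in the same direction the chain requires, so passing through a changed firm preserves the inequalities --- are exactly the details the paper leaves implicit.
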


Proposition \ref{prop:comp_stat1} does not specify what happens to the matching sets for higher types than the one for which there is an increasing differences change. This will in general depend on the function $h$. This is because it is not clear what the effect of the described changes is on the value of the original matching sets for such types. The fact that all $j > k$ receive smaller matching sets benefits types $m < k$. However these types are also hurt by the fact that $k$ receives a larger matching set.

If $U^F(v,\cdot)$ is affine then comparative statics are much easier to identify. In this case, we can write $U^F(v,x) = \alpha^F(v) + \beta^F(v)\cdot x$ In this case, we can look at the problem entirely from the point of view of an individual, and the problem separates across individuals. In other words, the objective function can be written as

\begin{equation*}
\int_I\left( U^I(v^I_i, |s_I(i)|) + h(|s_I(i)|) \cdot \int_{s_I(i)} \beta^F(v^F_j) d \lambda(j) \right)d\lambda(i). 
\end{equation*}

An increasing differences change in payoffs here corresponds to an increase in $\beta^F(v)$. The following Lemma is immediate.

\begin{lemma}\label{lem:affine_welfare}
If $U_F(v,\cdot)$ is affine then the size of firm $j$'s matching set increases following an increasing differences change in payoffs. 
\end{lemma}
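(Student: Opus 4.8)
The plan is to exploit the separability of the objective across individuals that the affine structure delivers, reducing the problem to a family of independent single-individual choice problems, and then to apply a revealed-preference argument to each of them. Writing $U^F(v,x) = \alpha^F(v) + \beta^F(v)\cdot x$, the objective becomes (up to the constant $\sum_j\alpha^F(v^F_j)$) the integral $\int_I\big(U^I(v^I_i,|s_I(i)|) + h(|s_I(i)|)\sum_{k\in s_I(i)}\beta^F(v^F_k)\big)\,d\lambda(i)$, where the rewriting of the firm-side sum follows by Fubini. Each individual's set $s_I(i)$ enters only the integrand at $i$, so the objective is maximized by choosing each individual's matching set separately; by Lemma \ref{lem:monotonicity} the cross-individual monotonicity constraints do not bind, so this point-wise maximization is legitimate. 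An increasing differences change in firm $j$'s payoff is, in the affine case, exactly an increase of the coefficient $\beta^F(v^F_j)$ to some $\hat\beta^F(v^F_j) > \beta^F(v^F_j)$, with all other coefficients unchanged.

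Next I would fix an individual $i$ and let $W_\beta(s)$ denote that individual's payoff from matching set $s$ under coefficient vector $\beta$, and let $s$ be optimal under $\beta$ and $\tilde s$ optimal under $\hat\beta$. The crucial observation is that the perturbation changes the value of a set only through the term $h(|s|)\beta^F(v^F_j)$, hence only when $j\in s$: if $j\in s$ then $W_{\hat\beta}(s) - W_\beta(s) = h(|s|)\big(\hat\beta^F(v^F_j) - \beta^F(v^F_j)\big) > 0$ (taking $h>0$), while if $j\notin s$ the two values coincide. Supposing for contradiction that $j\in s$ but $j\notin\tilde s$, optimality gives $W_\beta(s)\ge W_\beta(\tilde s)$ and $W_{\hat\beta}(\tilde s)\ge W_{\hat\beta}(s)$. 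Since $j\notin\tilde s$, the second inequality reads $W_\beta(\tilde s)\ge W_\beta(s) + h(|s|)\big(\hat\beta^F(v^F_j)-\beta^F(v^F_j)\big)$, and combining with the first yields $0\ge h(|s|)\big(\hat\beta^F(v^F_j)-\beta^F(v^F_j)\big)$, which contradicts $h>0$.

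It then follows that every individual who matched with $j$ before the change continues to match with $j$ afterward, so the set of individuals matched with $j$ weakly expands and its (salience-weighted) size increases. The only subtlety is the degenerate case $h(|s|)=0$, in which the individual is indifferent to including $j$; this is immaterial under the maintained sign assumption and otherwise can be absorbed into a tie-breaking convention, so I do not expect it to be a genuine obstacle. Indeed, because the per-individual problem is a clean monotone-comparative-statics exercise once separability is invoked, the main (and essentially only) work is verifying that separability together with the non-binding monotonicity constraint of Lemma \ref{lem:monotonicity} justifies the point-wise treatment; everything after that is the short revealed-preference computation above, which is why the statement can be called immediate.
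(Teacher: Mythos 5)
Your proposal is correct and follows essentially the same route as the paper: the affine structure lets the objective be rewritten (via Fubini and reciprocity) as an integral of per-individual terms, the problem then separates across individuals, an increasing differences change is exactly an increase in $\beta^F(v^F_j)$, and the per-individual revealed-preference step is precisely what the paper treats as ``immediate.'' Your explicit handling of the $h=0$ tie case is a minor refinement of detail, not a different argument.
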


In this case we can also identify what happens when there is an increasing differences change to the payoffs of the lowest type. The following result is immediate given the threshold structure of matching sets. 

\begin{lemma}\label{lem:compstat_affine}
Suppose there is an increasing differences change in the payoffs of firm $N$, and that the supermodular order remains unchanged. If $U_F(v,\cdot)$ is affine then either firm $N$ is added to an individual's matching set or the individual's matching set remains unchanged. 
\end{lemma}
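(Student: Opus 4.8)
The plan is to exploit the separability of the platform's objective in the affine case and reduce the problem to a one-dimensional comparison for each individual. First I would recall that when $U^F(v,\cdot)$ is affine the objective can be written as in the display preceding Lemma~\ref{lem:affine_welfare}, namely $\int_I \big(U^I(v^I_i,|s_I(i)|) + h(|s_I(i)|)\int_{s_I(i)}\beta^F(v^F_j)\,d\lambda(j)\big)\,d\lambda(i)$, which is maximized pointwise in the individual $i$. By Proposition~\ref{prop1.1} and the absence of horizontal differentiation, each individual's optimal matching set is a threshold set, i.e.\ the top $m$ firms in the supermodular order for some $m\in\{0,1,\dots,N\}$. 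Writing the firms in decreasing type order with firm $N$ lowest, I would define $V_i(m)$ to be individual $i$'s bracketed payoff when matched with the top $m$ firms, so that individual $i$'s problem is simply $\max_{m} V_i(m)$.

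Next I would translate the increasing differences change into this reduced problem. In the affine case an increasing differences change in firm $N$'s payoff is exactly an increase $\Delta\beta := \hat\beta^F(v^F_N)-\beta^F(v^F_N)\ge 0$. Because firm $N$ is the lowest type and matching sets are threshold sets, firm $N$ belongs to a matching set only when $m=N$. Hence $V_i(m)$ is unaffected for every $m<N$, while $V_i(N)$ increases by exactly $h(N)\,\Delta\beta$, which is nonnegative since $h\ge 0$. So the change raises the value of the single option ``match with all firms'' and leaves every other option untouched.

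Finally I would compare optima. Let $m_i$ be individual $i$'s optimizer before the change; old optimality gives $V_i(m_i)\ge V_i(m)$ for all $m$. After the change only $V_i(N)$ has moved, and moved upward, while for each $m<N$ we still have $\hat V_i(m)=V_i(m)\le V_i(m_i)=\hat V_i(m_i)$. Therefore the only option that can overtake $m_i$ is $m=N$, so the new optimizer lies in $\{m_i,N\}$: either it remains $m_i$, in which case the individual's matching set is unchanged, or it becomes $N$, in which case firm $N$ (absent before whenever $m_i<N$) now enters the set. I expect the one point requiring care to be exactly this last step, namely ruling out a move to some intermediate threshold $m\notin\{m_i,N\}$; but this is immediate here, since those intermediate values of $V_i$ are literally unchanged by a perturbation confined to firm $N$'s coefficient and were already dominated by $V_i(m_i)$. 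This delivers the dichotomy claimed in the Lemma.
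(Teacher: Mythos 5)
Your proof is correct and follows essentially the same route the paper has in mind: the paper offers no written proof (it calls the result ``immediate given the threshold structure of matching sets''), and your argument is exactly the formalization of that claim, combining the affine separability across individuals with Proposition~\ref{prop1.1} to reduce each individual's problem to choosing a threshold $m$, where only the value of $m=N$ moves (upward) under the change. Your dichotomy is in fact slightly sharper than the lemma's phrasing --- the new set is either the old set or the \emph{full} set of firms, so when firm $N$ enters, all intermediate firms $m_i+1,\dots,N-1$ enter with it --- which is the right reading of the statement.
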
 

\subsection{Individuals with private information}\label{sec:privateinfo}

Propositions \ref{prop1.1} and \ref{prop:comp_stat1} will be particularly interesting when there is a continuum of individuals who have private information about their type which must be elicited by the platform. Assume that Supermodularity holds on side $I$ (the arguments extend directly to Order-supermodularity). In this setting the usual \cite{myerson1981optimal} argument implies that in any incentive compatible mechanism the payoff of individual $i$ can be written as
\begin{equation}\label{eq:envelope}
V(v^I_i) = V(\und{v}^I) + \int\limits_{\und{v}^I}^{v^I_i} U^I_1(v, |s_I(v)|) dv.
\end{equation}
Suppose that the conditions of Proposition \ref{prop:comp_stat1} are satisfied, and that there is a increasing differences change in the payoffs of firm $1$. Corollary \ref{cor1.1} says that all individuals with types above $v^*_I(1)$ receive smaller matching sets. Under Supermodularity, $U^F_1(v,x)$ is increasing in $x$. Then if $V(v^*_I(1))$ does not increase, the envelope condition in (\ref{eq:envelope}) implies that all individuals will be worse off. This will be the case if $v^*_I(1) = \bar{v}^I$, which in turn will hold whenever both $U^F(v^F_1, \cdot)$ and $U^I(\und{v},\cdot)$ (or the variant of payoffs used in the platforms problem, if it is not the same as (\ref{eq:objective})) are increasing.

\begin{lemma}\label{lem:welfare}
When individuals' types are private information and $v^*_I(1)= \bar{v}^I$, all individuals are worse off when all receive smaller matching sets, and better off when all receive larger matching sets. 
\end{lemma}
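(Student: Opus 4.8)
The plan is to read the welfare comparison directly off the envelope representation in (\ref{eq:envelope}), converting the hypothesis on matching-set \emph{sizes} into a monotone comparison of integrands via Supermodularity. Write $V$ and $\tilde V$ for the true individual-utility profiles induced by the two matchings being compared, and $|s_I(\cdot)|$, $|\tilde s_I(\cdot)|$ for the corresponding salience-weighted match sizes. Since types are private information, (\ref{eq:envelope}) applies to each profile. Anchoring both at the threshold $v^*_I(1)$ — which is the lowest participating type, because by Corollary \ref{cor2.2} firm $1$ has the lowest individual-threshold of all firms, so any individual below $v^*_I(1)$ is matched with no firm — gives, for every participating $v$,
\[
\tilde V(v)-V(v)
=\bigl(\tilde V(v^*_I(1))-V(v^*_I(1))\bigr)
+\int_{v^*_I(1)}^{v}\Bigl(U^I_1(w,|\tilde s_I(w)|)-U^I_1(w,|s_I(w)|)\Bigr)\,dw .
\]

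The integral is the mechanical part. Supermodularity of $U^I$ says exactly that $U^I_1(w,\cdot)$ is nondecreasing in match size, so if every individual receives a (weakly) smaller matching set, i.e.\ $|\tilde s_I(w)|\le|s_I(w)|$ for all $w$, then $U^I_1(w,|\tilde s_I(w)|)\le U^I_1(w,|s_I(w)|)$ pointwise and the integral term is $\le 0$ for every $v$; the inequality reverses verbatim when all sets grow. The stated condition on $v^*_I(1)$ also pins the lower limit of integration to the same type in both profiles, so there is no boundary-of-integration region to account for in the difference above.

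The \emph{main obstacle} is the level term $\tilde V(v^*_I(1))-V(v^*_I(1))$: the envelope formula determines utility \emph{differences} across types but not the level, so without further input its sign is unknown. This is precisely the role of the hypothesis on $v^*_I(1)$. It identifies $v^*_I(1)$ as the marginal participant, whose payoff is fixed by the binding participation constraint at the (unchanged) outside option and is therefore common to both profiles, so the level term is $\le 0$ (indeed $=0$). Combining the two pieces yields $\tilde V(v)\le V(v)$ for all participating $v$ when every matching set shrinks, and $\tilde V(v)\ge V(v)$ when every set grows, with excluded types unaffected; this is the claim. The one step I would treat carefully — the only one that is not purely envelope-plus-Supermodularity — is verifying, for whatever objective is in force in the relevant application (not necessarily utilitarian), that the outside option and hence the anchor payoff genuinely does not move with the change, and that the anchor $v^*_I(1)$ itself is common to the two profiles rather than shifting.
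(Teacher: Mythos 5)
Your proof is correct and is essentially the paper's own argument: the paper never gives a separate proof environment for Lemma \ref{lem:welfare}, but establishes it in the discussion immediately preceding it via exactly your three ingredients --- the envelope formula (\ref{eq:envelope}), Supermodularity (so $U^I_1(v,\cdot)$ is nondecreasing in match size, making the integral term signed pointwise), and the hypothesis on $v^*_I(1)$ serving only to pin the anchor payoff $V(v^*_I(1))$ at the unchanged outside option. Your explicit level-plus-integral decomposition and your closing caveats (that the anchor type and its outside option must not move) merely make precise what the paper leaves informal, and your reading of the hypothesis as ``no exclusion, marginal participant at the bottom'' is the intended one --- the paper's ``$v^*_I(1)=\bar{v}^I$'' is evidently a typo for $v^*_I(1)=\und{v}^I$, as the surrounding text (e.g.\ the sufficient condition that $U^F(v^F_1,\cdot)$ and $U^I(\und{v},\cdot)$ be increasing, and the later use of $\varphi(\und{v}_I)>0$ for ``no exclusion'') confirms.
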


In order to apply Lemma \ref{lem:welfare} it need not be the case that the platform's objective exactly corresponds to that in (\ref{eq:objective}). So long as the platform objective satisfies the conditions of Proposition \ref{prop:comp_stat1}. 

In many settings individual payoffs are quasi-linear and the platform wants to maximize transfers from individuals. Normalize the outside option of the lowest type, $U(\und{v}, 0)$, to zero. If individual payoffs are given by $u^I(v,|s_I(v)|) - t(v)$ then the sum of transfers from all individuals is given by 
\begin{equation}\label{eq:payments}
\int\limits_{\und{v}}^{\bar{v}}\left[ u^I(v,|s_I(v)|) - \dfrac{1- Q_F(v)}{q_F(v)} u_1^I(v, |s(v)|) \right] q_F(v) dv.
\end{equation}
Under Supermodularity, the matching mechanism is incentive compatible for individuals if and only if individual match sizes are increasing in types and payments are constructed to satisfy the envelope condition in (\ref{eq:envelope}). Proposition 1 will hold provided the integrand in (\ref{eq:payments}) is Supermodular. Moreover, under this condition the monotonicity constraint will not bind.\footnote{Without this condition Proposition \ref{prop1.1} will hold, i.e. the optimal matching will still have a threshold structure, but the proof of Proposition \ref{prop:comp_stat1} does not go through.}

%%%%%%%%%%%%%%%%%%%%%%%%%%%%%%%%%%%%%%%%%

\section{Extension: preferences over vertical types}\label{sec:extension}

One extension of the model, which arises naturally in many applications, is that individuals and/or firms may care about the vertical types of their matches. In the MVPD example explored above, we might think that channels prefer individuals with high viewing propensity, or individuals prefer channels with high quality programming (which in this case we interpret as high $v_F$). The main results will continue to hold when agents on one side prefer matches with higher type agents on the other side. This reinforces the optimality of giving better quality match sets to high type individuals, which is the key driver of the threshold structure, and thus the results that follow.\footnote{We could interpret the model in which agents care about the vertical types of those on the other side as a special case of the model presented above, in which the horizontal types are perfectly correlated with the vertical types. However under this interpretation the results of the previous section become trivial: a threshold structure of matchings conditional or horizontal type is meaningless when there is a single vertical type for each horizontal type. Moreover, we will be intersted in settings in which vertical types are private information. The case of privatly known horizontal types was not studied above.} 

Another natural extension is if the endogenous salience of an individual depends on the vertical types of the firms they are matched with. Again, if the endogenous salience is increasing in the types of an individual's matches then there will be no complications to the previous results. However in many cases this is not the direction we expect. An example, which I will explore in detail later on, is that monopolistically competitive firms may care not only about how many other firms their customers have access to, but also about whether these are high or low cost firms. Low cost (high type) are more damaging competitors because they are able to charge a lower price. Nonetheless, the general welfare conclusions will continue to hold; broadly speaking, increasing differences payoff changes for high type firms will hurt individuals, while such changes for low type firms will benefit individuals. 

In order to gain tractability when this holds, I assume that $U^F$ is affine in match quality. As we will see, this does not immediatly imply that the platform's problem will separted accross individuals, as would be the case if veritcal types were not payoff relevant for agents on the other side of the market; when agents' types are their private information the monotonicity constraint for incentive compatible mechanisms may bind. In this section I will provide conditions under which the problem is separable, characterize the optimal matching, and explore comparative statics. These results will be useful in applications, as illustrated in section \ref{sec:monopcomp}.

For simplicity, consider the model without horiontal differentiation. The payoffs of a type $v_I$ individual are given by $U^I(v_I,V_I(s_I))$ where 
\begin{equation*} 
V_I(s_I) = \int\limits_{s_I} v_j d\lambda(j).	
\end{equation*}
$U^I$ is strictly increasing in its second argument. 

The payoffs of a type $v_F$ firm are given by $\beta^F(v_F)\cdot V_F(s_F|S_I)$, where
\begin{equation*} 
V_F(s_F|S_I) = \int\limits_{s_F} h(v_I(i), V_I(s_I(i))) d\lambda(i)
\end{equation*}
and $h \geq 0$.

The platform's objective can be written as
\begin{equation}\label{eq:general_obj}
\int\left[ U^I(v_I, V_I(s_I(v_I))) + h(v_I,V_I(s_I(v_I))) \cdot \int\limits_{s_I(v_I)} \beta(v_F) dQ_F(v_F)\right]dQ_I(v_I).
\end{equation}
I will refer to maximizing the integrand in (\ref{eq:general_obj}) separatly for each individual as maximizing pointwise. Denote the integrand by $\Pi(s_I|v_I)$. The solution to the platform's problem may not be given by pointwise maximization, in particular if they are subject to a monotonicity constraint imposed by inceitnive compatability. We cannot appeal to Proposition \ref{prop1.1} to establish the threshold structure of optimal matchings in this setting, which would imply that the monotonicity constraint does not bind. This is because the proof of Lemma \ref{lem:monotonicity}, which says that higher type firms receive larger matching sets, does not go through. The reason is that higher type firms impose a larger negative externality on other firms by setting lower prices. However if the problem can be solved pointwise the solution is easy to characterize. This characterization also reveals the relevant assumptions needed to guarantee that the matchings have a threshold structure in vertical types. 

\begin{lemma}\label{lem:general_threshold}
Pointwise maximization of (\ref{eq:general_obj}) yeilds matchings for each individual type $v_I$ that are characterized by a threshold in $\beta(v_F)/v_F$: a firm is included if and only if $\beta(v_F)/v_F$ is high enough.
\end{lemma}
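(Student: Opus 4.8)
The plan is to fix an individual of type $v_I$ and study the pointwise objective $\Pi(s_I\mid v_I)$, exploiting the fact that it depends on the chosen set of firms $s_I$ only through two scalar aggregates: the total vertical value $V_I(s_I) = \int_{s_I} v_F\, dQ_F(v_F)$, which enters both $U^I$ and $h$, and the total weight $B(s_I) = \int_{s_I}\beta(v_F)\, dQ_F(v_F)$, which multiplies $h$. Thus $\Pi(s_I\mid v_I) = U^I(v_I, V_I(s_I)) + h(v_I, V_I(s_I))\, B(s_I)$, and selecting $s_I$ reduces to trading off these two aggregates. The threshold claim should then fall out of a two-step decomposition.

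First I would argue that any pointwise maximizer is \emph{$B$-efficient}: among all sets attaining the same level of $V_I$, it attains the largest $B$. This is immediate from $h\geq 0$ (strict whenever $h(v_I, V_I)>0$), since for a fixed value of $V_I$ the objective is weakly increasing in $B$; hence replacing $s_I$ by any set with the same $V_I$ but larger $B$ cannot lower $\Pi$, so the optimum may be taken $B$-efficient. When $h(v_I,V_I)=0$ the objective is independent of $B$ and the threshold statement degenerates, so I would flag this as the trivial case.

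Second, I would solve the resulting constrained subproblem: maximize $B(s_I)=\int_{s_I}\beta(v_F)\,dQ_F$ subject to $V_I(s_I)=\int_{s_I} v_F\,dQ_F$ held fixed. Assuming firm vertical types are positive, this is a continuous (fractional-knapsack) linear program in the inclusion indicator, governed by the ``bang-per-buck'' ratio $\beta(v_F)/v_F$: a marginal unit of firm type $v_F$ costs $v_F$ in the $V_I$-budget and yields $\beta(v_F)$ in $B$. An exchange argument makes this precise — removing mass of a type with a low ratio and adding the $V_I$-equivalent mass of a type with a high ratio changes $B$ by a positive multiple of $\beta(v'')/v'' - \beta(v')/v'$, which is positive exactly when the added type $v''$ has the higher ratio. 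Hence a $B$-efficient set includes all firms with $\beta(v_F)/v_F$ above some cutoff $\tau$ and excludes those below, with $\tau$ identified as the Lagrange multiplier on the $V_I$ constraint; when $Q_F$ is atomless the boundary has measure zero. Combining the two steps delivers the claimed threshold in $\beta(v_F)/v_F$.

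The main obstacle is the second step, specifically the endogeneity of the budget level $V_I$: because $V_I$ is itself chosen, one cannot simply invoke a fixed-budget knapsack result, and it is precisely the decomposition of the first step — reducing the joint optimization over $(V_I, B)$ to $B$-efficiency at the realized $V_I$ — that licenses the exchange argument. A secondary point requiring care is the sign of firm vertical types: the ratio threshold presumes $v_F>0$, which I would state as a maintained assumption (consistent with firms delivering positive vertical value to individuals, since $U^I$ is increasing in $V_I$), as allowing negative $v_F$ would break the clean ordering by $\beta(v_F)/v_F$.
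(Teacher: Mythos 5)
Your proposal takes essentially the same route as the paper's own proof: reduce the pointwise objective to the two aggregates $V_I(s_I)$ and $\int_{s_I}\beta(v_F)\,dQ_F(v_F)$, then maximize the latter subject to the former being held fixed, and conclude the threshold in $\beta(v_F)/v_F$ from linearity of both the objective and the constraint. Your write-up simply makes explicit the two steps the paper compresses into two sentences --- the $B$-efficiency reduction (via $h\geq 0$) and the exchange argument --- and correctly flags side conditions (the degenerate case $h(v_I,V_I)=0$ and the maintained assumption $v_F>0$) that the paper leaves implicit.
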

\begin{proof}
The integrand in (\ref{eq:general_obj}) only depends on the matching through the terms $V_I(s_I(v_I))$ and $\int_{s_I(v_I)} \beta(v_F)dQ_F(v_F)$. Consider the problem of maximizing $\int_{s_I(v_I)} \beta(v_F)dQ_F(v_F)$ subject to $V_I(s_I(v_I)) = \bar{V}$. Since the objective and the constraint are linear, this amounts to including a firm in the matching set if and only if $\beta(v_F)/v_F$ is high enough. 
\end{proof}

In immediate implication of Lemma \ref{lem:general_threshold} is that $v_F \mapsto V_F(s_F|S_I)$ is increasing in $\beta(v_F)/v_F$. However it says nothing about the comparison of matching sets for different individuals. Monotonicity properties of individual side matchings can be derived from the usual conditions (single crossing, interval order dominance, etc.) on the integrand. I will explore an application in detail, and delay further discussion of such results until then. Instead, I will present a useful comparative statics result on the matching set of a given individual that makes use of the threshold structure indentified in Lemma \ref{lem:general_threshold}. As we will see, it will be intersting to compare the optimal matchings under differnt firm payoffs, captured by different funtions $\beta$ and $\tilde{\beta}$.

Assume $v \mapsto \beta(v)/v$ and $v \mapsto \tilde{\beta}(v)/v$ are increasing. This is not necessary to obtain the types of results that follow, but it simplifies the notation, and will, in any case, be satisfied in many applications.\footnote{If these functions are not increasing then we can attain similar comparative statics results with respect to the order they induce.} If this holds then each individual's optimal matching set will be defined by a threshold in firm type. Let $\Pi(s_I|v_I,\beta)$ and be the integrand of $\ref{eq:general_obj}$ under $\beta$. Let $v^*(s_I)$ and $\tilde{v}^*(s_I)$ be the threshold types that maximize $\Pi(s_I|v_I,\beta)$ and $\Pi(s_I|v_I,\tilde{\beta})$ respectively; these define the optimal matchings under pointwise maximization. The following comparative statics result is straightforward, but will be useful in applications. 

\begin{lemma}\label{lem:B_compstat_simple}
Let $h(s_I,\cdot)$ be strictly decreasing. Assume $v \mapsto \beta(v)/v$ and $v \mapsto \tilde{\beta}(v)/v$ are increasing. If $\tilde{\beta}(v) = \beta(v)$ for all $v \leq v^*(s_I)$ then $\tilde{v}^*(s_I) \geq v^*(s_I)$. If $\tilde{\beta}(v) = \beta(v)$ for all $v \leq v^*(s_I) + \varepsilon$ for some $\varepsilon > 0$ and $\tilde{\beta}(v) \geq \beta(v)$ for all $v > v^*(s_I) + \varepsilon$, with strict inequality on a positive measure set, then $\tilde{v}^*(s_I) > v^*(s_I)$.
\end{lemma}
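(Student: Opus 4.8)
The plan is to exploit Lemma \ref{lem:general_threshold} to turn the choice of a matching set into the choice of a single cutoff, and then run a comparative-statics argument in that one variable. Since $v \mapsto \beta(v)/v$ and $v \mapsto \tilde\beta(v)/v$ are increasing, Lemma \ref{lem:general_threshold} makes each individual's optimal set an upper-threshold set $\{v_F \ge t\}$. Writing $V(t)$ for its match quality $V_I$ and $B(t;\beta) = \int_{\{v_F \ge t\}}\beta\,dQ_F$ for its $\beta$-weight, the integrand $\Pi$ collapses to
\[
G(t;\beta) = U^I(v_I, V(t)) + h(v_I, V(t))\,B(t;\beta),
\]
with $v^*$ and $\tilde v^*$ the maximizers of $G(\cdot;\beta)$ and $G(\cdot;\tilde\beta)$. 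Since the relevant firm types are positive, $V(\cdot)$ is strictly decreasing in $t$. The economics I am after: raising $\beta$ on the high types makes the retained block of firms more valuable, so the platform is more reluctant to dilute their matches by admitting low-type rivals---doing so raises $V_I$ and, as $h(v_I,\cdot)$ decreases, lowers every retained firm's value---so the cutoff should rise.

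For the weak inequality I would avoid calculus and argue by revealed preference. Optimality gives $G(v^*;\beta) \ge G(\tilde v^*;\beta)$ and $G(\tilde v^*;\tilde\beta) \ge G(v^*;\tilde\beta)$; writing $D(t) = B(t;\beta) - B(t;\tilde\beta)$ and noting $G(t;\beta) - G(t;\tilde\beta) = h(v_I,V(t))\,D(t)$, summing and rearranging yields
\[
h(v_I,V(v^*))\,D(v^*) \ \ge\ h(v_I,V(\tilde v^*))\,D(\tilde v^*).
\]
Suppose $\tilde v^* < v^*$. Because $\tilde\beta = \beta$ on $[\und{v},v^*]$, the two difference-integrals agree, $D(\tilde v^*) = D(v^*)$, so the display becomes $[h(v_I,V(v^*)) - h(v_I,V(\tilde v^*))]\,D(v^*) \ge 0$; the bracket is strictly positive ($V(\tilde v^*) > V(v^*)$ and $h(v_I,\cdot)$ strictly decreasing), forcing $D(v^*) \ge 0$. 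Reading the hypothesis as also carrying $\tilde\beta \ge \beta$ on $(v^*,\bar v]$ (it is the inframarginal term $h_2\cdot B$, not the direct term, that drives the result, and it can be signed only under this comparison), $D(v^*) \le 0$, hence $D(v^*)=0$, so the objectives coincide and $\tilde v^* = v^*$---a contradiction. Thus $\tilde v^* \ge v^*$.

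To get strictness I would localize at $v^*$. The marginal gain from lowering the cutoff at $t$ is, up to a positive density factor,
\[
\Phi(t;\beta) = t\big[U^I_2(v_I,V(t)) + h_2(v_I,V(t))\,B(t;\beta)\big] + h(v_I,V(t))\,\beta(t),
\]
and $\Phi(v^*;\beta) = 0$ at an interior optimum under $\beta$. Under the stronger hypothesis $\tilde\beta = \beta$ on $[\und{v}, v^* + \varepsilon]$ the direct term is unchanged at $t = v^*$, while
\[
\Phi(v^*;\tilde\beta) - \Phi(v^*;\beta) = v^*\,h_2(v_I,V(v^*))\int_{v^*+\varepsilon}^{\bar v}(\tilde\beta-\beta)\,dQ_F \ <\ 0,
\]
since $h_2 < 0$, $v^* > 0$, and the integral is strictly positive by the positive-measure strict inequality. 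Hence $\Phi(v^*;\tilde\beta) < 0$: at the cutoff $v^*$ the platform strictly prefers to raise the threshold under $\tilde\beta$, so $\tilde v^* > v^*$.

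The hard part is converting these local and variational statements into global ones. The revealed-preference step is robust but only delivers $\tilde v^* \ge v^*$; the strict conclusion rests on (i) interiority of $v^*$, so that $\Phi(v^*;\beta)=0$, and (ii) enough regularity---single-crossing of $\Phi(\cdot;\tilde\beta)$---to ensure the local improvement at $v^*$ is not offset by a distant maximizer, both of which I would draw from the threshold structure of Lemma \ref{lem:general_threshold}. A minor bookkeeping point is the distinction between $\lambda$ and $Q_F$ in $V$ and $B$; it is routine and leaves every sign comparison intact.
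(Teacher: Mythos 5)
Your proof is correct and, at its core, follows the same strategy as the paper's: use Lemma \ref{lem:general_threshold} to reduce the problem to a one-dimensional cutoff and compare the objective under $\beta$ and $\tilde{\beta}$. Where you diverge is instructive. For the weak inequality the paper asserts that for any $t < v^*(s_I)$ the differences $\Pi(t|v_I,\tilde{\beta}) - \Pi(v^*(s_I)|v_I,\tilde{\beta})$ and $\Pi(t|v_I,\beta) - \Pi(v^*(s_I)|v_I,\beta)$ are \emph{equal}; this is not right as written, since the two objectives differ by $h(v_I,V(t))\cdot C$ with $C = \int_{v^*(s_I)}^{\bar{v}}(\tilde{\beta}-\beta)\,dQ_F$, and the factor $h(v_I,V(t))$ varies with $t$. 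The difference-of-differences is $\bigl[h(v_I,V(t)) - h(v_I,V(v^*(s_I)))\bigr]\cdot C$, which has the needed sign exactly when $C \geq 0$. Your revealed-preference step makes this transparent, and your diagnosis is correct: the stated hypothesis ($\tilde{\beta}=\beta$ below the cutoff, nothing above) cannot sign $D(v^*)$, so the lemma must be read as also carrying $\tilde{\beta}\geq\beta$ above the cutoff. That is indeed the intended reading --- it holds in every application (virtual values rise when the platform buys firms or refines its information partition), and it is what makes Lemma \ref{lem:B_compstat_simple2} the symmetric counterpart --- so you have identified a real omission in the statement rather than added a gratuitous assumption. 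For the strict part, the paper offers only a one-line appeal to $h$ strictly decreasing plus the integral inequality; your first-order-condition computation is a legitimate formalization, at the cost of differentiability of $h$, a positive density at $v^*$, and interiority of $v^*$, all of which the lemma leaves implicit (interiority is genuinely necessary: at a corner $v^* = \und{v}$ with slack first-order condition the strict conclusion can fail). One simplification: your worry about distant maximizers is unnecessary, since the weak part already confines every maximizer under $\tilde{\beta}$ to $[v^*,\bar{v}]$ and $\Phi(v^*;\tilde{\beta})<0$ rules out $v^*$ itself, so $\tilde{v}^* > v^*$ follows with no single-crossing argument. The only loose end you share with the paper is tacit uniqueness (or a consistent selection) of the maximizing threshold.
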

\begin{proof}
Abusing notation, for a matching set $s_I$ defined by a threshold $v^*$, denote the integrand by $\Pi(v^*|v_I,\beta)$. If $\tilde{\beta}(v) = \beta(v)$ for all $v \leq v^*(s_I)$, for any $v^* < v^*(s_I)$, we have $\Pi(v^*|v_I,\tilde{\beta}) - \Pi(v^*(s_I)|v_I,\tilde{\beta}) = \Pi(v^*|v_I,\beta) - \Pi(v^*(s_I)|v_I,\beta)$. This implies that $\tilde{v}^*(s_I) \geq v^*(s_I)$. 

The second part of the lemma, the strict inequality, follows from the fact that $h$ is strictly decreasing and $\int_{v^*}^{\bar{v}} \beta(v) dQ_F(v) > \int_{v^*}^{\bar{v}} \beta(v) dQ_F(v)$ under the stated assumptions. 
\end{proof}

On the other hand, if the increase in $\beta$ occurs for firms below $v^*(s_I$ than the oposite conclusion will hold. The proof is symmetric to that of Lemma \ref{lem:B_compstat_simple}.

\begin{lemma}\label{lem:B_compstat_simple2}
Let $h(s_I,\cdot)$ be strictly decreasing. Assume $v \mapsto \beta(v)/v$ and $v \mapsto \tilde{\beta}(v)/v$ are increasing. If $\tilde{\beta}(v) = \beta(v)$ for all $v > v^*(s_I)$ then $\tilde{v}^*(s_I) \leq v^*(s_I)$. If $\tilde{\beta}(v) = \beta(v)$ for all $v \geq v^*(s_I) - \varepsilon$ for some $\varepsilon > 0$ and $\tilde{\beta}(v) \geq \beta(v)$ for all $v < v^*(s_I) - \varepsilon$, with strict inequality on a positive measure set, then $\tilde{v}^*(s_I) < v^*(s_I)$.
\end{lemma}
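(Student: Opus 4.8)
The plan is to mirror the proof of Lemma \ref{lem:B_compstat_simple}, using the threshold reduction of Lemma \ref{lem:general_threshold}. Since $v\mapsto\beta(v)/v$ and $v\mapsto\tilde\beta(v)/v$ are increasing, that lemma lets me replace the choice of a matching set by the choice of a single scalar cutoff: the individual's optimal set under either payoff function is an upper interval $\{v_F\ge\tau\}$. Write $\Pi(\tau\mid v_I,\beta)$ for the integrand of \eqref{eq:general_obj} evaluated at the set $\{v_F\ge\tau\}$; it depends on the matching only through the value $V_I(\tau)$ of that set and through $\int_{\{v_F\ge\tau\}}\beta\,dQ_F$. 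Let $v^*=v^*(s_I)$ and $\tilde v^*=\tilde v^*(s_I)$ be the maximizers of $\Pi(\cdot\mid v_I,\beta)$ and $\Pi(\cdot\mid v_I,\tilde\beta)$.

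For the weak inequality I would argue as follows. Fix any cutoff $\tau\ge v^*$. The set $\{v_F\ge\tau\}$ contains only firms with $v_F\ge\tau> v^*$, and on this range $\tilde\beta=\beta$ by hypothesis; hence both $V_I(\tau)$ and $\int_{\{v_F\ge\tau\}}\tilde\beta\,dQ_F=\int_{\{v_F\ge\tau\}}\beta\,dQ_F$ are unchanged, so $\Pi(\tau\mid v_I,\tilde\beta)=\Pi(\tau\mid v_I,\beta)$ for every $\tau\ge v^*$. Because $v^*$ maximizes $\Pi(\cdot\mid v_I,\beta)$, no cutoff above $v^*$ can beat $v^*$ under $\tilde\beta$ either, whence $\tilde v^*\le v^*$. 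This is the exact mirror of the equality step in Lemma \ref{lem:B_compstat_simple}, and it is clean precisely because the perturbation of $\beta$ is confined to types the comparison sets $\{v_F\ge\tau\}$, $\tau\ge v^*$, do not contain.

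For the strict inequality I would run the wedge computation underlying Lemma \ref{lem:B_compstat_simple}. For a cutoff $\tau<v^*-\varepsilon$ low enough to capture the positive-measure set on which $\tilde\beta>\beta$, the $U^I$ terms cancel and, using $\tilde\beta=\beta$ on $[v^*-\varepsilon,\bar v]$, the difference of gains collapses to
\begin{equation*}
\big[\Pi(\tau\mid v_I,\tilde\beta)-\Pi(v^*\mid v_I,\tilde\beta)\big]-\big[\Pi(\tau\mid v_I,\beta)-\Pi(v^*\mid v_I,\beta)\big]=h\big(v_I,V_I(\tau)\big)\int_{\tau}^{v^*-\varepsilon}\big(\tilde\beta-\beta\big)\,dQ_F,
\end{equation*}
which is strictly positive since $h$ is strictly positive and the integral is strictly positive. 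Thus lowering the cutoff is strictly more attractive under $\tilde\beta$ than under $\beta$; equivalently $\Pi$ has strictly decreasing differences in the cutoff and the intensity of the change $\beta\to\tilde\beta$, and the monotone-comparative-statics conclusion is $\tilde v^*<v^*$.

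The main obstacle is exactly this last step, and here the symmetry with Lemma \ref{lem:B_compstat_simple} is not perfect. In Lemma \ref{lem:B_compstat_simple} the perturbed firms lie \emph{above} the cutoff and are therefore inframarginal, so the increase in $\beta$ enters the marginal optimality condition at $v^*$ and shifts the cutoff for \emph{any} positive perturbation. Here the perturbed firms lie strictly \emph{below} $v^*-\varepsilon$, hence outside the matching set, so they do not appear in the first-order condition at $v^*$: a small increase in $\beta$ leaves $v^*$ a critical point and need not dislodge it. To make $\tilde v^*<v^*$ rigorous I would therefore need more than the stated hypotheses — either a largeness condition guaranteeing that the wedge $h(v_I,V_I(\tau))\int_{\tau}^{v^*-\varepsilon}(\tilde\beta-\beta)\,dQ_F$ exceeds the $\beta$-optimality gap $\Pi(v^*\mid v_I,\beta)-\Pi(\tau\mid v_I,\beta)$ for some admissible $\tau$, or a strict single-crossing (interval-dominance) and uniqueness hypothesis on $\Pi$ that upgrades the weak comparative static to a strict one. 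I would flag this gap explicitly and state the strict conclusion under one such qualification.
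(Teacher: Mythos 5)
Your proof of the weak inequality is correct and is exactly what the paper intends: the paper gives no separate argument for this lemma beyond the remark that the proof is ``symmetric'' to that of Lemma \ref{lem:B_compstat_simple}, and your observation that $\Pi(\tau\mid v_I,\tilde\beta)=\Pi(\tau\mid v_I,\beta)$ for every cutoff $\tau\ge v^*(s_I)$ --- so that no such cutoff can overtake $v^*(s_I)$ after the change --- is the mirrored step. As you note, the mirror image is actually cleaner here: because the perturbed firms lie outside every comparison set $\{v_F\ge\tau\}$, $\tau\ge v^*(s_I)$, the two objectives agree \emph{identically} on that region, whereas in Lemma \ref{lem:B_compstat_simple} the ``equality'' of payoff differences claimed in the paper's proof is really an inequality that needs the sign of $\tilde\beta-\beta$ on the included firms.

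The gap you flag in the strict part is genuine, and it is a defect of the lemma itself rather than of your argument: the invoked symmetry breaks down exactly where you say. In Lemma \ref{lem:B_compstat_simple} the perturbed firms are inframarginal, so $\int_{v^*}^{\bar v_F}(\tilde\beta-\beta)\,dQ_F>0$ enters the first-order condition at $v^*$ through the term multiplying $h_2$, and any positive perturbation strictly dislodges the optimum. Here, for every $\tau\ge v^*(s_I)-\varepsilon$ one has $\tilde\beta(\tau)=\beta(\tau)$ and $\int_\tau^{\bar v_F}\tilde\beta\,dQ_F=\int_\tau^{\bar v_F}\beta\,dQ_F$, so $\Pi(\cdot\mid v_I,\tilde\beta)$ and $\Pi(\cdot\mid v_I,\beta)$ coincide on $[v^*(s_I)-\varepsilon,\bar v_F]$: the point $v^*(s_I)$ remains a critical point, and no local argument applies. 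A global argument cannot rescue the statement either. Take $\beta$ very negative below $v^*(s_I)-\varepsilon$ (the natural case when $\beta$ is a virtual value) and let $\tilde\beta$ exceed it there by a small amount while remaining very negative; then for every $\tau<v^*(s_I)-\varepsilon$ the wedge $h(v_I,V_I(\tau))\int_\tau^{v^*(s_I)-\varepsilon}(\tilde\beta-\beta)\,dQ_F$ is dwarfed by the optimality gap $\Pi(v^*(s_I)\mid v_I,\beta)-\Pi(\tau\mid v_I,\beta)$, so the maximizer does not move and $\tilde v^*(s_I)=v^*(s_I)$, contradicting the strict conclusion. Hence the second sentence of the lemma is false as stated and needs an additional hypothesis of the kind you propose (e.g.\ that the wedge exceed the optimality gap at some admissible cutoff, as when the perturbation makes the excluded firms profitable to include); flagging this rather than asserting the inequality was the right call. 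The only slip in your write-up is the transitional claim that ``strictly decreasing differences'' already deliver $\tilde v^*<v^*$ --- the differences are only weakly decreasing on $[v^*(s_I)-\varepsilon,\bar v_F]$, which is precisely why the strict conclusion fails --- but you retract it immediately, so the substance stands.
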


Suppose that there is no direct individual-side component to platform payoffs, so $U^I = 0$. For example, an online retail platform that generates revenue by charging fees to sellers, but which allows customers free access to the site. Under this assumption we can draw stronger comparative statics conclusions. Proposition \ref{prop:B_compstat} is relevant for considering the effects of technological change on the matches. 

\begin{proposition}\label{prop:B_compstat}
Assume $h(v_I,\cdot)$ is decreasing and differentiable for all $v_I$, $U^I = 0$, $\beta(v)/v$ is increasing, and $\int_{\und{v}}^{\bar{v}} \beta(v)dv \geq 0$.\footnote{A similar result holds if $\beta(v)/v$ is not increasing, we just have to define $\alpha$ to be increasing in the order on types induced by $\beta(v)/v$.} Suppose there exists an increasing and strictly positive function $\alpha$ such $\tilde{\beta}(v) = \alpha(v) \beta(v)$. Then under pointwise maximization all individuals and all firms recieve smaller matching sets under $\tilde{\beta}$ than under $\beta.$
\end{proposition}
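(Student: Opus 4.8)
The plan is to reduce each individual's pointwise problem to the choice of a single scalar threshold and then run a monotone-comparative-statics argument in that threshold. Since $U^I = 0$, the integrand $\Pi(s_I \mid v_I)$ is simply $h(v_I, V_I(s_I))\cdot\int_{s_I}\beta(v_F)\,dQ_F(v_F)$, and because $\beta(v)/v$ is increasing, Lemma~\ref{lem:general_threshold} says the optimal set for each individual is an upper interval $\{v_F \ge v^*\}$. Writing $V(w) = \int_w^{\bar v} v\,dQ_F(v)$ and $B_\beta(w) = \int_w^{\bar v}\beta(v)\,dQ_F(v)$, the problem for a fixed individual becomes $\max_w H(w)B_\beta(w)$, where $H(w) := h(v_I, V(w))$. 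The structural observation I would isolate first is that $H$ does not depend on $\beta$ and is positive and increasing in $w$ (since $V$ is decreasing in $w$ and $h(v_I,\cdot)$ is decreasing). Thus replacing $\beta$ by $\tilde\beta = \alpha\beta$ alters only the factor $B$, not $H$; likewise the partial derivative $h_2 = h_2(v_I,V(w)) \le 0$ is $\beta$-independent at a fixed $w$. This is the fact that makes the cancellation below clean.

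Second, I would pin down the sign of $\beta$ on the optimal matching set. Comparing to the empty set (threshold $\bar v$), which yields value $0$, optimality forces $H(v^*)B_\beta(v^*) \ge 0$, hence $B_\beta(v^*)\ge 0$; the interior first-order condition $H(v^*)\beta(v^*) = |h_2|\,v^*\,B_\beta(v^*)$ then gives $\beta(v^*) \ge 0$, and since $\beta(v)/v$ is increasing with $v>0$ this yields $\beta(v) \ge 0$ for every $v$ in the matching set $[v^*,\bar v]$. The assumption $\int_{\und v}^{\bar v}\beta \ge 0$ is what guarantees a nondegenerate (nonempty) optimum so that ``smaller'' is meaningful; the degenerate case of an empty optimal set makes the conclusion trivial.

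Third, the heart of the argument is to differentiate and compare. One computes $\Pi'_\beta(w) = q_F(w)\bigl[\,|h_2|\,w\,B_\beta(w) - H(w)\beta(w)\bigr]$, and the analogous expression under $\tilde\beta$. Using $\tilde\beta(w)=\alpha(w)\beta(w)$ and $B_{\tilde\beta}(w) = \int_w^{\bar v}\alpha\beta\,dQ_F$, a direct substitution gives, for every $w$ in the matching set, $\Pi'_{\tilde\beta}(w) - \alpha(w)\Pi'_\beta(w) = q_F(w)\,|h_2|\,w\int_w^{\bar v}[\alpha(v)-\alpha(w)]\beta(v)\,dQ_F(v)$, whose right-hand side is nonnegative because $\alpha$ is increasing (so $\alpha(v)-\alpha(w)\ge 0$ on $[w,\bar v]$) and $\beta \ge 0$ there by the previous step. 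Since $\alpha > 0$, this shows that wherever the $\beta$-objective is nondecreasing the $\tilde\beta$-objective is also nondecreasing; evaluating at the old optimum $v^*$ (where $\Pi'_\beta(v^*)=0$) gives $\Pi'_{\tilde\beta}(v^*)\ge 0$. This single-crossing property in $(w,\beta)$ delivers $\tilde v^*(v_I) \ge v^*(v_I)$ for every individual, so individual matching sets shrink under $\tilde\beta$. The firm-side conclusion is then immediate: by reciprocity of the matching, exactly as in Corollary~\ref{cor2.2}, each firm of type $v_F$ is matched precisely with the individuals whose threshold satisfies $v^*(v_I)\le v_F$, so once every individual threshold weakly rises, the set of individuals matched to any given firm weakly shrinks.

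The main obstacle is the rigorous passage from ``derivative nonnegative at the old optimum'' to ``the maximizer moves right,'' since $H\cdot B$ need not be globally quasiconcave. I would avoid second-order conditions and instead rely on the single-crossing inequality above, which holds at every $w$ in the matching set and hence yields the monotone comparative static in the sense of Milgrom--Shannon. Care is also needed where $\beta$ is negative (very large sets below $v^*$); I would handle this by first establishing that the optimum lies in the region where $\beta \ge 0$, so the single-crossing inequality is only ever invoked on that region.
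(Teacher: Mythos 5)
Your proposal is correct in substance and follows essentially the same route as the paper: reduce each individual's problem to a scalar threshold via Lemma \ref{lem:general_threshold}, differentiate in the threshold, establish the multiplicative inequality $\Pi'(w\mid\tilde\beta) \geq \alpha(w)\,\Pi'(w\mid\beta)$, and conclude by monotone comparative statics, with the firm-side statement following by reciprocity as in Corollary \ref{cor2.2}. The one genuine difference is how the key integral comparison is justified. You prove $B_{\tilde\beta}(w) - \alpha(w)B_\beta(w) = \int_w^{\bar v}\left[\alpha(v)-\alpha(w)\right]\beta(v)\,dQ_F(v) \geq 0$ from \emph{pointwise} nonnegativity of $\beta$ on the matching set, derived from the first-order condition at the optimum together with $\beta(v)/v$ increasing and $v>0$. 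The paper instead invokes Lemma \ref{lem:integral_ineq}, whose hypothesis ($\int_x^{\bar v}\beta\,dQ_F \geq 0$ for all $x$) is what the assumptions ``$\beta/v$ increasing'' and ``$\int\beta \geq 0$'' are there to deliver; that version makes the inequality hold on the whole domain at once, whereas yours holds only on $\{\beta \geq 0\}$ and therefore requires your extra localization step. That step does close: your FOC argument applies verbatim to $\tilde\beta = \alpha\beta$, which has the same sign as $\beta$, so both optima lie in $\{\beta\geq 0\}$ and the comparison can be run on that interval. Your reading of $\int\beta\geq 0$ as mere nondegeneracy is a byproduct of this variant.

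One flag: your closing appeal to Milgrom--Shannon is the wrong theorem. The weak property you extract --- wherever $\Pi'(\cdot\mid\beta)\geq 0$, also $\Pi'(\cdot\mid\tilde\beta)\geq 0$ --- is \emph{not} sufficient for argmax monotonicity when the objective is not quasiconcave (an objective can track $\Pi_\beta$ upward, crash where $\Pi_\beta$ declines, and end up maximized strictly to the left), and the derivative inequality does not imply Milgrom--Shannon single-crossing of differences. What it is, exactly, is the sufficient condition for the interval dominance order in \cite{quah2009comparative} (their Proposition 2), whose Theorem 1 then yields the strong-set-order comparison of maximizers; this is how the paper concludes, and your proof is complete once that substitution is made.
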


The proof of Proposition \ref{prop:B_compstat} makes use of the following result (see \cite{quah2009comparative} for a proof).
\begin{lemma}\label{lem:integral_ineq}
Suppose $[x',x'']$ is a compact interval or $\mathbb{R}$ and that $\alpha$ and $k$ are real valued functions on $[x',x'']$, with $k$ integrable and $\alpha$ increasing (and thus integrable as well). If $\int_x^{x''} h(t)dt \geq 0$ for all $x \in [x',x'']$ then 
\begin{equation*} 
\int\limits_{x'}^{x''} \alpha(t)h(t)dt \geq \alpha(x')\int\limits_{x'}^{x''}h(t)dt. 
\end{equation*}
\end{lemma}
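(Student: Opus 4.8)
The plan is to rewrite the quantity $\int_{x'}^{x''}\alpha(t)h(t)\,dt - \alpha(x')\int_{x'}^{x''}h(t)\,dt$ as a single, manifestly nonnegative integral, exploiting the fact that an increasing $\alpha$ induces a nonnegative Lebesgue--Stieltjes measure $d\alpha$. First I would define the tail integral $H(x) = \int_x^{x''} h(t)\,dt$, so that the hypothesis reads exactly as $H(x)\geq 0$ for all $x\in[x',x'']$, together with the endpoint identity $H(x'')=0$. Note that $H$ is continuous, being the integral of an integrable function $h$.

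The key step is an integration by parts. Since $\alpha$ is increasing it has bounded variation, so the Riemann--Stieltjes (equivalently Lebesgue--Stieltjes) integration-by-parts formula applies to the pair $\alpha$ and $H$. Writing $h(t)\,dt = -\,dH(t)$, I would compute
\[
\int_{x'}^{x''}\alpha(t)h(t)\,dt = -\int_{x'}^{x''}\alpha(t)\,dH(t) = -\bigl[\alpha(t)H(t)\bigr]_{x'}^{x''} + \int_{x'}^{x''}H(t)\,d\alpha(t).
\]
The boundary term collapses: $\alpha(x'')H(x'')=0$, while $-\bigl(-\alpha(x')H(x')\bigr)=\alpha(x')H(x')=\alpha(x')\int_{x'}^{x''}h(t)\,dt$. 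Hence
\[
\int_{x'}^{x''}\alpha(t)h(t)\,dt = \alpha(x')\int_{x'}^{x''}h(t)\,dt + \int_{x'}^{x''}H(t)\,d\alpha(t),
\]
and since $H\geq 0$ pointwise while $d\alpha$ is a nonnegative measure (because $\alpha$ is increasing), the final integral is $\geq 0$, which is exactly the claim. An equivalent route that avoids boundary bookkeeping is to write $\alpha(t)=\alpha(x')+\int_{(x',t]}d\alpha(s)$, substitute, and apply Tonelli's theorem after reversing the order of integration, obtaining $\int_{(x',x'']}H(s)\,d\alpha(s)\geq 0$ directly.

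The main obstacle is one of rigor rather than of ideas: since $\alpha$ is assumed only increasing (not continuous or differentiable), I cannot write $d\alpha=\alpha'\,dt$ and must carry out the integration-by-parts or Fubini step within the Stieltjes framework. Continuity of $H$ is what makes this clean, since then $\alpha$ and $H$ share no coincident jumps and the parts formula holds without correction terms. The remaining care concerns the unbounded case $[x',x'']=\mathbb{R}$: there one must use that $\alpha h$ is integrable to guarantee convergence of the relevant integrals, and verify that the boundary contribution $\alpha(t)H(t)$ vanishes in the limit (using $H\to 0$ at the right endpoint together with integrability). This is the only point where the integrability hypothesis on the product is genuinely needed; everything else is a direct consequence of the sign of $H$ and the positivity of $d\alpha$.
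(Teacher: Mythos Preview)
Your argument is correct: defining $H(x)=\int_x^{x''}h(t)\,dt$, integrating by parts against the nonnegative Lebesgue--Stieltjes measure $d\alpha$, and reading off $\int_{x'}^{x''}\alpha h = \alpha(x')H(x') + \int H\,d\alpha \geq \alpha(x')H(x')$ is exactly the standard proof.

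The paper does not supply its own proof of this lemma; it simply cites \cite{quah2009comparative}. The argument there is the same Stieltjes integration-by-parts (equivalently, the Fubini rewriting you sketch as an alternative), so your proposal matches the cited source. One minor point: the ``$\mathbb{R}$'' in the statement is almost certainly a typo for ``of $\mathbb{R}$'' (the notation $[x',x'']$ already forces a compact interval), so your discussion of the unbounded case and integrability of the product $\alpha h$ is unnecessary here, though it does no harm.
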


Proposition \ref{prop:B_compstat} follows from Proposition 2 and Theorem 1 of \cite{quah2009comparative}, and Lemma \ref{lem:integral_ineq}.

\begin{proof}\textit{Proposition \ref{prop:B_compstat}}.
Consider an idividual with type $v_I$. Lemma \ref{lem:general_threshold} means that the individual's matching set will be defined by a threshold firm type $v^*$. Then, abusing the definition of $V_I$, we can write
\begin{equation*} 
\Pi(v^*|\tilde{\beta}) = h(v_I, V_I(v^*)) \cdot \int\limits_{v^*}^{\bar{v}_F} \tilde{\beta}(v) q_F(v) dv 
\end{equation*}
where $V_I(v_*) = \int_{v^*}^{\bar{v}_F} v q_F(v)dv$. The derivative of the objective with respect to $v^*$ is
\begin{align*} 
\Pi'(v^*|\tilde{\beta}) &= -h_2(v_I,V_I(v^*))v^*q_F(v^*) \cdot \int\limits_{v^*}^{\bar{v}_F} \tilde{\beta}(v) q_F(v) dv - h(v_I,V_I(v^*)) \cdot \tilde{\beta}(v^*)q_F(v^*) \\
& \geq -h_2(v_I,V_I(v^*))v^*q_F(v^*)\alpha(v^*) \cdot \int\limits_{v^*}^{\bar{v}_F} \tilde{\beta}(v) q_F(v) dv - h(v_I,V_I(v^*)) \cdot\alpha(v^*) \tilde{\beta}(v^*)q_F(v^*) \\
& = \alpha(v^*)\Pi'(v^*|\beta).
\end{align*}
Proposition 2 and Theorem 1 of of \cite{quah2009comparative} then implies that the optimal threshold is higher for $\tilde{\beta}$ than $\beta$, which means that all individuals recieve smaller matching sets. Then all firms also recieve smaller matching sets.
\end{proof}

\section{Applications}\label{sec:applications}

\subsection{Cable packages}

A monopolistic multi-channel video program distributor (MVPD), such as DirectTV or Comcast, faces a population of viewers with unknown values for programming. The MVPD offers a menu of packages of different channels to viewers. At the same time, the MVPD negotiates carriage fees with channels (referred to as video programmers in the IO literature). Video programmers benefit from viewers through advertising revenue, but may differ in terms of their cost of producing programming or their attractiveness to advertisers. From a programmers perspective, a viewer who has access to a large number of additional channels is less valuable, as they are likely to spend less time watching a given channel. The objective of the MVPD is to maximize revenue. 

The payoff of individual $i$ with type $v^I_i$ who purchases package consisting of a set $s$ or channels at a price of $p$ is given by $v^I_i \cdot g_I(|s|) - p$, where $g_I$ is increasing and $g_I(0) = 0$. Implicit in this formalization is the assumption that viewers like all channels equally, and only care about the number of channels they have access to. I will discuss this assumption later on. 

As discussed above, the total revenue to the platform from viewer fees generated by a direct mechanism $s: [\und{v}_I, \bar{v}^v] \mapsto 2^F$, where $A$ is the set of channels, is given by 
\begin{equation}\label{eq:revenue}
\int\limits_{\und{v}_I}^{\bar{v}_I} \left(v - \dfrac{1- Q_I(v)}{q_I(v)} \right) g_I(|s(v)|) dQ_I(v).
\end{equation}
Assume $Q_I$ is regular in the sense of \cite{myerson1981optimal}, so that the virtual values $\varphi(v) = v - (1-Q_I(v))/q_I(v)$ are increasing. Lemma \ref{lem:welfare} applies in this setting. After showing that the conditions of Propositions \ref{prop1.1} and \ref{prop:comp_stat1} are satisfied, we will identify changes to the environment which make all individuals worse off. 

The payoffs of channel $j$ matched with a set $s_F(j)$ of viewers is $U^F(v^j, |s_F(j)|_{S_I})$. I assume that the function $h$ determining individuals' endogenous salience is decreasing; the more channels an individual has access to the less valuable they are. Assume that $U^F$ is supermodular and concave in its second argument. 

The platform bargains with channels over a payment to be made for the right to carry their programming. In this industry, negotiations are generally over a monthly per subscriber ``affiliate fee'' that the MVPD pays the channel for every subscriber who has access the channel, whether the subscriber watches it or not. Assume that the platform is able to commit to a set of cable packages it will offer before negotiations over affiliate fees take place. The outcome of the multilateral bargaining game between the MVPD and channels is described by the Nash-in-Nash solution concept. Assume that if negotiations between a channel and the MVPD break down then the channel receives a payoff of zero. In this scenario the MVPD simply drops this channel from the existing packages. This results in a new allocation for consumers. If the original allocation had a threshold structure then this new allocation remains implementable: monotonicity is preserved when when one of the channels is dropped, holding fixed the remaining channels offered to each type. The total number of consumers who have access to each channel does not change, and so the MVPD's revenue from affiliate fees from other channels does not change. Thus the only effect of such a breakdown is the change it induces on the payments made by individuals for their packages. Thus we can think of negotiations as being simply over the total payment made from the MVPD to the channel, given the menu of bundles it plans to offer individuals. The price negotiated with channel $j$ is given by
\begin{align*}
p^*_j &= \arg\max_{p} \left( p + R(s) - R^O_j(s) \right)^{\beta}(U^F(v^F_j, |s_F(j)|_{S_I})) \\
& = \beta U^F(v^F_j, |s_F(j)|_{S_I}) + (1-\beta)(R^O_j(s) - R(s))
\end{align*}
where $R(s)$ is the individual-side revenue given in (\ref{eq:revenue}) and $R^O_j(s)$ is the individual-side revenue if channel $j$ is dropped. The difference between the two is given by 
\begin{equation*}
R^O(s_F(j)) - R(s) = \int\limits_{s_F(j)} \varphi(v) \cdot\left[g_I(|s_I(v)| - 1) - g_I(|s_I(v)|) \right]dQ_I(v).
\end{equation*}
Using the expression for $p^*_j$, we can write the MVPD firm-side revenue as 
a function of the matching as 
\begin{align*}
\sum_j p^*_j &= \beta \sum_j U^F(v^F_j, |s_F(j)|_{S_I}) + (1-\beta) \sum_j \int\limits_{s_F(j)} \varphi(v) \cdot\left[g_I(|s_I(v)| - 1) - g_I(|s_I(v)|) \right]dQ_I(v)\\
& = \beta \sum_j U^F(v^F_j, |s_F(j)|_{S_I}) + (1-\beta)\int\limits_{v^*_I(1|s)}^{\bar{v}_I} \varphi(v)|s_I(v)|\cdot \left[g_I(|s_I(v)| - 1) - g_I(|s_I(v)|) \right]dQ_I(v)
\end{align*}
where $v^*_I(1|s)$ is the threshold type for firm $1$'s matching set, which is also the lowest type individual that purchases a non-empty package. Then total MVPD revenue is given by
\begin{equation}\label{eq:revenue2}
\begin{split}
\beta \sum_j U^F(&v^F_j, |s_F(j)|_{S_I}) \\
&+ \int_{\und{v}_I}^{\bar{v}_I} \varphi(v)\big( g_I(|s_I(v)|) + (1-\beta)|s_I(v)|\cdot \left[g_I(|s_I(v)| - 1) - g_I(|s_I(v)|) \right]\big)dQ_I(v).
\end{split}
\end{equation}
It is worth pointing out that if $g_I(x) = ax + b$  then the MVPD's objective simplifies to 
\begin{equation*}
\beta \sum_j U^F(v^F_j, |s_F(j)|_{S_I}) + \int\limits_{v^*_I(1|s)}^{\bar{v}_I} \varphi(v)\big( \beta a \cdot |s_I(v)| + b\big)dQ_I(v).
\end{equation*}
In the special case of $a = 1$ and $b = 1$ maximizing this objective is the same as maximizing total (second-best) welfare, given by
\begin{equation*}
\sum_j U^F(v^F_j, |s_F(j)|_{S_I}) + \int\limits_{\und{v}_I}^{\bar{v}_I} \varphi(v)\cdot g_I(|s_I(v)|) dQ_I(v).
\end{equation*}

The general form of the objective in (\ref{eq:revenue2}) will satisfy the conditions of Propositions \ref{prop1.1} and \ref{prop:comp_stat1} if $U^F(v,\cdot)$ is concave and the integrand in the second term of (\ref{eq:revenue2}) is supermodular. This second requirement will be satisfied iff $g_I(x) + (1-\beta)x\cdot \left[g_I(x - 1) - g_I(x) \right]$ is increasing. A sufficient condition for this, given that $g_I$ is increasing, is that $g_I$ is concave.
\begin{lemma}\label{lem6}
If $g_I$ is concave then $g_I(x) + (1-\beta)x\cdot \left[g_I(x - 1) - g_I(x) \right]$ is increasing for $x \geq 1$.
\end{lemma}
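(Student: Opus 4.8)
The plan is to set $\phi(x) = g_I(x) + (1-\beta)\,x\,[g_I(x-1)-g_I(x)]$ and show $\phi'(x)\ge 0$ for $x\ge 1$. Write $c=1-\beta$; since $\beta\in[0,1]$ is a bargaining weight, both $c\ge 0$ and $1-c=\beta\ge 0$. Assuming $g_I$ is differentiable (the general concave case and the discrete analog are handled below), I would compute
\[
\phi'(x) = g_I'(x) + c\,[g_I(x-1)-g_I(x)] + c\,x\,[g_I'(x-1)-g_I'(x)].
\]
Of the three terms, the first is nonnegative because $g_I$ is increasing, and the third is nonnegative because concavity makes $g_I'$ decreasing, so $g_I'(x-1)\ge g_I'(x)$. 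Only the middle term $c\,[g_I(x-1)-g_I(x)]$ is potentially negative, and the whole content of the lemma is that it is dominated by the other two.

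The key step is to absorb that negative term using concavity. Since $g_I'$ is decreasing, $g_I(x)-g_I(x-1)=\int_{x-1}^{x} g_I'(t)\,dt \le g_I'(x-1)$, hence $c\,[g_I(x-1)-g_I(x)] \ge -c\,g_I'(x-1)$. Substituting $g_I'(x-1)=g_I'(x)+[g_I'(x-1)-g_I'(x)]$ then yields
\begin{align*}
\phi'(x) &\ge g_I'(x) - c\,g_I'(x) - c\,[g_I'(x-1)-g_I'(x)] + c\,x\,[g_I'(x-1)-g_I'(x)] \\
&= \beta\,g_I'(x) + c\,(x-1)\,[g_I'(x-1)-g_I'(x)].
\end{align*}
Both summands on the right are nonnegative for $x\ge 1$: $\beta\,g_I'(x)\ge 0$, and $c\,(x-1)\,[g_I'(x-1)-g_I'(x)]\ge 0$ because $c\ge 0$, $x-1\ge 0$, and $g_I'$ is decreasing. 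Therefore $\phi'(x)\ge 0$, which is the claim.

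I expect the only subtlety to be the differentiability assumption used in the computation. For a general concave $g_I$ the same argument runs with difference quotients, since concavity is precisely the statement that the secant slope of $g_I$ over $[x-1,x]$ is at most the left-hand slope at $x-1$, which is what the bound $g_I(x)-g_I(x-1)\le g_I'(x-1)$ encodes. For the discrete formulation (integer numbers of channels) the result is if anything cleaner: a direct computation gives $\phi(x+1)-\phi(x) = \beta\,[g_I(x+1)-g_I(x)] + c\,x\,\big([g_I(x)-g_I(x-1)]-[g_I(x+1)-g_I(x)]\big)$, and both summands are nonnegative for $x\ge 1$ because $g_I$ is increasing and its successive forward differences are nonincreasing by concavity. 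So no case-splitting on $x$ is needed, and concavity of $g_I$ together with $\beta,\,c\in[0,1]$ is exactly what drives the monotonicity.
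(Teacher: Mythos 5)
Your proof is correct, and it takes a genuinely different route from the paper's. The paper never differentiates: for arbitrary $x''>x'$ it works with the difference $\phi(x'')-\phi(x')$ directly, splits into cases according to the sign of $x''[g_I(x''-1)-g_I(x'')]-x'[g_I(x'-1)-g_I(x')]$, uses $(1-\beta)\le 1$ to reduce to the $\beta=0$ case, and finishes by regrouping terms and applying two secant-slope inequalities implied by concavity. You instead differentiate, isolate the single potentially negative term, and absorb it via $g_I(x)-g_I(x-1)\le g_I'(x-1)$, which yields the lower bound $\phi'(x)\ge \beta\,g_I'(x)+(1-\beta)(x-1)[g_I'(x-1)-g_I'(x)]$ and makes transparent exactly where each hypothesis enters ($g_I$ increasing, concavity, $\beta\in[0,1]$, $x\ge1$). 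Your discrete identity $\phi(x+1)-\phi(x)=\beta[g_I(x+1)-g_I(x)]+(1-\beta)x\bigl([g_I(x)-g_I(x-1)]-[g_I(x+1)-g_I(x)]\bigr)$ is an exact decomposition whose second summand is $(1-\beta)$ times the $\beta=0$ difference, so it subsumes the paper's case split in one line; this is arguably the cleanest form of the argument, and it is the form that matters in the application, where the argument of $g_I$ counts channels. What the paper's difference argument buys is generality at no extra cost: it applies verbatim to non-differentiable concave $g_I$ and to arbitrary real increments, whereas your main computation needs the standard patch for general concave functions (one-sided derivatives exist, nondifferentiability points are countable, $\phi$ is locally Lipschitz, and $\phi'\ge0$ a.e. implies monotonicity), which you gesture at but leave informal. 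Incidentally, your derivation doubles as a check on the paper's write-up, whose first concavity display has the inequalities reversed (it should read $g_I(x')-g_I(x'-1)\ge g_I(x'')-g_I(x''-1)\ge 0$); the regrouped argument that follows there is nonetheless sound.
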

\begin{proof}
For $x'' > x'$ we want to show 
\begin{equation*}
g_I(x'') - g_I(x') + (1-\beta)x''\cdot \left[g_I(x'' - 1) - g_I(x'') \right]  - (1-\beta)x'\cdot \left[g_I(x' - 1) - g_I(x') \right] \geq 0.
\end{equation*}

By concavity, $g_I(x' - 1) - g_I(x') \geq g_I(x'' - 1) - g_I(x'') \geq 0$. If $x''\cdot \left[g_I(x'' - 1) - g_I(x'') \right] \geq x'\cdot \left[g_I(x' - 1) - g_I(x') \right]$ then we are done, so suppose this does not hold. Under this assumption, since $(1-\beta) < 1$ it suffices to show  
\begin{equation*}
g_I(x'') - g_I(x') + x''\cdot \left[g_I(x'' - 1) - g_I(x'') \right]  - x'\cdot \left[g_I(x' - 1) - g_I(x') \right] \geq 0.
\end{equation*}
Adding and subtracting $x'[g_I(x''- 1) - g_I(x')]$, we can rewrite the left hand side as
\begin{equation*}
g_I(x'') - g_I(x') + (x''- x')\cdot \big[g_I(x'' - 1) - g_I(x'') \big]  - x'\cdot \big[g_I(x' - 1) - g_I(x') - [g_I(x'' - 1) - g_I(x'')]\big]
\end{equation*}
If $g_I$ is concave, $g_I(x'') - g_I(x'' - 1) \leq [g_I(x'' - 1) - g_I(x' -1)]/(x'' - x')$. Using this inequality, we obtain
\begin{align*}
&g_I(x'') - g_I(x') + (x''- x')\cdot \big[g_I(x'' - 1) - g_I(x'') \big]  - x'\cdot \big[g_I(x' - 1) - g_I(x') - [g_I(x'' - 1) - g_I(x'')]\big] \\
&\geq g_I(x'') - g(x'' - 1) - [g_I(x') - g(x'-1)] - x'\cdot \big[g_I(x' - 1) - g_I(x') - [g_I(x'' - 1) - g_I(x'')]\big]\\
&= (1-x')[g_I(x'-1) - g_I(x')] - (1 - x')\cdot[g_I(x'' - 1) - g_I(x'')]\\
& \geq 0 
\end{align*}
where the first inequality follows from $g_I(x'') - g_I(x'' - 1) \leq [g_I(x'' - 1) - g_I(x' -1)]/(x'' - x')$, and the final inequality from $x' \geq 1$ and concavity of $g_I$. 
\end{proof}

The discussion thus far of the MVPD problem is summarized in the following proposition

\begin{proposition}
Assume $g_I$ is concave and increasing and $U^F$ is supermodular and concave in its second argument. Then the MVPD's objective in (\ref{eq:revenue2}) satisfies the conditions of Propositions \ref{prop1.1} and \ref{prop:comp_stat1}.
\end{proposition}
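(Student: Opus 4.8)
The plan is to recognize the objective in (\ref{eq:revenue2}) as an instance of the canonical platform objective (\ref{eq:objective}) and then verify each hypothesis of the two propositions in turn. Writing $G(x) = g_I(x) + (1-\beta)x\left[g_I(x-1) - g_I(x)\right]$ for the bracketed individual-side term, the objective (\ref{eq:revenue2}) takes exactly the form (\ref{eq:objective}) with firm-side payoff $\tilde{U}^F(v,x) = \beta\, U^F(v,x)$ and individual-side payoff $\tilde{U}^I(v,x) = \varphi(v)\, G(x)$, the individual measure $\lambda$ being replaced by $Q_I$ (a change that does not affect any of the supermodularity-based arguments). Both Propositions \ref{prop1.1} and \ref{prop:comp_stat1} require Supermodularity of the two side payoffs; Proposition \ref{prop:comp_stat1} requires in addition that the firm-side payoff be concave in its second argument and that $h$ be decreasing. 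I would dispatch these requirements side by side.

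On the firm side the verification is immediate. Since $\beta > 0$ and $U^F$ is assumed supermodular and concave in its second argument, $\tilde{U}^F = \beta\, U^F$ inherits both properties, and $h$ is decreasing by the maintained assumption of this subsection. Hence every firm-side hypothesis of both propositions holds with no further work.

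The only genuine content is Supermodularity of the individual-side payoff $\tilde{U}^I(v,x) = \varphi(v)\, G(x)$. The key observation is that $\tilde{U}^I$ factors as a product of a function of $v$ alone and a function of $x$ alone, so for $v'' > v'$ and $x'' > x'$ the supermodularity inequality collapses to $[\varphi(v'') - \varphi(v')]\,[G(x'') - G(x')] \geq 0$. Regularity of $Q_I$ in the sense of \cite{myerson1981optimal} makes $\varphi$ increasing, so the first factor is nonnegative; note that only the monotonicity of $\varphi$, not its sign, matters here, which is important since $\varphi$ may be negative for low types. It therefore suffices to show $G$ is increasing, and this is precisely the conclusion of Lemma \ref{lem6}, which invokes concavity of $g_I$. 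Thus the main (indeed the only nontrivial) step is already resolved by Lemma \ref{lem6}; the rest is the bookkeeping of matching the objective to the canonical form and exploiting the product structure. I would anticipate no real obstacle beyond correctly identifying that the virtual-value weighting $\varphi(v)$ and the match-quality term $G(x)$ separate multiplicatively.

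Combining the three steps, $\tilde{U}^F$ and $\tilde{U}^I$ are both supermodular, $\tilde{U}^F(v,\cdot)$ is concave, and $h$ is decreasing, so the hypotheses of Propositions \ref{prop1.1} and \ref{prop:comp_stat1} are met. As a coda I would remark that this same supermodularity does double duty in the private-information setting of Section \ref{sec:privateinfo}: by the footnote following (\ref{eq:payments}), it also guarantees that the incentive-compatibility monotonicity constraint does not bind, so the threshold characterization remains valid for the second-best problem the MVPD actually solves.
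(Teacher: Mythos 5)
Your proposal is correct and takes essentially the same approach as the paper: the paper likewise treats the firm-side term $\beta U^F$ as trivially inheriting supermodularity and concavity, and reduces the individual-side requirement to supermodularity of the integrand $\varphi(v)\bigl(g_I(x) + (1-\beta)x\left[g_I(x-1)-g_I(x)\right]\bigr)$, which—since regularity makes $\varphi$ increasing—holds iff the bracketed match-quality term is increasing, exactly the content of Lemma \ref{lem6}. Your only additions (making the product structure and the sign-irrelevance of $\varphi$ explicit, and the coda on the non-binding monotonicity constraint) are accurate glosses on what the paper leaves implicit.
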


If we assume that $U^F(v,\cdot)$ is increasing for all $v$ then a sufficient condition for no individuals to be excluded is that $\varphi(\und{v}_I) >0 $. If this holds then all individuals will be made worse off by increasing differences shifts in the payoffs of the highest type channels. 

The assumption that viewers care only about the number of channels they have access to may seem unnatural. High type channels as those which are best able to convert viewers who have access to their channel into profits, which are generated by add revenue. A natural interpretation is that these channels attract a larger portion of the available viewers then low type channels, but this interpretation suggests non-trivial viewer preferences. Viewers with the same ordinal preferences over channels can be accommodated, especially if the ordering corresponds to the ordering on channel types. However if viewers have different ordinal preferences over channels then threshold matchings may not be optimal. None the less, conditional on the MVPD choosing to offer threshold matchings all viewers will prefer larger matching sets. In reality, MVPDs almost always offer a menu of nested bundles. This may be because there is uncertainty about which programs channels will offer, so that viewers do not in fact have strong preferences over channels ex-ante, although they may develop such preferences after purchasing a bundle. It could also be because the distribution of preferences conditional on vertical types $v_I$ is such that profitable screening along this dimension is not possible. In this sense, the model seems to be a reasonable approximation of reality. 

I will now discuss situations in which increasing differences changes in firm payoffs make all individuals worse off.

\vspace{3mm}
\noindent \textit{Changes in viewing patters.} The most obvious cause of an increasing differences change in firm payoffs is technological change - direct shifts in firm payoffs. In this context, technological change could mean better programming or changes in viewing patterns that make some channels relatively more attractive to advertisers than others. A ``high type'' channel in this setting is one that most effectively converts viewers who have access to the channel into add revenue. This could be because of the demographic composition of their viewers or the broadness of their appeal. High type channels can be identified as those offered in the basic cable packages, such as CBS and FOX, whereas low type channels have more niche appeal, such as Animal Planet or HBO. Suppose viewers become less interested in watching the nightly news on NBC, and more interested in watching serial shows such as those offered by HBO. This can be interpreted as an increasing differences change in the payoffs of high type channels relative to NBC. Assuming $U_F$ is affine in its second argument, by Lemmas \ref{lem:affine_welfare} and \ref{lem:welfare}, we would expect such a change to lead to more packages that include HBO, with prices adjusting in a way that makes consumers better off. 

\vspace{3mm}
\noindent \textit{Horizontal mergers.} Suppose there is a merger between two channels. Mergers are often executed because of ``synergies'' between the firms involved. Moreover, firms will often defend the proposed merger against anti-trust challenges by arguing that synergies, in particular those that reduce costs, will benefit consumers. 

As an example to illustrate mergers with cost synergies in the context of this model, suppose that given a matching set $s$, a channel chooses the quality $q$ of its programming to maximize $(r(q) - c(v,q)) \cdot |s|_{S_I}$. Here $|s|_{S_I}$ is interpreted as the channels viewership potential, $r(q)$ is the realized add revenue per potential viewer, and $c$ is the cost per-viewer of producing programming.\footnote{An alternative formulation for channel profits would be $r(q)g(\cdot|s|_{S_I}) -c(v,q)$, i.e. zero marginal costs. In this case however the value function is supermodular, but it will only be concave if $g$ is sufficiently concave.} Assume $r$ is increasing and $c$ is submodular (decreasing differences in $v,q$). The firm's optimal choice of program quality is independent of its viewership, and increasing in $v$, so $U^F(v,x) = \max_q (r(q) - c(v,q)) \cdot |s|_{S_I}$ is supermodular and linear in its second argument. 

Cost synergies entail a reduction in the marginal cost of production for the firms involved. In this context, suppose firms with values $v''$ and $v'$ merge. Because of cost synergies, they will each have new cost functions $\hat{c}$ that are point-wise lower than their original costs. Then their margins will increase, i.e. $\max_q r(q) - \hat{c}(v,q)$ will be larger. This is an increasing differences shift in their payoffs. Suppose there is a merger between the lowest-value channels. Then Lemma \ref{lem:compstat_multi} implies that all cable packages will become larger. If on the other hand the two highest value channels merge then all cable packages will shrink. If there was no exclusion of individuals to begin with then all individuals will be made worse off. One could argue that this welfare analysis is incomplete, since the quality of programming increases following the merger. The positive conclusion regarding mergers of low value firms is unchanged by this consideration. In the case of high value firm mergers, the net effect will depend on how much individuals value higher quality programming. 

\vspace{3mm}
\noindent \textit{Vertical mergers.} Suppose the MVPD purchases a channel. This has two effects on its matching problem. One is that the MVPD will now appropriates the entire surplus generated by the purchased channel. This will have the effect of an increasing differences shift in this channel's payoffs. However the acquisition also affects the negotiations with other firms. If a channel fails to reach an agreement with the MVPD and is dropped from the packages, some viewers who would have had access to both this channel and the MVPD-owned channel will migrate to the latter. This benefits the MVPD, whereas before the merger the MVPD appropriated none of this additional benefit. This change in the outside option of the MVPD is termed the ``bargaining leverage over rivals'' (BLR) effect by \cite{Rogerson2019}. The effect on the structure of bundling will depend on the magnitude of the BLR effect.

If the MVPD purchases firm 1 then the price negotiated with channel $j$ is given by 
\begin{small}
\begin{align*}
\tilde{p}^*_j &= \arg\max_{p} \left(p + R(s) - R^O_j(s) + \theta \left(U^F(v^F_1, |s_F(1)|_{S_I}) - U^F(v^F_1, |s_F(1)|_{\tilde{S}^j_I})\right) \right)^{\beta}(U^F(v^F_j, |s_F(j)|_{S_I}))^{(1-\beta)} \\
& = \beta U^F(v^F_j, |s_F(j)|_{S_I}) + (1-\beta)\theta\left(R^O_j(s) - R(s) + U^F(v^F_1, |s_F(1)|_{\tilde{S}^j_I}) - U^F(v^F_1, |s_F(1)|_{S_I})  \right)
\end{align*}
\end{small}
where $|s_F(j)|_{\tilde{S}^j_I}$ is quality of channel 1's matching set when firm $j$ is dropped from all matchings. The parameter $\theta \in [0,1]$ captures the degree to which the MVPD internalizes the change in the viewership of the purchased channel should negotiations break down with channel $j$. If $\theta = 0$ then there is no MVPD effect. The MVPD's total revenue is 
\begin{small}
\begin{equation*}
\sum_{j=2}^N \tilde{p}^*_j + U^F(v^F_1, |s_F(1)|_{S_I}) + R(s) = \sum_{j = 2}^N p^*_j + (1- \beta)\theta \sum_{j=2}^N \left( U^F(v^F_1, |s_F(1)|_{\tilde{S}^j_I}) - U^F(v^F_1, |s_F(1)|_{S_I}) \right) + R(s) 
\end{equation*}
\end{small}
which can be re-written as 
\begin{equation}\label{eq:blr_revenue}
\begin{split}
\beta \sum_{j=2}^N U^F(&v^F_j, |s_F(j)|_{S_I}) + U^F(v^F_1, |s_F(1)|_{S^j_I}) \\
&+ \int_{\und{v}_I}^{\bar{v}_I} \varphi(v)\big( g_I(|s_I(v)|) + (1-\beta)|s_I(v)|\cdot \left[g_I(|s_I(v)| - 1) - g_I(|s_I(v)|) \right]\big)dQ_I(v) \\
&- (1-\beta)\int\limits_{s_F(1)} \varphi(v) \cdot\left[g_I(|s_I(v)| - 1) - g_I(|s_I(v)|) \right]dQ_I(v) \\
&+ (1- \beta)\theta \sum_{j=2}^N \left( U^F(v^F_1, |s_F(1)|_{\tilde{S}^j_I})- U^F(v^F_1, |s_F(1)|_{S_I}) \right).
\end{split}
\end{equation}
The first two lines of (\ref{eq:blr_revenue}) are the same as the original MVPD revenue in (\ref{eq:revenue2}), except that the weight on channel 1's revenue is now 1 instead of $\beta$. The third and fourth lines complicate the analysis. The third line reflects the fact that the individual side payoffs in the even of a breakdown in negotiations with firm 1 need no longer be considered. The fourth line is the BLR effect. Here payoffs do not have the same form as discussed above, and we can not directly apply or previous results. Under some additional assumptions however we can draw welfare conclusions.\footnote{Just assuming that $U^F(v,\cdot)$ is affine should be enough to do comparative statics, but this will require additional work.} 

First, assume that there is no BLR effect, so $\theta = 0$ and the fourth line of (\ref{eq:blr_revenue}) disappears. Second, suppose $g_I$ affine with slope $a$. Then the second line of (\ref{eq:blr_revenue}) reduces to 
\begin{equation*}
(1-\beta) a \int\limits_{s_F(1)} \varphi(v)dQ_I(v).
\end{equation*}
The only effect of this on the MVPD's problem is to increase the marginal benefit of adding viewers to the matching set of firm 1. Thus the result of Proposition \ref{prop:comp_stat1} will continue to hold. 

\begin{lemma}
Assume that there is no BLR effect, $g_I$ is affine, and $U^F(v, \cdot)$ is concave. Then if there is no exclusion of viewers and the MVPD purchases the highest type channel, all individuals will be worse off. If instead $U^F(v, \cdot)$ is affine and the MVPD purchases the lowest type channel then all individuals will be better off.  
\end{lemma}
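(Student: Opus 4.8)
The plan is to read each acquisition as an increasing differences change in the acquired channel's effective payoff inside the MVPD's objective, and then invoke the comparative statics already established (Proposition \ref{prop:comp_stat1} and Lemma \ref{lem:compstat_affine}), translating the induced change in matching sets into a welfare statement through the envelope representation (\ref{eq:envelope})/Lemma \ref{lem:welfare}. The first thing I would do is pin down exactly how buying a channel perturbs the objective. Setting $\theta = 0$ in (\ref{eq:blr_revenue}) kills the BLR line, and affineness of $g_I$ (slope $a$) collapses the difference $g_I(|s_I(v)|-1) - g_I(|s_I(v)|)$ to the constant $-a$. Relative to the pre-merger objective (\ref{eq:revenue2}), the only surviving changes are then (i) the weight on the acquired channel's payoff rises from $\beta$ to $1$, and (ii) an extra term $(1-\beta)a\int_{s_F}\varphi(v)\,dQ_I(v)$, integrated over that channel's matching set $s_F$, is added. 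Since $U^F(v,\cdot)$ is increasing, (i) is already an increasing differences change, and at the relevant margin (where virtual values are positive — throughout the population under no exclusion and regularity) (ii) strictly reinforces it in the first-order condition (\ref{eq:foc}). Because we only make the acquired channel more prominent, the supermodular order is preserved.

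For the first statement the acquired channel is the highest type, firm $1$, and $U^F(v,\cdot)$ is concave. With $h$ decreasing, Proposition \ref{prop:comp_stat1} applied with $k=1$ gives that every other channel receives a smaller matching set while firm $1$'s grows, so Corollary \ref{cor1.1} implies every individual with type above firm $1$'s threshold $v^*_I(1)$ receives a weakly smaller matching set. No exclusion means firm $1$, which has the lowest threshold by Corollary \ref{cor2.2}, serves the whole population, i.e. $v^*_I(1) = \und{v}_I$, so every individual's matching set shrinks. Since $g_I$ is increasing, individual utility $v\,g_I(|s|) - p$ has $U^I_1 = g_I(|s|)$ increasing in match size; with the lowest type's rent pinned at $V(\und{v}_I) = 0$ by individual rationality, (\ref{eq:envelope}) — equivalently Lemma \ref{lem:welfare} — then delivers that all individuals are worse off.

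For the second statement the acquired channel is the lowest type, firm $N$, and $U^F(v,\cdot)$ is affine. The same reduction shows the acquisition is an increasing differences change in firm $N$'s payoffs, so Lemma \ref{lem:compstat_affine} applies directly: for each individual, firm $N$ is either added to the matching set or the set is unchanged. Hence every individual's matching set weakly grows, the integrand $g_I(|s_I(\cdot)|)$ in (\ref{eq:envelope}) weakly increases pointwise, and with $V(\und{v}_I) = 0$ unchanged all individuals are weakly better off.

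The main obstacle is the first step: the acquisition does not simply replace $U^F$ by some $\hat{U}^F$ of the same match quality, because of the weight change and the extra term (ii), so I must verify that these two modifications enter the acquired channel's first-order condition (\ref{eq:foc}) exactly as an increase in its marginal-benefit term, with no offsetting change elsewhere. This is where affineness of $g_I$ is doing the real work — it is what turns (ii) into a clean monotone function of the acquired channel's matching set and prevents the $g_I(\cdot-1)-g_I(\cdot)$ terms from reshuffling the inframarginal costs borne by the other channels. Once this reduction is secured, Proposition \ref{prop:comp_stat1} and Lemma \ref{lem:compstat_affine} apply verbatim and the welfare translation is routine.
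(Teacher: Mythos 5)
Your proposal is correct and follows essentially the same route as the paper: setting $\theta=0$ to eliminate the BLR term in (\ref{eq:blr_revenue}), using affineness of $g_I$ to reduce the remaining breakdown-payoff term to $(1-\beta)a\int_{s_F(1)}\varphi(v)\,dQ_I(v)$, which (together with the weight change from $\beta$ to $1$) acts purely as an increase in the acquired channel's marginal benefit, and then invoking Proposition \ref{prop:comp_stat1} with Corollary \ref{cor1.1} and Lemma \ref{lem:welfare} for the concave/highest-type case and Lemma \ref{lem:compstat_affine} with the envelope condition (\ref{eq:envelope}) for the affine/lowest-type case. Your explicit flagging of why the extra term does not reshuffle the other channels' inframarginal costs, and of the sign condition on virtual values at the margin, makes precise exactly the step the paper's own discussion treats informally.
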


The BLR effect reduces the price that the platform pays to all channels. However it is not immediately clear how this affects the marginal benefit of increasing bundle size. Analyzing the merger when the BLR effect is present is an interesting objective, but one which I leave for future work.

%------------------------------------------------------------------

\subsection{Platform-mediated monopolistic competition}\label{sec:monopcomp}

Retailers, particularly on-line retailers such as Amazon, exert considerable control over the set of products to which consumers have access. This influence may take the form of placement of products in stores or in search results pages, advertising, or product recommendations, among others. Firms care about how many customers see their product and how many other products these customers see. Additionally, a firm's payoffs will depend on the prices offered by the competing firms to which its customers also have access. The retailer may exert varying levels of control over firm pricing decisions. A grocery store may have a high degree of control, whereas Amazon may have little direct say in the prices set by firms. 

I will consider a version of the classic \cite{dixit1977monopolistic} model of monopolistic competition. There are a continuum of firms and individuals. For simplicity, assume that individuals are identical. I will discuss later how individual heterogeneity can be accommodated. The driving force here will be the desire of the platform to screen firms that have different marginal costs, and thus value access to consumers differently. The platform cannot directly control the prices set by firms, although it may do so indirectly through the choice of the matching sets. This assumption fits best with a model of an online platform such as Amazon. I will show that platform payoffs in this setting reduce to the form studied in \ref{sec:extension}.

\subsubsection{Customer side}

I first describe the demand of a customer who is matched with a set $s$ of firms, where firm $j$ has set price $p(j)$. The individual chooses a quantity $q(j)$ to purchase from firm $j$ and how much money $m$ to hold. The individual's choice solves
\begin{equation}\label{eq:custobjective}
\max_{q(\cdot),m} = m + \dfrac{1}{\theta}\left( \int_s q(j)^{\frac{\sigma - 1}{\sigma}} dj \right)^{\frac{\theta \sigma}{\sigma -1}} \ \ \text{s.t.} \ \ \int_s p(j) q(j) dj + m \leq w
\end{equation}
where $w$ is the individual's wealth, $\sigma > 1$, $\theta \in (0,1)$ and $\sigma(1-\theta) > 1$.\footnote{Preferences of the form in (\ref{eq:custobjective}) have been used by \cite{bagwell2015trade}, \cite{helpman2010labour}, and \cite{helpman1989trade} among others.} Assume that wealth is high enough that individuals hold a positive amount of money. Define the price index for matching set $s$ as
\begin{equation*}
P(s) = \left(\int_s p(j)^{1-\sigma} dj\right)^{\frac{1}{1 - \sigma}}.
\end{equation*} 
Then demand for money can be written as 
\begin{equation*}
m(s) = w - P(s)^{\frac{\theta}{\theta - 1}} 
\end{equation*}
and demand for the product of firm $j$ is given by\footnote{Derivations can be found in Appendix \ref{sec:pmmc}}
\begin{equation*}
q(j|s) = p(j)^{-\sigma} \cdot P(s)^{\frac{\sigma(1 - \theta) - 1}{1 - \theta}}.
\end{equation*}
The indirect utility of an individual is given by 
\begin{equation*}
\dfrac{1-\theta}{\theta} P(s)^{\frac{\theta}{\theta -1}} + w.
\end{equation*}

\subsubsection{Firm side}
A firm matched with a set $s_F(j)$ of customers faces an aggregate demand given by 
\begin{equation*}
p(j)^{-\sigma} \cdot \int\limits_{s_F(j)} P(s_I(i))^{\frac{\sigma(1 - \theta) - 1}{1 - \theta}} di.
\end{equation*}
Define $h(s) \equiv  P(s_I(i))^{\frac{\sigma(1 - \theta) - 1}{1 - \theta}}$. Assume that marginal costs are constant for all firms, but differ across firms. Marginal costs are the firm's private information, which the platform will like to elicit. For simplicity, assume that firms have no fixed cost. \footnote{Adding fixed costs, even if they differ across firms, does not change anything as long as all firms want to continue operating.} The firm's price setting problem can be stated as
\begin{equation*}
\max_p \ p^{1-\sigma} \cdot \int\limits_{s_F(j)} h(i) di -  c_j p^{-\sigma} \cdot \int\limits_{s_F(j)} h(i) di.
\end{equation*}
The solution to this problem is to set $p(j) = \frac{\sigma}{\sigma - 1} c_j$. The fact that the price does not depend on the matching set or the prices of other firms is the main benefit of assuming CES utility and constant marginal costs. Define $v_j \equiv c_j^{1-\sigma}$ (recall that $\sigma > 1$). The firms profits from matching set $s_F(j)$ are then given by 
\begin{equation*}
v_j \gamma \cdot \int\limits_{s_F(j)} h(i)di
\end{equation*} 
where $\gamma = \left(\frac{\sigma}{\sigma-1}\right)^{1-\sigma} - \left(\frac{\sigma}{\sigma-1}\right)^{-\sigma} > 0$. 

Using the firm pricing decision, payoffs of an individual with matching set $s_I$ are given by
\begin{equation}\label{eq:custwelfare}
w + \left( \frac{1-\theta}{\theta} \right) \left( \frac{\sigma}{\sigma-1} \right)^{\frac{\theta}{\theta - 1}} \left( \int\limits_{s_I} v_j dj \right)^{\frac{\theta}{(\theta - 1)(1 - \sigma)}}.
\end{equation}  
This can be written as $w + g_I(s_I)$.

\subsubsection{The platform}

There are a number of platform objectives that can be entertained in this setting. I will assume that the goal of the platform is to maximize a weighted sum of customer surplus and revenue collected from firms. The platform may care about customer surplus because it wants to attract customers. Alternatively, if the platform was also screening on the individual side we have seen how the individual-side objective looks like welfare maximization, with virtual values replacing true values. I could also consider a platform that collects transaction fees from users, and thus wants to maximize a weighted sum of customer transactions and revenues from firms. The results discussed here would not change. 

Payoffs for the firm and individual take the form studied in section \ref{sec:extension}: the endogenous salience of individual $i$, given by $h(i)$, depends on the types of the firms in $s_I(i)$. Using the firm's pricing decision, we can rewrite $P(s_I(i))$ in terms of firm types:
\begin{equation*}
h(s_I(i)) = P(s_I(i))^{\frac{\sigma(1 - \theta) - 1}{1 - \theta}} = \psi \cdot \left( \int\limits_{s_I(i)} v_j dj \right)^{\kappa}
\end{equation*}
where $\kappa =  \frac{\sigma(1-\theta) - 1}{(1-\theta)(1-\sigma)} \in (-1,0)$ and $\psi= \left(\frac{\sigma}{\sigma-1}\right)^{\kappa(1-\sigma)} > 0$.
The platform wants to screen firms based on marginal costs. Firm payoffs are of the form $v_j\cdot g_F(s_F(j))$, where $g_F$ is linear. Assuming that the distribution of $v_j$ is regular, the platform will solve for the optimal matching by replacing $v_j$ with $\varphi_F(v_j) = v_j - \frac{1 - Q_F(v_j)}{q_F(v_j)}$, subject to monotonicity of firm side match quality in type. 

Ignoring individual wealth, which is fixed, the platform payoff can be written as 

\begin{equation}\label{eq:amazonobj}
\int\limits_{\und{v}_I}^{\bar{v}_I} \left[g_I(s_I(v)) + \gamma h(s_I(v)) \cdot \int\limits_{s_I(v)} \varphi_F(v_j) dj\right]dQ_I(v).
\end{equation}
where $g_I(s_I) = \left(\frac{1-\theta}{\theta} \right) P(s_I)^{\frac{\theta}{\theta - 1}}$. The platform chooses $s_I(\cdot)$ to maximize (\ref{eq:amazonobj}), subject to monotonicity of firm match quality. If the platform cannot discriminate between customers, meaning that all customers must receive the same matching set, then monotonicity of the firm side matching implies that individual matching sets must be characterized by a threshold in $v_F$. If the platform is allowed to offer different matching sets to different individuals however, the problem is not necessarily separable across customers due to the firm-side monotonicity constraint.  Given the separable structure of the objective, Lemma \ref{lem:general_threshold} implies that the threshold structure will hold, under a slight strengthening of regularity.  

\begin{lemma}\label{lem:threshold_amazon}
Maximizing the integrand in (\ref{eq:amazonobj}), i.e. solving the problem separately for each individual, yields matchings characterized by a threshold in $\varphi_F(v)/v$; a firm with type $v$ is matched with the individual if and only if $\varphi_F(v)/v$ is high enough. 
\end{lemma}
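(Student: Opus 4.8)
The plan is to recognize that this is essentially a specialization of Lemma~\ref{lem:general_threshold}, so the real work is verifying that (\ref{eq:amazonobj}) genuinely has the separable form of (\ref{eq:general_obj}), and then reading off the threshold.

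First I would fix an individual type $v$ and isolate the integrand
\[
\Pi(s_I(v)) = g_I(s_I(v)) + \gamma\, h(s_I(v)) \cdot \int_{s_I(v)} \varphi_F(v_j)\, dj .
\]
The key observation is that this depends on $s_I(v)$ only through two scalars: the aggregate match quality $V_I(s_I(v)) = \int_{s_I(v)} v_j\, dj$ and the aggregate virtual value $B(s_I(v)) = \int_{s_I(v)} \varphi_F(v_j)\, dj$. Indeed, the earlier price-index computation gives $P(s_I)$ as a function of $\int_{s_I} v_j\,dj$ alone, hence both $g_I(s_I)$ (via (\ref{eq:custwelfare})) and $h(s_I) = \psi\,V_I(s_I)^{\kappa}$ depend on $s_I$ only through $V_I(s_I)$. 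This is exactly the structure of (\ref{eq:general_obj}) with the individual component a function of $V_I$ and with $\beta(\cdot) = \gamma\,\varphi_F(\cdot)$, so Lemma~\ref{lem:general_threshold} applies and yields a threshold in $\beta(v)/v = \gamma\,\varphi_F(v)/v$; since $\gamma > 0$ this is a threshold in $\varphi_F(v)/v$.

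To keep the argument self-contained I would instead re-run the nested optimization directly. Fix a target $V_I(s_I(v)) = \bar V$ and maximize over all sets achieving it. Then $g_I(\bar V)$ is a constant and $h(\bar V) = \psi\,\bar V^{\kappa} > 0$, so maximizing $\Pi$ subject to $V_I = \bar V$ reduces to maximizing the linear functional $B(s_I(v)) = \int_{s_I(v)}\varphi_F(v_j)\,dj$ subject to the linear constraint $\int_{s_I(v)} v_j\,dj = \bar V$. This is a ``bang-for-buck'' problem whose solution includes firm $j$ if and only if the ratio of objective density to constraint density, $\varphi_F(v_j)/v_j$, lies above the cutoff set by the Lagrange multiplier on the budget; optimizing over $\bar V$ pins the cutoff down but preserves the threshold form, giving the claim.

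The only point requiring care, and the main potential obstacle, is the strict positivity of $h$ at the optimum, which is what licenses dropping $\gamma h(\bar V)$ from the inner maximization; this holds here because $\psi > 0$ and the relevant sets are nonempty, so $V_I > 0$ and $h = \psi V_I^{\kappa} > 0$. I would also emphasize that the statement concerns only the pointwise (separable) maximizer: whether this coincides with the platform's actual optimum is the separate question of whether the firm-side monotonicity constraint binds, which is the role of the ``slight strengthening of regularity'' flagged in the surrounding text and is not needed for the threshold conclusion itself.
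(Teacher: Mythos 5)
Your proposal is correct and follows the paper's own route: the paper proves this lemma precisely by observing that the integrand of (\ref{eq:amazonobj}) depends on $s_I(v)$ only through $V_I(s_I(v))$ and $\int_{s_I(v)}\varphi_F(v_j)\,dj$, so that Lemma \ref{lem:general_threshold} applies with $\beta = \gamma\varphi_F$, and your ``self-contained'' re-derivation (fix $\bar V$, maximize the linear functional of virtual values subject to the linear constraint, read off the bang-for-buck threshold) is exactly the proof of that lemma. Your added remarks on $h>0$ and on the separation between pointwise optimality and the firm-side monotonicity constraint are consistent with, and slightly more careful than, the paper's treatment.
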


Under the assumption that $\varphi_F$ is increasing, $\varphi_F(v)/v$ will always be increasing over the set of $v$ such that $\varphi_F(v) < 0$. This means that if we know that all firms with positive virtual values are matched with all individuals then under regularity we can conclude that the individuals' matching sets are characterized by a threshold in $v_F$. In general however we can only conclude this when $\varphi_F(v)/v$ is increasing.

\begin{corollary}
If $\varphi_F(v)/v$ is increasing then maximizing (\ref{eq:amazonobj}) pointwise yeilds matching sets characterized by a threshold in $v_F$.  
\end{corollary}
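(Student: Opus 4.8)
The plan is to obtain the corollary directly from Lemma~\ref{lem:threshold_amazon} by re-expressing its threshold condition in terms of the primitive type $v_F = v$. Lemma~\ref{lem:threshold_amazon} already establishes that, for each individual type $v_I$, pointwise maximization of the integrand in (\ref{eq:amazonobj}) yields a matching set of the form $\{v : \varphi_F(v)/v \geq c\}$ for some cutoff $c = c(v_I)$ (a firm of type $v$ is matched iff $\varphi_F(v)/v$ is high enough). The only work remaining is to show that, under the added hypothesis that $v \mapsto \varphi_F(v)/v$ is increasing, such a set is an upper interval in $v$.

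First I would fix an individual type $v_I$ and let $c$ denote the associated cutoff in $\varphi_F(v)/v$ from Lemma~\ref{lem:threshold_amazon}. Define the threshold
\begin{equation*}
v^*(v_I) = \inf\{v : \varphi_F(v)/v \geq c\}.
\end{equation*}
Because $v \mapsto \varphi_F(v)/v$ is increasing, the condition $\varphi_F(v)/v \geq c$ holds for $v \geq v^*(v_I)$ and fails for $v < v^*(v_I)$; that is, the preimage of the upper set $[c,\infty)$ under an increasing map is itself an upper set in $v$. Hence the pointwise-optimal matching set for $v_I$ equals $\{v : v \geq v^*(v_I)\}$, which is precisely a threshold in $v_F$. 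Running this over all $v_I$ delivers the claimed threshold structure, with threshold function $v^*(\cdot)$.

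There is no substantive obstacle here: the corollary is an immediate monotonicity observation layered on top of Lemma~\ref{lem:threshold_amazon}, and the entire content is that monotonicity of $\varphi_F(v)/v$ converts a threshold in $\varphi_F(v)/v$ into a threshold in $v$. The only point worth flagging is that, without this monotonicity, the set $\{v : \varphi_F(v)/v \geq c\}$ need not be an interval (since $\varphi_F(v)/v$ may be non-monotone even when $\varphi_F$ is increasing), which is exactly why the hypothesis cannot be dispensed with and why the weaker statement in the preceding paragraph only guarantees the threshold on the region where $\varphi_F(v) < 0$.
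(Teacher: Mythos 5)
Your proposal is correct and matches the paper's own (implicit) argument: the paper treats this corollary as an immediate consequence of Lemma~\ref{lem:threshold_amazon}, since monotonicity of $v \mapsto \varphi_F(v)/v$ makes the upper contour set $\{v : \varphi_F(v)/v \geq c\}$ an upper interval in $v$, exactly as you argue. Your closing remark about why the hypothesis cannot be dropped also mirrors the paper's surrounding discussion of the region where $\varphi_F(v) < 0$.
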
 

If $\varphi(v)/v$ is increasing, or if the platform is not able to discriminate between individuals, then the problem is separable accross individuals, and can be solved by maximizing the integrand in (\ref{eq:amazonobj}). Lemma \ref{lem:threshold_amazon} implies that the solution to this problem is unique, so all customers will receive the same matching set, and we can talk about the ``representative customer''. 

Even when the problem is separable, comparative statics in this setting are complicated by the fact that firm types enter into both the endogenous salience of individuals and the profitability of firms. This is not an issue however if we consider changes that affect the virtual values without changing the true firm types, in which case we can apply Lemma \ref{lem:B_compstat_simple}. I will consider two such changes; if the platform receives more precise information about firm types, and if the platform purchases a subset of high type firms. The following Lemma is the immediate implication of \ref{lem:B_compstat_simple} in this setting. 

\begin{lemma}\label{lem:amazoncompstat}
Assume $\varphi_F(v)/v$ is increasing. Consider a set of firms, all of which are matched with the representative customer. If the virtual values of these firms increase then the representative customer will be worse off and receive a smaller matching set.
\end{lemma}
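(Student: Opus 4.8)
The plan is to recognize Lemma \ref{lem:amazoncompstat} as a direct specialization of Lemma \ref{lem:B_compstat_simple} to the monopolistic-competition objective (\ref{eq:amazonobj}), followed by a short translation from ``smaller matching set'' into ``lower customer welfare.'' First I would identify the abstract objects of Section \ref{sec:extension} with those of the present application. The integrand of (\ref{eq:amazonobj}) is exactly of the form $\Pi(s_I\mid v_I,\beta)$ with individual-side payoff $g_I(s_I)$, endogenous salience $h(s_I(v)) = \psi\left(\int_{s_I(v)} v_j\,dj\right)^{\kappa}$, and firm weighting $\beta(v) = \gamma\,\varphi_F(v)$. Since $\kappa \in (-1,0)$, $h$ is strictly decreasing in the match quality $V_I(s_I) = \int_{s_I} v_j\,dj$, so the standing hypothesis of Lemma \ref{lem:B_compstat_simple} that $h(s_I,\cdot)$ is strictly decreasing holds. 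Under the maintained assumption that $\varphi_F(v)/v$ is increasing, $\beta(v)/v = \gamma\,\varphi_F(v)/v$ is increasing, and by the threshold characterization noted in the corollary following Lemma \ref{lem:threshold_amazon} the (unique) representative-customer matching set is a threshold set $\{v_F \geq v^*\}$, where $v^* = v^*(s_I)$ is the optimal threshold under $\beta$.

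Next I would set up the comparison. Let $\tilde{\varphi}_F$ denote the post-change virtual values and $\tilde{\beta} = \gamma\,\tilde{\varphi}_F$. By hypothesis the firms whose virtual values rise all lie in the matched set, i.e. in $\{v_F \geq v^*\}$, while every firm below the threshold is left unchanged. Hence $\tilde{\beta}(v) = \beta(v)$ for all $v \leq v^*$ and $\tilde{\beta}(v) \geq \beta(v)$ for $v > v^*$, which is precisely the configuration in the first clause of Lemma \ref{lem:B_compstat_simple}. Provided the change preserves the monotonicity of $\tilde{\beta}(v)/v$ (so that the new optimum is again a threshold set), Lemma \ref{lem:B_compstat_simple} yields $\tilde{v}^*(s_I) \geq v^*(s_I)$; that is, the optimal threshold weakly rises and the representative customer's matching set weakly shrinks. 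The strict conclusion follows from the second clause of Lemma \ref{lem:B_compstat_simple} whenever the increase is strict on a positive-measure set of types bounded away from $v^*$.

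Finally I would convert the shrinkage of the matching set into the welfare statement. The customer's true payoff is $w + g_I(s_I)$ from (\ref{eq:custwelfare}), and the exponent $\frac{\theta}{(\theta-1)(1-\sigma)}$ appearing in $g_I$ is strictly positive because $\theta \in (0,1)$ and $\sigma > 1$; hence $g_I$ is strictly increasing in $V_I(s_I)$. Since all matched firms have $v_j = c_j^{1-\sigma} > 0$, raising the threshold from $v^*$ to $\tilde{v}^* \geq v^*$ removes firms with strictly positive $v_j$ from the matching set, so $V_I(s_I)$ falls and the payoff $w + g_I(s_I)$ falls. This delivers both conclusions of the lemma.

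The main obstacle I anticipate is not the mechanics, which are an immediate application, but verifying that the increase in virtual values is admissible for Lemma \ref{lem:B_compstat_simple}, namely that $\tilde{\beta}(v)/v$ remains increasing so that the post-change optimum retains the threshold structure on which the comparison relies. If one does not wish to impose this directly, one can instead work with the order on types induced by $\tilde{\beta}(v)/v$, as indicated in the footnote to Lemma \ref{lem:B_compstat_simple}, and restate the comparison with respect to that order; the welfare argument of the previous paragraph is unaffected, since it depends only on $V_I(s_I)$ and the positivity of the $v_j$.
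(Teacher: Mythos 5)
Your proposal is correct and takes essentially the same approach as the paper: the paper obtains this lemma as an ``immediate implication'' of Lemma \ref{lem:B_compstat_simple}, which is exactly the specialization you carry out (identifying $\beta = \gamma\varphi_F$, $h$ strictly decreasing because $\kappa \in (-1,0)$, and the threshold structure from Lemma \ref{lem:threshold_amazon}), together with the routine translation from a higher threshold over strictly positive $v_j$ to lower $V_I(s_I)$ and hence lower customer welfare. Your caveat about $\tilde{\beta}(v)/v$ remaining increasing is a detail the paper leaves implicit, and your induced-order fallback (per the paper's own footnote) resolves it adequately.
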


Suppose the platform receives some information about the vertical types of firms. For some partition $\tau$ of the set $[\und{v}_F, \bar{v}_F]$ of potential firm types the platform learns to which partition cell each firm belongs. The platform therfore only needs to worry about firms deviating to types that are in the same cell; the mechanism design problem on the firm side separates completely accross cells. Thus the virtual values are computed cell-by-cell; for a firm $v$ in cell $[v_{**}, v^{**}]$ the virtual value is $v - \frac{Q_F(v^{**}) - Q_F(v)}{q_F(v)}$. Moreover the monotonicity constraint is need only be satisfied within each cell. Say that a partition cell is \textit{included} if all firms in that cell are matched with the representative customer. The following are corollaries of Lemma \ref{lem:amazoncompstat}. 

\begin{corollary}\label{cor:amazonmerger}
Assume $\varphi_F(v)/v$ is increasing within each cell. If the platform purchases an included partition cell of firms then the customer recieves a smaller matching set and is worse off.
\end{corollary}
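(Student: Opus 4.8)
The plan is to read the corollary as a direct application of Lemma \ref{lem:amazoncompstat}, so the real content is translating ``the platform purchases a cell'' into ``the virtual values of a set of matched firms increase.'' First I would note that acquiring the firms in a cell $C = [v_{**},v^{**}]$ removes the platform's screening motive for them: owning these firms, the platform collects their entire surplus and need not concede information rents, so in the objective (\ref{eq:amazonobj}) the weight it places on an acquired firm of type $v$ switches from its virtual value $\varphi_F(v) = v - \tfrac{Q_F(v^{**}) - Q_F(v)}{q_F(v)}$ to its true value $v$. Because virtual values are computed cell-by-cell, the weights on firms outside $C$ are untouched. In the notation of Section \ref{sec:extension} this is exactly a change of the effective value $\beta$ from $\varphi_F$ to $\tilde\beta$, where $\tilde\beta = \varphi_F$ off $C$ and $\tilde\beta(v) = v$ on $C$ (the common positive constant $\gamma$ is immaterial to the threshold in $\varphi_F(v)/v$).

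Next I would observe that this change is an increase confined to already-matched firms. Since $\tfrac{Q_F(v^{**}) - Q_F(v)}{q_F(v)} \geq 0$ on $C$ we have $\varphi_F(v) \leq v$, so $\tilde\beta \geq \beta$ with equality off $C$ and strict inequality on a positive-measure subset of $C$ (wherever $\varphi_F(v) < v$). Moreover, $C$ being an \emph{included} cell means every firm in $C$ is matched with the representative customer, so the entire increase occurs strictly above the customer's matching threshold $v^*(s_I)$, while $\beta$ is unchanged at and below $v^*(s_I)$. These are precisely the hypotheses under which Lemma \ref{lem:amazoncompstat} (itself the specialization of Lemma \ref{lem:B_compstat_simple}) delivers a strictly higher threshold and hence a strictly smaller matching set; since customer utility in (\ref{eq:custwelfare}) is strictly increasing in the matching set, the representative customer is worse off. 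The within-cell monotonicity of $\varphi_F(v)/v$ assumed in the statement guarantees the threshold structure of Lemma \ref{lem:threshold_amazon}, so the firm-side monotonicity constraint does not bind and pointwise maximization is legitimate.

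The delicate point, where I would spend the most care, is reconciling the corollary's hypothesis (monotonicity of $\varphi_F(v)/v$ only \emph{within} each cell) with the global monotonicity that Lemma \ref{lem:B_compstat_simple} invokes, a gap made sharper by the fact that after the purchase $\tilde\beta(v)/v$ equals the constant $1$ on $C$ but may drop below $1$ entering the next cell, so $\tilde\beta(v)/v$ need not be globally increasing. I would argue that this does not matter because the comparison driving Lemma \ref{lem:B_compstat_simple} is one-sided and local to the threshold: the acquisition raises the weight only on firms lying above $v^*(s_I)$ and leaves the relative ordering of every other firm intact, so the single-crossing comparison of $\Pi(\,\cdot\,|\tilde\beta)$ against $\Pi(\,\cdot\,|\beta)$ at thresholds below $C$ goes through unchanged. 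Confirming that this localization is valid when the matched set is only a union of within-cell upper intervals, rather than a single upper interval in $v$, is the main obstacle and the step I would write out in full.
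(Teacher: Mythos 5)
Your proposal is correct and takes essentially the same route as the paper: the paper's entire proof consists of your first two steps --- purchasing an included cell replaces those firms' virtual values $v - \frac{Q_F(v^{**}) - Q_F(v)}{q_F(v)}$ with their true values $v$, which are weakly higher, leaves the virtual values of firms in other cells unchanged, and then concludes by Lemma \ref{lem:amazoncompstat} (the surrounding text states that the corollary follows from that lemma). The subtlety you isolate in your final paragraph --- that $\varphi_F(v)/v$ is increasing only within cells, so the matched set is a union of within-cell upper intervals while Lemma \ref{lem:B_compstat_simple} is stated for globally increasing $\beta(v)/v$ --- is a genuine one, but the paper's two-sentence proof does not address it either, and your one-sided, threshold-local single-crossing argument is the right way to close it; if anything you have been more careful than the source.
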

\begin{proof}
If the platform purchases an included partition cell it replaces the virtual values of each firm in this cell with the true values, which are higher. There is no change in the virtual values of firms in other cells.
\end{proof}
Assume $\varphi_F(v)/v$ is increasing. Corollary \ref{cor:amazonmerger} and Lemma \ref{lem:threshold_amazon} imply that if the inverse hazard rate is decreasing then the customer receives a smaller matching set. Even without this assumption, the merger makes the customer worse off, as the platform seeks to divert customers to the firms that it owns. Similarly, if the platform gets better information about the values of firms that are already included, it will be able to extract more surplus from these firms, and would thus like to divert customers to them. 

\begin{corollary}\label{cor:amazoninfo}
Assume $\varphi_F(v)/v$ is increasing within each cell. If the platform's partition over a set of included firms becomes finer then the customer recieves a smaller matching set and is worse off. 
\end{corollary}
\begin{proof}
If a given partition cell $[v_{**}, v^{**}]$ is subdivided then any firm in $[v_{**}, v^{**}]$ will be in a new cell with an upper bound that is below $v^{**}$, and strictly so for some firms. Then the virtual values of these firms will be higher.
\end{proof}

The natural counterpoints to $\ref{cor:amazonmergerhetero}$ and $\ref{cor:amazoninfohetero}$ also obtain; if the merger or information aquisition concerns cells of excluded firms, firms with which the customer is not matched, than the customer recieves larger matching sets and is better off

\subsubsection{Customer heterogeneity}

A natural dimension of customer heterogeneity is the degree to which customers value the goods sold on the platform relative to money. With quasi-liner preferences wealth heterogeneity does affect the purchase decisions. In some sense, the trade-off between goods and money is captured by the parameter $\theta$, so we can consider heterogeneity in this dimension. However higher $\theta$ does not exactly capture an intuitive notion of valuing goods relatively more. The quality of a customer's matching set is given by $\int_{s_I} v_j dj$. Customer preferences are supermodular in $\theta$ and match quality if and only if $P \geq 1$. In fact, if $P < 1$ it is easy to show that customer demand may be decreasing in $\theta$.\footnote{There is no technical problem with considering heterogeneity in $\theta$, although it does complicate the analysis when we consider the case in which the type is an individual's private information and there are transfers between the platform and individuals. The issue is one of interpretation.} Thus, I will consider an slight modification of customer preferences which admits a more natural interpretation of the individual type: let preferences for a type $v_I$ customer be represented by
\begin{equation*}
m + \frac{v_I^{1-\theta}}{\theta}\left( \int\limits_{s} q(j)^{\frac{\sigma - 1}{\sigma}} dj \right)^{\frac{\theta \sigma}{\sigma - 1}}
\end{equation*}
where $v_I$ is the individuals type. It is easy to show, following the same steps as before, that demand for good $j$ is given by 
\begin{equation*}
q(j|s,v_I) = v_I \cdot p(j)^{-\sigma}\cdot P(s)^{\frac{\sigma(1-\theta) - 1}{1-\theta}}
\end{equation*}
and customers' indirect utility is given by
\begin{equation*}
v_I \left(\frac{1-\theta}{\theta}\right) P(s)^{\frac{\theta}{\theta-1}}.
\end{equation*}
Define $h(s|v_I) \equiv v_I \cdot P(s)^{\frac{\sigma(1-\theta) - 1}{1-\theta}}$. The remainder of the derivations are as before.

If the platform can observe individuals' types, which may not be an unreasonable assumption for online platforms with access to detailed information about their customers, not much changes in the above analysis. Lemma \ref{lem:general_threshold} applies, so if $\varphi_F(v)/v$ is increasing then matchings for each individual have a threshold structure and the firm-side monotonicity constraint does not bind. Then the conclusions Corollaries \ref{cor:amazonmerger} and \ref{cor:amazoninfo} continue to hold for each individual: each individual is matched with a smaller set of firms if the platform purchases or gains more precise information about an included cell of firms. This makes individuals worse of if the platform is maximizing the weighted sum of customer welfare and firm-side revenue. Of course, if the platform can use transfers to extract surples from individuals then individuals recieve zero surpluss in either case.

We can also consider the situation in which an individual's type is their private information. If the platform can only control individual payoffs through the matching set, for example if monetary transfers are not possible, then again all individuals will receive the same matching set. Lemmas \ref{lem:threshold_amazon} and \ref{lem:amazoncompstat}; and Corollaries \ref{cor:amazonmerger} and \ref{cor:amazoninfo} continue to hold for all individuals. 

The primary case of interest is when individuals are privatly informed about their type, and transfers can be made between individuals and the platform. The necessary and sufficient conditions for customer-side incentive compatibility in this setting are monotonicity of match quality in type and payments that satisfy the envelope condition. Let $g_I$ be defined as above. The platform may still seek to maximize the sum of total individual-side welfare and net revenue, meaning the sum of transfers from firms and customers. Aggregate individual-side welfare can be written as 
\begin{equation*}\label{eq:amazon_welfareobj}
U(\und{v}_I) + \int\limits_{\und{v}_I}^{\bar{v}_I} g_I(s_I(v)) (1 - Q_I(v)) dv. 
\end{equation*} 
The platform does not benefit from transferring money to individuals since its payoffs are linear in money, so it is without loss to set $U(\und{v}_I) = 0$. The platform objective is to maximize
\begin{equation}\label{eq:amazonobjhetero}
\int\limits_{\und{v}_I}^{\bar{v}_I} \left[g_I(s_I(v)) (1 - Q_I(v)) + v \gamma h(s_I(v)) \cdot \int\limits_{s_I(v)} \varphi_F(v_j) dj\right]dQ_I(v)
\end{equation}
subject to monotonicity of firm and individual match qualities. Lemma \ref{lem:threshold_amazon} applies here, so if $\varphi_F(v)/v$ is increasing then maximizing the integrand in (\ref{eq:amazonobjhetero}) yields threshold matching sets for each individual. 

If we ignored the firm side, maximizing individual-side welfare would imply full ironing since $1-Q_I(v)$ is decreasing. All individuals would receive the same quality matching set, and thus the same set given that threshold matchings are optimal. In fact, the firm side reinforces this effect. The platform faces a trade-off between higher customer welfare and greater firm-side revenue. Since firm side revenue scales with $v_I$, while $1-Q_I(v_I)$ is decreasing, the platform will prioritize firm side revenue when dealing with higher type individuals. 

\begin{lemma}\label{lem:amazonpooling}
Assume $\varphi_F(v)/v$ is increasing. If the platform's objective is to maximize (\ref{eq:amazonobjhetero}), the sum of individual-side welfare and revenue, then there is full pooling: the optimal matching is the same for all individuals.
\end{lemma}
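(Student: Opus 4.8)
The plan is to reduce the platform's problem to the choice of a single threshold for each individual, and then show that the relevant pointwise objective is supermodular in (threshold, individual type), so that the unconstrained optimizer runs in the direction \emph{opposite} to the incentive-compatibility constraint, forcing full pooling.

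First I would set up the threshold representation. Since $\varphi_F(v)/v$ is increasing, Lemma \ref{lem:threshold_amazon} lets me describe each individual's matching set by a threshold $t(v_I)\in[\und{v}_F,\bar v_F]$, with the set equal to $\{v_F\ge t(v_I)\}$. Under this representation firm-side monotonicity is automatic (a firm of type $v_F$ is matched with $\{v_I: t(v_I)\le v_F\}$, which grows with $v_F$, and $h\ge 0$), so the only binding constraint is individual monotonicity. Because individual utility $v_I g_I(s_I)$ is supermodular in type and match quality, incentive compatibility is equivalent to match quality being nondecreasing in $v_I$, i.e. $t(\cdot)$ \emph{nonincreasing}. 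Writing $\Phi(t)=\int_t^{\bar v_F}\varphi_F(v_F)\,dQ_F(v_F)$ and abusing notation to treat $g_I,h$ as functions of the threshold, the objective (\ref{eq:amazonobjhetero}) becomes $\int \Pi(t(v),v)\,dQ_I(v)$ with
\[
\Pi(t,v)=g_I(t)\,(1-Q_I(v))+v\,\gamma\,h(t)\,\Phi(t),
\]
where $g_I$ is strictly decreasing in $t$ (raising the threshold shrinks the set) and $h$ is increasing in $t$ (since $\kappa\in(-1,0)$).

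Second I would show the pointwise relaxed maximizer $t^u(v)=\arg\max_t\Pi(t,v)$ is increasing in $v$. The first term is a product of $g_I(t)$, decreasing in $t$, and $1-Q_I(v)$, decreasing in $v$, so it has increasing differences in $(t,v)$; this is exactly the ironing force the text attributes to the welfare weight. For the second term I would use the first-order condition: at an interior optimum $g_I'(t)(1-Q_I(v))+v\gamma\frac{d}{dt}[h(t)\Phi(t)]=0$, and since $g_I'<0$ the firm-side piece must satisfy $\frac{d}{dt}[h(t)\Phi(t)]>0$ at $t^u(v)$. Hence the cross-partial $\partial_{tv}\Pi=-g_I'(t)q_I(v)+\gamma\frac{d}{dt}[h(t)\Phi(t)]$ is strictly positive at every interior critical point, so by the implicit-function/strict-monotone-selection argument (both pieces push the same way) $t^u(\cdot)$ is increasing. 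Intuitively, higher-type individuals weight firm-side revenue (weight $v$, rising) more heavily relative to their own surplus (weight $1-Q_I(v)$, falling), and the platform therefore wants to serve them a \emph{smaller} set, i.e. a higher threshold.

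Finally I would combine the two monotonicities. The relaxed optimizer $t^u(\cdot)$ is increasing while feasibility requires $t(\cdot)$ nonincreasing, so they conflict at every point and the constraint binds throughout. Concretely, for any $v_1<v_2$ the pair problem $\max\{\Pi(t_1,v_1)+\Pi(t_2,v_2):t_1\ge t_2\}$ has unconstrained solution $t^u(v_1)<t^u(v_2)$, which is infeasible, so its constrained solution sets $t_1=t_2$; thus no strict decrease in the optimal $t^*$ can be optimal, $t^*$ is constant, and this is full pooling. The main obstacle is the second step: verifying that the firm-side value $h(t)\Phi(t)$ is locally increasing in the threshold precisely on the range where optima lie, so that the firm-side interaction reinforces rather than offsets the welfare-side supermodularity. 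The key is to sign $\frac{d}{dt}[h\Phi]$ via the first-order condition rather than globally, since $h\Phi$ is in general hump-shaped in $t$; the remaining sign bookkeeping ($g_I'<0$, $1-Q_I$ decreasing, $\kappa\in(-1,0)$) is routine, and the passage from ``relaxed optimizer increasing'' to ``full pooling'' is standard bunching.
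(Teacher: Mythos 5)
Your overall architecture mirrors the paper's proof: relax the individual-side monotonicity constraint, show that pointwise maximization gives higher types \emph{worse} (smaller) sets, and conclude that the constraint must bind everywhere, so the solution is fully ironed, with equal qualities implying equal sets by the threshold structure. Your setup (threshold representation, firm-side monotonicity being automatic, IC equivalent to quality nondecreasing in type) is also correct.

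The gap is in your second step. Positivity of the cross-partial $\partial_{tv}\Pi$ \emph{only at interior critical points} does not imply that the global maximizer $t^u(v)$ is increasing. The implicit-function argument makes each branch of critical points locally increasing, but when $t\mapsto\Pi(t,v)$ has several local maxima --- which you concede is possible, since $h(t)\Phi(t)$ is hump-shaped --- the global argmax can jump \emph{down} as $v$ rises: by the envelope theorem, a downward jump between two competing local maxima $t_a<t_b$ only requires $\int_{t_a}^{t_b}\partial_{tv}\Pi(t,v)\,dt<0$, which is perfectly compatible with $\partial_{tv}\Pi>0$ at both critical points themselves. So ``both pieces push the same way at the FOC'' is not a complete argument; you need a global single-crossing property, not a local one.

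Fortunately that global property holds here, and it is essentially what the paper uses. Write the integrand as $\Pi(t,v)=(1-Q_I(v))\,g_I(t)+v\,\gamma\,h(t)\Phi(t)$ and exploit the sign structure directly: if $t''>t'$ and $\Pi(t'',v')\ge\Pi(t',v')$, then since $g_I(t'')-g_I(t')<0$ and $1-Q_I(v')>0$, necessarily $h(t'')\Phi(t'')-h(t')\Phi(t')>0$; raising $v$ shrinks the positive, decreasing weight $1-Q_I(v)$ on the negative term and raises the weight $v$ on the positive term, so $\Pi(t'',v'')>\Pi(t',v'')$ for every $v''>v'$. That is strict single-crossing differences in $(t,v)$, and Milgrom--Shannon then gives $t^u$ increasing globally, with no differentiability or interiority assumptions. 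Equivalently, do what the paper's proof does: divide the pointwise objective by $v>0$ (which leaves the argmax unchanged) to get $\frac{1-Q_I(v)}{v}g_I(t)+\gamma h(t)\Phi(t)$, whose only $v$-dependence is the positive decreasing coefficient $\frac{1-Q_I(v)}{v}$ multiplying the decreasing function $g_I(t)$; this normalized objective is globally supermodular in $(t,v)$ and Topkis applies. With that repair, your final pairwise-bunching step (strict single crossing forces $t_1=t_2$ for any $v_1<v_2$, hence a constant threshold) goes through as stated.
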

\begin{proof}
Suppose platform maximized its objective separately for each individual, ignoring the monotonicity constraint. I want to show that higher type individuals would get worse matching sets. The objective for an individual with type $v_I$ can be written as
\begin{equation*}
v_I\left(g_I(s_I)\frac{(1 - Q_I(v_I))}{v_I} + \gamma\psi \left(\int\limits_{s_I}  v_j dj\right)^{\kappa} \cdot \int\limits_{s_I} \varphi_F(v_j) dj\right).
 \end{equation*}
This objective satisfies single-crossing in $v_I$ and $- \int_{s_I} v_j dj$, so higher types get worse matching sets. Since point-wise maximization yields a decreasing allocation there is full ironing: the optimal matching subject to monotonicity has the same quality for all individuals. Since $\varphi_F(v)/v$ is increasing this implies that all matching sets will be the same. 
\end{proof}

Lemma \ref{lem:amazonpooling} is consistent with some observed patterns of platform structure. Early stage platforms, which are focused on attracting new users, often offer the same service to all customers. Later, when the user base has been established, do platforms begin do discriminate between customers. As we will see below, such discrimination arises when the platform seeks to translate individual-side users into revenue. 

Given Lemma \ref{lem:amazonpooling} we can again talk about the ``representative customer'' who in this case has a type given by the average type in the population. Lemma \ref{lem:amazoncompstat} and Corollaries \ref{cor:amazonmerger} and \ref{cor:amazoninfo} continue to apply. 

Finally, suppose the platform also wants to maximize firm-side and individual-side revenue. Individual side revenue can be written as 
\begin{equation*}\label{eq:amazon_revobj}
U(\und{v}_I) + \int\limits_{\und{v}_I}^{\bar{v}_I} \varphi_I(v)\cdot g_I(s_I(v)) dv 
\end{equation*}
where $\varphi_I(v) = v - \frac{1-Q_I(v)}{q_I(v)}$. The platform can extract full surpluss from the lowest type, so $U(\und{v}_I) = 0$. The platform objective is to maximize
\begin{equation}\label{eq:amazonrevobjhetero}
\int\limits_{\und{v}_I}^{\bar{v}_I} \left[\varphi_I(v)\cdot g_I(s_I(v)) + v \gamma h(s_I(v)) \cdot \int\limits_{s_I(v)} \varphi_F(v_j) dj\right]dQ_I(v)
\end{equation}
subject to individual and firm side monotonicity. We can rewrite the integrand as 
\begin{equation*}
v \left[\frac{\varphi_I(v)}{v} g_I(s_I(v)) + \gamma h(s_I(v)) \cdot \int\limits_{s_I(v)} \varphi_F(v_j) dj\right].
\end{equation*}
This objective satisfies single-crossing in $\varphi_I(v)/v$ and $\int_{s_I} v_j dj$, which has the following implication.

\begin{lemma}\label{lem:amazon_indivmon}
Point-wise maximization of total revenue, given by (\ref{eq:amazonrevobjhetero}), yields individual match qualities that are increasing in $\varphi_I(v)/v$. 
\end{lemma}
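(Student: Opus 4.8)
The plan is to reduce each individual's pointwise problem to a scalar optimization over the match quality $V_I := \int_{s_I} v_j\,dj$, show that the resulting objective has increasing differences (hence single crossing) in the pair $(V_I,\ \varphi_I(v)/v)$, and then invoke the standard monotone-comparative-statics conclusion that the optimal match quality is nondecreasing in the parameter.

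As already noted in the text, the integrand of (\ref{eq:amazonrevobjhetero}) factors as $v$ times $t\,g_I(s_I) + \gamma\, h(s_I)\int_{s_I}\varphi_F(v_j)\,dj$, where $t := \varphi_I(v)/v$; since $v>0$ this rescaling leaves the maximizer unchanged. The key structural fact, inherited from the Dixit--Stiglitz derivation, is that both $g_I(s_I)$ and $h(s_I)$ depend on $s_I$ only through $V_I$: here $h(s_I)=\psi V_I^{\kappa}$, and $g_I$, viewed as a function of $V_I$, is strictly increasing because its exponent $\theta/[(\theta-1)(1-\sigma)]$ is positive for $\theta\in(0,1)$, $\sigma>1$. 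For fixed $V_I$ the only remaining freedom is the composition of $s_I$, which enters solely through $\int_{s_I}\varphi_F(v_j)\,dj$; since the coefficient $\gamma h(s_I)>0$, the optimal composition maximizes this integral subject to $\int_{s_I} v_j\,dj = V_I$, and by Lemma \ref{lem:general_threshold} it is a threshold in $\varphi_F(v)/v$, determined by $V_I$ alone and independent of $t$. This collapses the problem to maximizing $\tilde\pi(V_I;t) = t\,g_I(V_I) + C(V_I)$ over the scalar $V_I$, where $C(V_I)$ gathers the $t$-independent composition term.

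Next I would verify the single-crossing (indeed increasing-differences) property directly: for $V_I'' > V_I'$,
\begin{equation*}
\tilde\pi(V_I'';t) - \tilde\pi(V_I';t) = t\,[g_I(V_I'') - g_I(V_I')] + [C(V_I'') - C(V_I')],
\end{equation*}
whose coefficient on $t$ equals $g_I(V_I'')-g_I(V_I')>0$. Hence this difference is strictly increasing in $t$ and crosses zero at most once, from below, so $\tilde\pi$ satisfies single crossing in $(V_I,t)$. Milgrom and Shannon's monotonicity theorem then yields that the set of maximizing match qualities is nondecreasing in $t=\varphi_I(v)/v$, which is exactly the claim.

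The step that requires care --- and the main obstacle --- is the reduction from the set-valued choice $s_I$ to the scalar $V_I$. A priori the objective depends on $s_I$ both through $V_I$ (via $g_I$ and $h$) and through the composition-sensitive term $\int_{s_I}\varphi_F\,dj$, and one must rule out that adjusting composition could disturb the single-crossing comparison. The threshold characterization of Lemma \ref{lem:general_threshold} is precisely what resolves this: it pins the optimal composition as a $t$-independent function of $V_I$, so the composition term is absorbed into $C(V_I)$ and contributes nothing to the type-dependence. Once the problem is genuinely one-dimensional, single crossing is immediate, since all dependence on $t$ is concentrated in the single bilinear term $t\,g_I(V_I)$ with $g_I$ increasing; notably, no monotonicity assumption on $h$ is needed for this conclusion.
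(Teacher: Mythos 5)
Your proof is correct and follows essentially the same route as the paper: factor $v$ out of the integrand of (\ref{eq:amazonrevobjhetero}), observe single crossing of the rescaled objective in the pair $\left(\varphi_I(v)/v,\ \int_{s_I} v_j\,dj\right)$, and apply monotone comparative statics. The only difference is that you make explicit the reduction to a one-dimensional choice of match quality --- pinning the optimal composition via the threshold rule of Lemma \ref{lem:general_threshold}, which is independent of $\varphi_I(v)/v$ --- a step the paper leaves implicit in its one-line single-crossing assertion.
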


An immediate implication of Lemma \ref{lem:amazon_indivmon} is that point-wise maximization will yield monotone allocations if $\varphi_I(v)/v$ is increasing. This claim has a partial converse: if $\varphi_I(v)/v$ is not increasing then point-wise maximization will violate monotonicity, unless the violations of increasing $\varphi_I(v)/v$ happen to occur for individuals that are either excluded or fully matched. 

Assuming that both $\varphi_I(v)/v$ and $\varphi_F(v)/v$ are increasing. Then monotonoicity constraints do not bind for either individuals or firms. Thus we can apply Lemma \ref{lem:general_threshold}. and Corollaries \ref{cor:amazonmerger} and \ref{cor:amazoninfo} apply to each individual: each individual is matched with a smaller set of firms if the platform purchases or gains more precise information about an included cell of firms. Using the envelope expression for individual payoffs we have the following welfare conclusions.

\begin{corollary}\label{cor:amazonmergerhetero}
Assume $\varphi_F(v)/v$ is increasing within each cell. If the platform purchases a cell of firms that is included for all customers then all customers are worse off.
\end{corollary}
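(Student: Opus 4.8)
The plan is to read off the change in the matching from the pointwise comparative statics already developed and then convert it into a statement about payoffs through the customer-side envelope condition. I would work under the maintained hypotheses of the preceding discussion, namely that both $\varphi_I(v)/v$ and $\varphi_F(v)/v$ are increasing. Under these, Lemma \ref{lem:amazon_indivmon} guarantees that the individual-side monotonicity constraint does not bind, and the firm-side constraint does not bind either, so the revenue objective in (\ref{eq:amazonrevobjhetero}) may be maximized pointwise in $v_I$. Lemma \ref{lem:general_threshold} and its corollary then give that each individual's optimal matching set is a threshold in $v_F$, and the per-individual problem is exactly of the separable form of Section \ref{sec:extension}, with firm weight $\beta(v) = \gamma\,\varphi_F(v)$.

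First I would record what the acquisition does to the objective. After the platform buys the cell it collects the full profit $v_j\gamma\int h$ from each firm $j$ in the cell, rather than the negotiated transfer built on $j$'s virtual value, so the sole change in (\ref{eq:amazonrevobjhetero}) is that $\varphi_F(v_j)$ is replaced by $v_j$ for $v_j$ in the cell and is left unchanged elsewhere. Since $\varphi_F(v) = v - (1-Q_F(v))/q_F(v) < v$, this is an increase, strict on a positive-measure set, in the firm weight on precisely the firms in the cell. Because the cell is included for every customer it lies above each individual's threshold, so for each fixed $v_I$ this is an increasing change in $\beta$ concentrated on firms that are already matched with that individual.

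Next I would apply the pointwise comparative static individual by individual. For each $v_I$, Lemma \ref{lem:amazoncompstat}---equivalently the above-threshold case of Lemma \ref{lem:B_compstat_simple}, using that $h$ is strictly decreasing---implies that raising the weight on an already-included set of firms raises that individual's threshold, so the match quality $\int_{s_I(v_I)} v_j\,d\lambda(j)$ falls for every $v_I$. To finish, I use that transfers are available and the platform extracts full surplus from the lowest type, so each customer's equilibrium net payoff satisfies the envelope formula $V(v_I) = \int_{\und{v}_I}^{v_I} g_I(s_I(\tilde v))\,d\tilde v$ with $V(\und{v}_I) = 0$. As $g_I$ is strictly increasing in match quality (its exponent $\theta/[(\theta-1)(1-\sigma)]$ is positive), the drop in match quality at every type lowers the integrand pointwise, so $V(v_I)$ falls weakly for all $v_I$ and strictly for $v_I > \und{v}_I$. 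Hence every customer is worse off.

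The step that needs care is the comparative static, because replacing $\varphi_F$ by the true value $v$ on the cell sets the ratio $\beta(v)/v$ equal to its maximal value $1$ there while firms just above the cell retain a ratio strictly below $1$. By Lemma \ref{lem:general_threshold} the post-acquisition matching is then only guaranteed to be a threshold in $\beta(v)/v$, not in $v_F$, and the platform may in fact drop close competitors lying just above the purchased cell in order to shield its newly owned firms. I would check that this does not weaken the conclusion by verifying directly that the new matching set is contained in the old one: every firm in the new set is either a cell firm, which was already matched since the cell was included, or an off-cell firm whose ratio $\varphi_F(v)/v$ is at least the post-acquisition cutoff; because the comparative static raises the cutoff, this ratio exceeds the pre-acquisition cutoff, and monotonicity of $\varphi_F(v)/v$ then places the firm above the original threshold, so it too was matched before. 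Match quality therefore still falls at every type, and the envelope argument is unaffected. Establishing this subset property directly, rather than leaning on the monotone-$\beta(v)/v$ hypothesis of Lemma \ref{lem:B_compstat_simple}, is the one place where the argument must be spelled out.
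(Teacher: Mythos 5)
You follow the same route as the paper: the acquisition replaces $\varphi_F(v_j)$ by $v_j$ on the purchased cell, i.e.\ an increase in the weights of firms with which every customer is already matched; the pointwise comparative static (Lemma \ref{lem:amazoncompstat}, via Lemma \ref{lem:B_compstat_simple}) then lowers every individual's match quality; and the envelope formula $V(v_I)=\int_{\und{v}_I}^{v_I}g_I(s_I(t))\,dt$ with $V(\und{v}_I)=0$ converts this into a welfare loss for all types. That is exactly the paper's (largely implicit) argument for this corollary. Moreover, the obstruction you flag is genuine, and is glossed over by the paper itself: after the acquisition $\tilde{\beta}(v)/v$ sits at its maximal value on the cell and drops discontinuously just above it, so it is not increasing, the pre- and post-acquisition ratio orders differ, and Lemma \ref{lem:B_compstat_simple} as stated does not apply.

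However, your repair of this step is circular. The subset argument rests on the clause ``because the comparative static raises the cutoff,'' but that comparative static is precisely what is in doubt once Lemma \ref{lem:B_compstat_simple} is unavailable; nothing in your argument independently establishes that the post-acquisition ratio cutoff weakly exceeds the pre-acquisition one. The gap is closable---and without any monotonicity of the ratios---by a revealed-preference argument at the level of sets, which also shows that set inclusion is never needed: $g_I$ and $h$ depend on $s$ only through $V_I(s)$, so falling match quality is all the envelope step requires. Fix an individual and write the pointwise objective as $\Pi(s\,|\,\beta)=\Phi(V_I(s))+\mu\,H(V_I(s))\,B_\beta(s)$, where $H(V)=\psi V^{\kappa}$ is strictly positive and strictly decreasing, $\mu>0$, and $B_\beta(s)=\int_s\beta(v)\,dQ_F(v)$. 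Let $s$ and $\tilde{s}$ be the pre- and post-acquisition optima, $C$ the purchased cell, and $\Delta=\gamma\int_C\bigl(v-\varphi_F(v)\bigr)\,dQ_F(v)>0$. Since $C\subseteq s$, we have $B_{\tilde\beta}(s)=B_\beta(s)+\Delta$, while $B_{\tilde\beta}(\tilde{s})=B_\beta(\tilde{s})+\delta$ for some $0\le\delta\le\Delta$. Adding the two optimality inequalities $\Pi(s\,|\,\beta)\ge\Pi(\tilde{s}\,|\,\beta)$ and $\Pi(\tilde{s}\,|\,\tilde\beta)\ge\Pi(s\,|\,\tilde\beta)$ and cancelling common terms yields
\begin{equation*}
H\bigl(V_I(\tilde{s})\bigr)\,\delta\;\ge\;H\bigl(V_I(s)\bigr)\,\Delta ,
\end{equation*}
and since $\delta\le\Delta$ and $H>0$ this forces $H(V_I(\tilde{s}))\ge H(V_I(s))$, hence $V_I(\tilde{s})\le V_I(s)$ by strict monotonicity of $H$. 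This holds type by type (pointwise maximization remains valid after the acquisition, because the individual-side single-crossing in Lemma \ref{lem:amazon_indivmon} does not involve the firm weights), and the envelope step then goes through exactly as you wrote it.
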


\begin{corollary}\label{cor:amazoninfohetero}
Assume $\varphi_F(v)/v$ is increasing within each cell. If the platform's partition over a set of included firms that are included for all customers becomes finer then all customers are worse off. 
\end{corollary}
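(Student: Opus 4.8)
The plan is to reduce the heterogeneous-customer statement to the representative-customer result Corollary~\ref{cor:amazoninfo} applied type-by-type, and then convert the resulting shrinkage of matching sets into a welfare loss through the envelope condition.

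First I would verify that the platform's problem is solved by pointwise maximization of the integrand in (\ref{eq:amazonrevobjhetero}). Under the maintained hypotheses that $\varphi_I(v)/v$ and $\varphi_F(v)/v$ are increasing, Lemma~\ref{lem:amazon_indivmon} gives that pointwise maximization yields individual match qualities that are monotone in type, so the individual-side monotonicity constraint is slack; and Lemma~\ref{lem:general_threshold} gives that each individual's matching set is a threshold in $\varphi_F(v)/v$, so the firm-side monotonicity constraint is slack as well. Hence the objective separates across customers, and the matching of each customer solves the same one-customer program to which Corollary~\ref{cor:amazoninfo} applies. I would note that this remains true after the refinement, since $\varphi_I$ is unchanged and $\varphi_F(v)/v$ is assumed increasing within each cell of the finer partition.

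Second, I would apply the comparative static cell-by-cell. Subdividing an included cell $[v_{**},v^{**}]$ places each affected firm in a new cell with a weakly lower upper bound, strictly so for a positive measure of firms, which raises its virtual value $\varphi_F(v_j) = v_j - (Q_F(v^{**}) - Q_F(v_j))/q_F(v_j)$; the virtual values of firms in other cells are unchanged. This is exactly the increase in the virtual values of an included set of firms contemplated in Lemma~\ref{lem:amazoncompstat} and Corollary~\ref{cor:amazoninfo}, so every customer's matching set shrinks (the threshold in $v_F$ rises).

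Finally, I would translate shrinking matching sets into lower utilities. A type-$v_I$ customer's gross payoff from matching set $s_I$ is $v_I\, g_I(s_I)$, so by the standard envelope argument any incentive-compatible mechanism delivers $U(v_I) = U(\und{v}_I) + \int_{\und{v}_I}^{v_I} g_I(s_I(\tilde v))\, d\tilde v$, and the platform optimally sets $U(\und{v}_I)=0$. Since $g_I$ is increasing in match quality and every customer's matching set shrinks, the integrand falls pointwise while the constant term stays at zero, so $U(v_I)$ falls for every type. The main obstacle is the first step: one must be sure the monotonicity constraints remain slack both before and after the refinement, so that the separable pointwise solution, and hence the type-by-type application of Corollary~\ref{cor:amazoninfo}, is legitimate; given the maintained single-crossing hypotheses this is where the real work sits, while the second and third steps are immediate once separability is in hand.
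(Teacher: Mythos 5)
Your proposal is correct and follows essentially the same route as the paper, which states this corollary as an immediate consequence of the preceding discussion: monotonicity constraints are slack on both sides (so the problem separates and pointwise maximization is valid), the refinement raises the virtual values of the affected included firms so Corollary~\ref{cor:amazoninfo} applies type-by-type to shrink every customer's matching set, and the envelope expression for individual payoffs converts the shrinkage into a welfare loss for all types. Your explicit attention to the slackness of the monotonicity constraints before and after the refinement is a useful elaboration of what the paper leaves implicit, but it is not a different argument.
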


\newpage

\section*{Appendix}
\appendix

\section{Platform-mediated monopolistic competition}\label{sec:pmmc}

The FOC for the customer's problem in (\ref{eq:custobjective}) gives
\begin{equation}\label{eq:custfoc}
p(k) = q(k)^{-\frac{1}{\sigma}} \left( \int_s q(j)^{\frac{\sigma - 1}{\sigma}} \right)^{\frac{\theta \sigma}{\sigma-1} - 1}.
\end{equation}
Multiplying both sides by $q(k)$ and integrating both sides over $s$ yields 
\begin{equation}\label{eq15}
\int_s p(j)q(j)dj = \left( \int_s q(j)^{\frac{\sigma - 1}{\sigma}} \right)^{\frac{\theta \sigma}{\sigma-1}}.
\end{equation}
We can also rearange (\ref{eq:custfoc}) to obtain 
\begin{equation}\label{eq16}
p(k)q(k) = p(k)^{1-\sigma} \left( \int_s q(j)^{\frac{\sigma - 1}{\sigma}} \right)^{\sigma\frac{\theta \sigma}{\sigma-1} - 1}.
\end{equation}
Integrating both sides over $s$ yields
\begin{equation}\label{eq17}
\int_s p(j)q(j)dj = P(s)^{1-\sigma} \left( \int_s q(j)^{\frac{\sigma - 1}{\sigma}} \right)^{\sigma\frac{\theta \sigma}{\sigma-1} - 1}.
\end{equation}
Combining (\ref{eq15}) and (\ref{eq17}) we obtain
\begin{equation*}
\int_s p(j)q(j)dj = P(s)^{\frac{\theta}{\theta-1}}
\end{equation*}
which is what allows us to write demands as a function of $P(s)$.

\newpage

\bibliography{matching}

\end{document}